\definecolor{Plum}{rgb}{.5,0,1}
\newcommand{\ran}{\operatorname{ran}}
\newcommand{\spec}{\operatorname{spec}}
\newcommand{\supp}{\operatorname{supp}}
\newcommand{\spa}{\operatorname{span}}
 \def\1{{\mathchoice {\mathrm{1\mskip-4mu l}} {\mathrm{1\mskip-4mu l}} %
		{\mathrm{1\mskip-4.5mu l}} {\mathrm{1\mskip-5mu l}}}}
\newcommand{\bR}{{\mathbb R}}
\newcommand{\bC}{{\mathbb C}}
\newcommand{\bN}{{\mathbb N}}
\newcommand{\bZ}{{\mathbb Z}}
\newcommand{\cH}{{\mathcal H}}
\newcommand{\braket}[2]{\langle {#1} \ | \ {#2}\rangle}
\newcommand{\ket}[1]{|{#1}\rangle}
\newcommand{\ketbra}[1]{\vert #1\rangle\langle #1\vert}
\newcommand{\vp}{\varphi}
\newcommand{\caG}{\mathcal{G}}
\newcommand{\caC}{\mathcal{C}}
\newcommand{\cG}{\mathcal{G}}
\newcommand{\cB}{\mathcal{B}}
\newcommand{\cD}{\mathcal{D}}
\newcommand{\cE}{\mathcal{E}}
\newcommand{\cS}{\mathcal{S}}
\newcommand{\cR}{\mathcal{R}}
\newcommand{\cT}{\mathcal{T}}
\newcommand{\cK}{\mathcal{K}}
\newcommand{\cV}{\mathcal{V}}
\newcommand{\per}{\mathrm{per}}
\newcommand{\dom}{\mathrm{dom}}
\renewcommand{\subset}{\subseteq}
\newcommand{\be}{\begin{equation}}
\newcommand{\ee}{\end{equation}}
\newtheorem{theorem}{Theorem}
\newtheorem{lem}[theorem]{Lemma}
\newtheorem{cor}[theorem]{Corollary}
\newtheorem{proposition}[theorem]{Proposition}
\numberwithin{equation}{section}
\numberwithin{theorem}{section}
\title{\LARGE A Bulk Spectral Gap in the Presence of Edge States for a Truncated Pseudopotential}
\author{Simone Warzel and Amanda Young}
\date{\small\today}							
\begin{document}
\maketitle

\minisec{Abstract} 
We study the low-energy properties of a truncated Haldane pseudopotential with maximal half filling, which describes a strongly correlated system of spinless bosons in a cylinder geometry. 
 For this Hamiltonian with either open or periodic boundary conditions, we prove a spectral gap above the highly degenerate ground-state space which is uniform in the volume and particle number.
 Our proofs rely on  identifying invariant subspaces to which we apply gap-estimate methods previously developed only for quantum spin Hamiltonians.  
 In the case of open boundary conditions, the lower bound on the spectral gap accurately reflects the presence of edge states, which do not persist into the bulk. Customizing the gap technique to the invariant subspace, we avoid the edge states and establish a more precise estimate on the bulk gap in the case of periodic boundary conditions.   

{\small \tableofcontents}

\section{Introduction}\label{sec:intro}

Laughlin wavefunctions
\[
\Psi_p(z_1,\dots , z_N) \propto \prod_{1\leq j < k \leq N } (z_j-z_k)^{p+2} \, \prod_{j=1}^N \exp\left( -  \frac{|z_j|^2}{2\ell^2}\right) , 
\]
describe the ground-state properties of highly correlated quantum systems such as quantum Hall systems \cite{PG90,Girvin2005} or rapidly rotating Bose gases \cite{Regnault:2004,Cooper:2008}  in a two-dimensional complex geometry $ z_1,\dots z_N \in \mathbb{C} $. In that context, $ \ell > 0 $ is the magnetic length, which arises naturally  in Hall systems through the perpendicular, constant magnetic field. In the case of dilute Bose gases, the rotational velocity  takes the role of  the magnetic field. 
In his seminal paper \cite{PhysRevLett.51.605}, Haldane derived Hamiltonians,
$ W_p =  \sum_{1\leq j < k \leq N } w_p(j,k) $,  
with non-negative pair interactions  $ w_p \geq 0 $,  which 
have the Laughlin wavefunction with parameter $ p\in \mathbb{N}_0 $ 
among its zero-energy eigenstates. 
These so-called pseudopotentials also effectively describe the excitations above the Laughlin state. 
The statistics of the many-particle Hilbert space on which $ W_p $ acts is tied to $ p $: bosonic statistics for $ p $ even and fermionic for $ p $ odd. 
The pair potential $ w_p $ projects onto states in the lowest Landau level (LLL) with relative angular momentum at most $ p  $, and formally results from an expansion of a radially symmetric pair interaction  with respect to relative angular momentum; see~\cite{Trugman:1985lv,Girvin2005,LeePapicThomale:2015}. 
A rigorous justification of the emergence of such pair interactions in a scaling limit can be found in  \cite{Lewin:2009,seiringer:2020}. In the case $ p = 0 $, which models a rapidly rotating dilute Bose gas and is the guiding example in this paper, $ w_0 \propto \delta  $  is just a delta-pair interaction on the LLL.

Haldane pseudopotentials are conjectured to faithfully describe all important features and, in particular, the rigidity of quantum Hall systems  or  rotating Bose gases \cite{Girvin2005,Cooper:2008,Lewin:2009,RSY:2014,seiringer:2020}. Their zero-energy eigenstates have a maximal filling fraction $\nu(p) = (p+2)^{-1} $. Higher fillings $ \nu $ lead to a ground-state energy which increases with $ \nu $. For the bosonic case $ p = 0 $ in the planar geometry, this results in the Yrast line of ground-state energies as a function of the conserved total angular momentum; see \cite{Regnault:2004,Cooper:2008,Lewin:2009}.  
Most importantly, pseudopotentials are conjectured to have a uniform spectral gap above its ground-state space -- a feature, which is responsible for the incompressibilty of the quantum fluid \cite{Lieb:2019vl,Roug19,NWY:2021} as well as the quantization of the Hall conductance~\cite{hastings:2015,Bachmann:2018lb,Bachmann:2021dp}. The gap is expected to be stable with respect to perturbations and the details of the two-dimensional complex geometry (cf.\ \cite{Haldaneetal2016}). 

In this paper, we follow the route taken in~\cite{Rezayi:1994wn,Bergholtz:2005pl,Jansen:2012da,Nakamura:2012bu,NWY:2020,NWY:2021} and simplify matters by changing the geometry and truncating the pseudopotential. The 
cylinder geometry has the advantage that its LLL is spanned by an orthonormal basis $\{\psi_x | x\in\bZ\}$ with a natural one-dimensional lattice structure.
The spanning one-particle orbitals  are given by
\be\label{eq:Landauorbital}
\psi_{x}(\xi,\eta) = \sqrt{\frac{\alpha}{2\pi^{3/2}\ell^2}}
\exp\left(i  x \frac{\alpha \eta}{\ell}\right) \exp\left(-\frac{1}{2}\left[\frac{\xi}{\ell}-x\alpha\right]^2\right)
\ee
with $\xi \in \bR$, $\eta\in[0,2\pi R)$ marking the positions on the cylinder, and $\alpha:=\ell/R$ the ratio of the magnetic length to the cylinder radius $R> 0 $. In terms of the complex coordinates $ z_j := \xi_j + i \alpha \eta_j  $ the Laughlin wavefunctions in this geometry take the form
\[
\Psi_p(z_1,\dots , z_N) \propto \prod_{1\leq j < k \leq N } \left(e^{z_j/R} -e^{z_k/R} \right)^{p+2} \, \prod_{j=1}^N \exp\left( -  \frac{|\xi_j|^2}{2\ell^2}\right) . 
\]
The pair interaction of a pseudopotential, which has this Laughlin state in its zero-energy eigenspace, projects onto orbitals with relative coordinates $ |x_j - x_k | \leq p $. 
Using the annihilation and creation operators $ a_x $ and $ a_x^* $ of the one-particle orbitals \eqref{eq:Landauorbital}, whose statistics is again determined by $ p \in \mathbb{N}_0 $,   the pseudopotential is of the form
\[
W_{p} = \sum_{s\in\bZ/2}B_{p,s}^*B_{p,s}, \quad \text{with}\quad B_{p,s} := \sum_{k}{\vphantom{\sum}}' F_p(2k\alpha)a_{s-k}a_{s+k}\quad \mbox{and}\quad F_p(t) := \sum_{0\leq m \leq p } H_m(t) e^{-t^2/4} .
\]
The primed sum is over $\bZ$ if $s$ is integer and $\bZ+\frac{1}{2}$ otherwise, and the summation in $ F_p $ is over integers $ m $ of the same parity as $ p $. Depending on the parity of $ p $, the real polynomials $ H_m $ result from orthogonalizing the even, respectively odd, monomials in $ \{ 1, t , \dots , t^p \} $ with respect to the natural scalar product induced by the $ k $-sum; for details see~\cite{LeePapicThomale:2015}. In the thin-cylinder limit $ \alpha \to \infty $, $ H_m $ is the $m$-th order Hermite polynomial. We refer to~\cite{Rezayi:1994wn,Lee:2004bs,Jansen:2012da} and in particular \cite{LeePapicThomale:2015} for a more detailed discussion of pseudopotentials in the cyclinder geometry. 

As a case study of a bosonic problem, we focus on the simplest case $ p = 0 $ in the thin-cylinder limit for which we may take $ H_0(t) = 1 $.  Analogous to the $ p = 1 $ fermionic case studied in~\cite{NWY:2020,NWY:2021,WY:2021b}, for  $ \alpha \to \infty $ it is reasonable 
to truncate the summation in $ B_{p,s} $ to its lowest-non-trivial order, that is, $|k|\leq 1 $ for $ p= 0 $. This results in a finite-range model which, after changing the prefactor, coincides with the formal Hamiltonian 
\begin{equation}\label{eq:Haminf}
 \sum_{x} n_x n_{x+1} + \kappa  \sum_{x} q_x^* q_x , \quad \text{with}\quad  q_x = a_x^2 -\lambda a_{x-1} a_{x+1}, \quad n_x = a_x^* a_x ,
\end{equation}
and $ \kappa = e^{\alpha^2/2}/ 4 $ and $ \lambda = - 2 e^{-\alpha^2} $ as the physical parameters. For the purposes of this work, we consider more generally that $\kappa>0$ and $\lambda\in\bC\setminus\{0\}$.  The aim of this paper is to address the rigidity properties in this truncated bosonic model. In particular, we will establish a uniform spectral gap and a bound on the analogue of the Yrast line in this simplified model. \\

\subsection{Main results}\label{sec:main}
For the precise mathematical definition of the Hamiltonian analyzed in our results, we restrict the truncated model~\eqref{eq:Haminf} to Landau orbitals~\eqref{eq:Landauorbital} whose center variable $ x \in \mathbb{Z} $ is in an interval $ \Lambda  = [ a,b] $. 
The bosonic Fock space associated with these Landau orbitals is the closure
\begin{equation}
\cH_\Lambda := \overline{\spa}\left\{ | \mu \rangle \, | \, \mu \in \mathbb{N}_0^\Lambda \right\} 
\end{equation}
of orthonormal vectors associated with occupation numbers $ \mu \in \mathbb{N}_0^\Lambda $ of the single-particle orbitals $ x \in \Lambda $.  
We will refer to $ \mu $ as a particle configuration. 
The Fock space $ \mathcal{H}_\Lambda $ carries the natural scalar product $ \langle \varphi | \psi \rangle := \sum_{\mu \in \mathbb{N}_0^\Lambda } \overline{\varphi(\mu)} \psi(\mu) $ where $\psi = \sum_{\mu\in\bN_0^\Lambda}\psi(\mu)\ket{\mu}$. 
The truncated Hamiltonian with open respectively periodic boundary conditions, then corresponds to the energy form
\[
	\langle \psi | H^{\sharp}_\Lambda \psi \rangle  = \sum_{\mu \in \mathbb{N}_0^\Lambda  }   e_\Lambda^\sharp(\mu)   |\psi(\mu)|^2 + \kappa \, \sum_{\nu \in \mathbb{N}_0^\Lambda} \sum_{x\in \Lambda^\sharp}  \left| (q_x\psi)(\nu) \right|^2 , \quad \sharp \in \{ \textrm{obc} , \textrm{per} \} ,
\]
with $ e_\Lambda^\textrm{obc}(\mu) := \sum_{x=a}^{b-1} \mu_x \mu_{x+1} $, respectively, $ e_\Lambda^\textrm{per}(\mu) := e_\Lambda^\textrm{obc}(\mu)  + \mu_b \mu_{a} $,  representing the electrostatic energy. The summation for the hopping term extends over $ \Lambda^\textrm{obc} = [a+1,b-1] $ for open boundary conditions and $ \Lambda^\textrm{per} = [a,b] $  for periodic boundary conditions, in which case additions are understood modulo the volume $ |\Lambda| = b-a+1 $. 
The hopping operator is defined via
\begin{equation}\label{creation-action} 
	(q_x\psi)(\nu) := \sqrt{(\nu_x+1)(\nu_x+2)} \  \psi\left((\alpha^*_x)^2 \nu\right) - \lambda   \sqrt{(\nu_{x-1}+1)(\nu_{x+1}+1)} \ \psi\left(\alpha^*_{x+1} \alpha^*_{x-1} \nu\right) 
\end{equation}
and expressed in terms of the functions $ \alpha^*_x : \mathbb{N}_0^\Lambda  \to \mathbb{N}_0^\Lambda  $ with $ x \in \Lambda $, which map configurations $ \nu $ to  $ \alpha^*_x \nu  $  by adding a particle at the site $ x \in \Lambda $, i.e.\ $ \nu_x \to \nu_x+1 $ and the particle numbers at all other sites are unchanged. We will also use $ \alpha_x : \mathbb{N}_0^\Lambda \to   \mathbb{N}_0^\Lambda  $ for subtracting a particle from the site $ x \in \Lambda $, that is $ \nu_x \to \nu_x - 1 $, provided that $ \nu_x \geq 1 $.

Through Friedrich's extension theorem, the above non-negative energy forms define  (unbounded) self-adjoint operators $ H_\Lambda^\sharp : \dom(H_\Lambda^\sharp) \to \cH_\Lambda  $.  To ease the notation, we will also frequently drop the superscript $ '\textrm{obc}' $ for open boundary conditions and write $ H_\Lambda \equiv H_\Lambda^\textrm{obc} $. Both Hamiltonians are frustration free as they are sums of non-negative terms with ground-state spaces given by the respective kernels
\[ \caG_\Lambda^\sharp := \ker H_\Lambda^\sharp . \]  
For open boundary conditions
$ \caG_\Lambda \equiv \caG_\Lambda^\textrm{obc} $
can easily be seen to be infinite dimensional as, e.g. $H_\Lambda \ket{n0\ldots0}=0$ for all $n\in\bN_0$. One of the key results, which is contained in Section~\ref{sec:VMD}, is an explicit characterization of  $  \caG_\Lambda $ and of $  \caG_\Lambda^\textrm{per} $, whose dimension will be shown to grow exponentially with the volume.  Building on this, the main aim in this work is to prove that the spectral gaps above the ground states are strictly positive uniformly in the system size $ |\Lambda | $, i.e. for both boundary conditions $ \sharp \in \{ \textrm{obc} , \textrm{per} \} $  there exists $\gamma^\sharp>0$ so that
\begin{equation}
E_1^\sharp(\cH_\Lambda) := \inf_{0\neq\psi\in \dom(H_\Lambda^\sharp) \cap (\cG_{[1,L]}^\sharp)^\perp } \frac{\braket{\psi}{H_\Lambda^\sharp\psi}}{\|\psi\|^2} \geq \gamma^\sharp
\end{equation}
for all intervals $\Lambda$ sufficiently large.  
We will 
estimate this spectral gap in terms of
\begin{equation}
\gamma_\kappa^\textrm{obc}(|\lambda|^2)  := \frac{1}{5}\min\left\{  4\gamma_\kappa^\textrm{per}(|\lambda|^2)  ,  \frac{2\kappa |\lambda|^2}{\kappa+1}  \right\}, \quad  \gamma_\kappa^\textrm{per}(|\lambda|^2)  :=  \frac{1}{4} \min\left\{1,  \frac{2\kappa}{\kappa+1} ,    \frac{2 \kappa}{1+\kappa|\lambda|^2}    \right\} .
\end{equation} 
For open boundary conditions the main result is the following:
%
%
\begin{theorem}[OBC Spectral Gap]\label{thm:main}
 There is a monotone increasing function $ f : [0,\infty) \to [0,\infty) $ such that for all $ 0\neq \lambda \in \mathbb{C} $  with the property $ f(|\lambda|^2)   < 1/3 $ and all $ \kappa \geq 0 $:
\be\label{eq:main}
\inf_{\Lambda \subset \mathbbm{Z}, |\Lambda| \geq 10 } E_1^\textrm{obc}(\cH_\Lambda)   \geq \min\left\{ \gamma_\kappa^\textrm{obc}(|\lambda|^2) , \, \frac{2\kappa}{3} \left( 1 - \sqrt{3 f(|\lambda|^2/2) } \right)^2\right\} .  
\ee
\end{theorem}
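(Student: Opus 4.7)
The plan is to use the explicit description of $\caG_\Lambda$ from the VMD section to decompose $\cH_\Lambda$ into $H_\Lambda$-invariant subspaces indexed by the structure of the configuration near the boundary, and then, within each invariant subspace, apply a gap-estimate technique originally developed for frustration-free quantum spin chains. The characteristic shape $(1-\sqrt{3 f(\cdot)})^2$ of the second entry in the minimum, together with the cut-off $f<1/3$, is exactly the signature of a Knabe- or Nachtergaele-type martingale estimate; the factor $3$ inside the square root reflects that each hopping term $q_x^* q_x$ couples the three consecutive sites $x-1,x,x+1$.

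Concretely, denoting by $G_{[x,y]}$ the orthogonal projector onto $\ker H^{\textrm{obc}}_{[x,y]}$, I would proceed in four steps. (i) Establish a local gap inequality of the form $H^{\textrm{obc}}_{[x-1,x+1]} \geq \gamma^{\textrm{loc}}(\mathbf{1} - G_{[x-1,x+1]})$ by direct spectral analysis of the three-site problem; the quantities $1$, $\frac{2\kappa}{\kappa+1}$, $\frac{2\kappa}{1+\kappa|\lambda|^2}$ appearing in $\gamma_\kappa^{\textrm{per}}$ should emerge as the smallest nonzero eigenvalues in the relevant particle-number sectors. (ii) Quantify the overlap defect $\|G_{[x,x+k]}G_{[x+1,x+k+1]} - G_{[x,x+k+1]}\|^2 \leq f(|\lambda|^2/2)$ between adjacent local ground-state projectors, thereby defining the monotone function $f$. (iii) Combine (i) and (ii) through the martingale-type recursion to produce the bulk estimate $\frac{2\kappa}{3}(1-\sqrt{3 f(|\lambda|^2/2)})^2$ on those invariant subspaces that do not feel the boundary. (iv) Treat the edge-dominated invariant subspaces separately by direct lower bounds; the two-site edge problem diagonalizes explicitly and produces the contribution $\frac{2\kappa|\lambda|^2}{\kappa+1}$, which combined with the bulk quantity $4\gamma_\kappa^{\textrm{per}}$ yields $\gamma_\kappa^{\textrm{obc}}(|\lambda|^2)$ after the weighted combination of invariant-subspace estimates responsible for the prefactor $1/5$.

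The hardest step, I expect, is (ii). Because each site carries an unbounded bosonic degree of freedom and the operator $q_x$ couples three consecutive sites in a non-commuting way, tracking how the VMD description of the ground-state space changes when the interval is extended by one site requires quantitative bounds uniform in the total particle number. The electrostatic repulsion $n_x n_{x+1}$ should be instrumental in taming the otherwise unbounded occupations, but producing an $f$ that stays strictly below $1/3$ on a useful range of $|\lambda|^2$ — rather than only in an asymptotic regime — is the technical crux of the argument, and the place where the bosonic setting genuinely departs from the spin-chain template.
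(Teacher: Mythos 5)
Your overall architecture --- split $\cH_\Lambda$ into an invariant ``good'' subspace handled by a martingale-type recursion and a remainder handled by direct bounds --- matches the paper's strategy, and your reading of the $(1-\sqrt{3f})^2$ term as a martingale-method signature is correct. But two of your concrete steps would not go through as described. First, in step (iv) you propose to obtain $\gamma_\kappa^{\textrm{obc}}$ by explicitly diagonalizing finitely many edge-dominated invariant blocks. The subspace complementary to the tiling space $\caC_\Lambda = \overline{\spa}\{\ket{\sigma_\Lambda(T)}\}$ is not a finite collection of small diagonalizable blocks: it is the infinite-dimensional span of \emph{all} configurations violating the tiling conditions, including bulk violations such as $\mu_x\mu_{x+1}\geq 1$, $\mu_x\geq 3$ at interior sites, or $\mu_x=\mu_{x+3}=2$. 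The paper bounds $E_0(\caC_\Lambda^\perp)$ by an \emph{electrostatic estimate} (Theorem~\ref{thm:perp_bound}): the non-tiling configurations are partitioned into five classes, and for each class a Cauchy--Schwarz inequality on one or two hopping amplitudes $|(q_x\psi)(\nu)|^2$ trades the weight of a bad configuration against the electrostatic energy of an associated configuration in $\cS_\Lambda^{(1)}$. The prefactor $1/5$ is the cost of summing these five separate estimates, not a weighted combination of invariant-subspace gaps, and the constants $1$, $2\kappa/(\kappa+1)$, $2\kappa/(1+\kappa|\lambda|^2)$, $2\kappa|\lambda|^2/(\kappa+1)$ arise there --- not, as your step (i) asserts, as eigenvalues of a three-site local Hamiltonian feeding the martingale recursion. (In the martingale method the relevant local gap is $2\kappa$, computed from $5$- and $6$-site Hamiltonians restricted to bulk tiling spaces.)

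Second, you have not addressed the obstruction that makes the martingale step nontrivial here: the tiling space $\caC_\Lambda$ is itself infinite-dimensional (arbitrarily many particles can sit on a boundary site), so the martingale method's prerequisite --- a strictly positive finite-volume gap --- does not hold a priori. The paper first proves (Theorem~\ref{thm:gap_reduction}) that $E_1(\caC_\Lambda)$ is controlled by the gap on the \emph{finite-dimensional} subspace $\caC_\Lambda^{\infty}$ of $\bZ$-induced tilings, by factoring off the frozen boundary tiles. Your identification of step (ii) as the crux is reasonable, but the uniformity in particle number that worries you is resolved before the overlap estimate is ever made: within $\caC_\Lambda^{\infty}$ every site holds at most two particles, and the overlap bound reduces to explicit computations with squeezed Tao--Thouless states. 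Without the reduction to $\caC_\Lambda^{\infty}$ and without the electrostatic machinery for $\caC_\Lambda^\perp$, the plan does not close.
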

The proof of this theorem is provided in~Subsection~\ref{sec:eeproofmain}. An explicit expression for $ f $, which is monotone increasing, is stated in Theorem~\ref{thm:tiling_gap}, where we also show that $f(|\lambda|^2/2)<1/3$ for $|\lambda|\leq 7.4$.

For $\kappa>0$ fixed and $|\lambda| \ll 1$, which covers the physical parameter regime, the minimum in~\eqref{eq:main} is taken at $ \gamma_\kappa^{\textrm{obc}}(|\lambda|^2)  $ which 
 is of the order $\mathcal{O}(|\lambda|^2)$. The bound is sharp in this regime due to the existence of edge states which are discussed in more detail at the end of Subsection~\ref{sec:Electrostatic}. 
 As an example of such an edge state, consider the two-dimensional space 
\[\spa\{\ket{20100\ldots0}, \, \ket{1200\ldots0}\}\subseteq \cH_\Lambda\]
which is invariant under the action of $H_\Lambda^\textrm{obc}$. Diagonalizing the associated $ 2\times 2 $ matrix yields eigenvalues $E_{\pm} = (\kappa|\lambda|^2+\kappa + 1)(1 \pm \sqrt{1-4\kappa|\lambda|^2/(\kappa|\lambda|^2+\kappa + 1)^2})$, the smallest of which is of order
\be\label{ex:edge_energy}
E_- = \frac{2\kappa|\lambda|^2}{\kappa+1} + \mathcal{O}(|\lambda|^4)
\ee
when $|\lambda|\ll 1$. 
In contrast, the bulk gap is strictly bounded away from zero uniformly for small $ |\lambda | $. This is shown in our second main result.
%
\begin{theorem}[Bulk Spectral Gap]\label{thm:main2}
 There is a monotone increasing function $ f : [0,\infty) \to [0,\infty) $ such that for all $ 0\neq \lambda \in \mathbb{C} $  with the property $ f(|\lambda|^2/2)   < 1/3 $ and all $ \kappa \geq 0 $:
\be\label{eq:main2}
\liminf_{|\Lambda| \to \infty} E_1^\textrm{per}(\cH_\Lambda)   \geq  \min\left\{ \gamma_\kappa^\textrm{per}(|\lambda|^2) , \,  \frac{\kappa}{3 (1+|\lambda|^2) } \left( 1 - \sqrt{3 f(|\lambda|^2/2) } \right)^2 \right\} .  
\ee
\end{theorem}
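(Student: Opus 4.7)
The plan is to follow the invariant-subspace strategy announced in the abstract, in direct parallel to the proof of Theorem~\ref{thm:main}, while modifying the internal gap estimate to exploit translation invariance of the periodic geometry and thereby exclude the edge-state contribution of order $|\lambda|^2$. First, I would decompose $\cH_\Lambda$ into $H_\Lambda^{\textrm{per}}$-invariant subspaces. A natural labeling, guided by the ground-state description of Section~\ref{sec:VMD}, is by the underlying ``tiling skeleton'' that encodes where particles may sit without incurring the electrostatic penalty from $n_x n_{x+1}$. Since $q_x = a_x^2 - \lambda a_{x-1}a_{x+1}$ only rearranges pairs of particles within a fixed skeleton, both the electrostatic term and the hopping term preserve each such sector, and one may analyze $H_\Lambda^{\textrm{per}}$ sector by sector.

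Within each sector, I would establish a finite-volume gap on a window of bounded size. On such a window the problem is a finite-dimensional eigenvalue problem, and the quantity $\gamma_\kappa^{\textrm{per}}(|\lambda|^2)$ is designed to capture the worst of three explicit cases: one dominated purely by the electrostatic cost (yielding the $\tfrac14$ term), one from a two-boson block in which $q_x^*q_x$ acts between configurations such as $\ket{\dots 020 \dots}$ and $\ket{\dots 101\dots}$ (yielding the $\tfrac{2\kappa}{\kappa+1}$ factor), and one from a three-boson block with an extra occupied neighbor (yielding the $\tfrac{2\kappa}{1+\kappa|\lambda|^2}$ factor). Crucially, because of the ring geometry every such window looks ``bulk-like'', so the edge-state block which in the OBC proof produces the $\tfrac{2\kappa|\lambda|^2}{\kappa+1}$ term of $\gamma_\kappa^{\textrm{obc}}$ simply does not appear.

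Next, I would propagate these finite-volume gaps to the thermodynamic limit via a Knabe--Gosset-type local-to-global argument customized to the invariant subspace, using the threshold supplied by the monotone function $f$ of Theorem~\ref{thm:tiling_gap}. The prefactor $\tfrac{1}{1+|\lambda|^2}$ in the second argument of the minimum in \eqref{eq:main2} would enter through the norm of the local $q_x^*q_x$ projectors, which pick up a factor $1+|\lambda|^2$ from the two summands of $q_x$. Taking $|\Lambda|\to\infty$ along intervals large enough to contain several windows and observing that the tiling sectors which are not in the ground-state space are bounded below in this uniform way completes the estimate.

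The main obstacle is carrying out the invariant-subspace decomposition cleanly in the periodic geometry: tilings on a ring carry winding data absent in the open chain, and the local terms near any chosen ``seam'' must be treated so that they do not inadvertently reintroduce an edge-like block of energy $O(|\lambda|^2)$. Designing the windowing and the Knabe-style comparison so that one genuinely harvests the stronger $O(1)$ local gap in every sector, rather than collapsing back onto the OBC bound, is where the bulk of the technical work lies; this is essentially the customization to the invariant subspace highlighted in the abstract.
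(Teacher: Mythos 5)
Your proposal follows essentially the same route as the paper: decompose $\cH_\Lambda$ into the invariant periodic tiling space $\caC_\Lambda^{\textrm{per}}$ and its orthogonal complement, bound the ground-state energy on the complement by local estimates yielding $\gamma_\kappa^{\textrm{per}}(|\lambda|^2)$, and bound the gap inside $\caC_\Lambda^{\textrm{per}}$ by a Knabe-type finite-volume criterion restricted to that subspace and fed by the martingale-method bound of Theorem~\ref{thm:tiling_gap}, with the correct attribution of the $1/(1+|\lambda|^2)$ factor to the norm $\Gamma = 2\kappa(1+|\lambda|^2)$ of the restricted local Hamiltonians and of the disappearance of the $O(|\lambda|^2)$ edge block on the ring. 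The one caveat is that $(\caC_\Lambda^{\textrm{per}})^\perp$ cannot be handled as a finite-dimensional window eigenvalue problem (occupation numbers are unbounded there); the paper instead derives $\gamma_\kappa^{\textrm{per}}$ by configuration-wise Cauchy--Schwarz estimates pairing each non-tiling configuration with a hopping term and an electrostatically charged partner configuration, though the three local mechanisms you identify are exactly the ones that appear.
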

The proof of this theorem is given in Subsection~\ref{Sec:ProofMain2}. As explained next in detail, the key to establishing this result is to 
 explicitly deconstruct the Hilbert space into invariant subspaces to which different gap estimating techniques are applied to circumvent the edge states. In contrast to~\eqref{eq:main} the bound~\eqref{eq:main2}  survives the limit $ \lambda \to 0 $, in which case  the (bulk) spectral gap is explicit  $ \min\{ 1, 2 \kappa \} $.  
 For a more detailed understanding of the bulk excitations, we explore  in Subsection~\ref{sec:Excited_States} other invariant subspaces which we conjecture to support the lowest excitations. Their energies, which are of course consistent with~\eqref{eq:main2}, 
   are determined perturbatively for small $ |\lambda| $ in
Subsection~\ref{sec:Excited_States}. We also include there a brief discussion of the many-body scars in this model.

\subsection{Invariant subspaces and the proof strategy}\label{sec:invariant} 
As its fermionic cousin studied in~\cite{NWY:2021}, the bosonic model at hand is not integrable in the sense that there is no extensive number of independent conserved quantities. For the Haldane pseudopotentials, the conserved quantities are the total particle number $ \sum_{x} n_x $ and the center of mass $ \sum_x x n_x $. Regardless, one can explicitly determine an extensive number of invariant subspaces. This observation is the foundation of the analysis in this paper. A preview of this was provided above where we 
discussed the edge-state example.

We recall from \cite{Birman:1987} that a closed subspace $ \cV \subset \cH $ is invariant, or equivalently reducing in the case of a self-adjoint operator $ A: \dom(A) \to \cH $, if and only if the corresponding orthogonal projection $ P_{\cV} $ commutes with the operator,
\[ P_{\cV} A \  \dom(A) = A P_{\cV}  \ \dom(A) . \]
Since the electrostatic part of the Hamiltonian is diagonal in the configuration basis, we can construct an invariant subspace of $H_\Lambda^\sharp$ by considering the action of the hopping terms on a fixed configuration $\sigma_\Lambda(R)\in\bN_0^\Lambda$. Namely, an invariant subspace results from taking the span of all configuration states  $\mu\in\bN_0^\Lambda$ that have a nonzero inner product with a state of the form $(q_{x_k}^*q_{x_k}\ldots q_{x_1}^*q_{x_1})\ket{\sigma_\Lambda(R)}$ for some $k\geq 1$ and $x_1,\ldots,x_k $.
Similar to the analysis from~\cite{NWY:2021}, a convenient way of labeling the spanning set of configurations is by means of domino tilings of $ \Lambda $, for which the generating configuration $ \sigma_\Lambda(R) $ is characterized by a root tiling $R$. 
The exact definition of these lattice tilings is in Section~\ref{sec:VMD}, where we also define and state the key properties of the associated closed, invariant subspace $ \mathcal{C}_\Lambda \equiv \mathcal{C}_\Lambda^\textrm{obc}$  and $  \mathcal{C}_\Lambda^\textrm{per} $ of all tiling states for open and periodic boundary conditions, respectively; see~\eqref{tiling_spaces} and~\eqref{per_tiling_space}.  Most importantly, $\caC_\Lambda^\sharp $ for both $ \sharp  \in \{ \textrm{obc}, \textrm{per} \}  $ contains the ground-state space $ \mathcal{G}_\Lambda^\sharp $ of the respective Hamiltonian $H_\Lambda^\sharp$.
Since both
$$ \cH_\Lambda = \caC_\Lambda\oplus \caC_\Lambda^\perp , \quad\mbox{and}\quad  \cH_\Lambda = \caC_\Lambda^\textrm{per} \oplus  \left( \caC_\Lambda^\textrm{per}\right) ^\perp $$ 
constitute  orthogonal decompositions of the Hilbert space into closed invariant subspaces of $ H_\Lambda^\textrm{obc} $ and $ H_\Lambda^\textrm{per} $ respectively and  $ \caG_\Lambda^\sharp \subset  \caC_\Lambda^\sharp \subset \dom H_\Lambda^\sharp $, the spectral gap of $H_\Lambda^\sharp $ can be realized as
\begin{equation}\label{eq:twoway}
E_1^\sharp(\cH_\Lambda) = \min\left\{ E_1^\sharp(\caC_\Lambda^\sharp) , E_0^\sharp\left( \big(\caC_\Lambda^\sharp\big)^\perp\right)\right\} , \quad \sharp \in \{ \textrm{obc}, \textrm{per} \} , 
\end{equation}
where $E_1^\sharp(\caC_\Lambda^\sharp)$ is the spectral gap of $ H_\Lambda^\sharp $  restricted to  $ \caC_\Lambda^\sharp$,  and $E_0^\sharp\left( \big( \caC_\Lambda^\sharp\big) ^\perp \right)$ is the ground-state energy of $ H_\Lambda^\sharp $ in the orthogonal subspace $ \big( \caC_\Lambda^\sharp\big) ^\perp  $, i.e.
\begin{equation}\label{eq:gapabbrev}
E_1^\sharp(\caC_\Lambda^\sharp) := \inf_{0\neq\psi\in \caC_\Lambda^\sharp\cap \left(\caG_\Lambda^\sharp\right)^\perp} \frac{\braket{\psi}{H_\Lambda^\sharp\psi}}{\|\psi\|^2}, \qquad 
E_0^\sharp\left(\big(\caC_\Lambda^\sharp\big)^\perp\right) := \inf_{0\neq\eta\in \left(\caC_\Lambda^\sharp\right)^\perp\cap\dom(H_\Lambda^\sharp)} \frac{\braket{\eta}{H_\Lambda^\sharp\eta}}{\|\eta\|^2} . 
\end{equation}
We employ different strategies to lower bound these energies uniformly in $ \Lambda $:
\begin{enumerate}\itemsep0ex
\item The martingale method for a bound on $ E_1^\textrm{obc}(\caC_\Lambda)  $ (cf.~Section~\ref{sec:MM}).
\item A finite-volume condition, which lower bounds the periodic gap $ E_1^\textrm{per}(\caC_\Lambda^\textrm{per}) $ in terms of the spectral gap $ E_1^\textrm{obc}(\caC_\Lambda^\infty)  $ for open boundary conditions restricted to  the subspace of bulk tilings $ \caC_\Lambda^\infty \subset \caC_\Lambda $ (cf.~\eqref{def:IVBVMD}  and~Section~\ref{sec:UBG}). 
\item Electrostatic estimates for bounds on $ E_0^\textrm{obc}(\caC_\Lambda^\perp) $ and $  E_0^\textrm{per}\left(\big(\caC_\Lambda^\textrm{per}\big)^\perp\right) $, which also relate to the Yrast line mentioned in the introduction (cf.~Theorems~\ref{thm:perp_bound} and~\ref{thm:electro2}, and Proposition~\ref{prop:Yrast}). 
\end{enumerate}

Before delving into the details, let us put these strategies in context. 

The martingale method and finite-volume criteria \cite{affleck:1988,knabe:1988,nachtergaele:1996,Gosset:2016hy,lemm:2018,Lemm:2019qh,Kastoryano:2019ja,abdul-rahman:2020} have previously only been developed for and applied to quantum spin or lattice fermion systems, for which the dimension of the finite-volume Hilbert space is finite.  For our lattice bosons, the dimension of $  \cH_\Lambda $ and even of $  \caC_\Lambda $ is infinite. This does not merely require technical amendments of the method, but poses the additional problem that the method's  induction hypothesis, namely the existence of a positive spectral gap for any finite-volume Hamiltonian, does not a priori hold. For the present model, we solve this by showing that $ E_1^\textrm{obc}(\caC_\Lambda)   $ is realized on the finite-dimensional invariant subspace of bulk tilings $ \caC_\Lambda^\infty \subset  \caC_\Lambda $; see~\eqref{def:IVBVMD} and Theorem~\ref{thm:gap_reduction}.  

Similar to, but in fact more severe than its fermionic cousin studied in~\cite{NWY:2021}, the present model has plenty of low-energy edge states for the Hamiltonian $ H_\Lambda $ with open boundary conditions in comparison to the bulk Hamiltonian $ H_\Lambda^\textrm{per} $. In this situation, it is a well recognized hard problem to rigorously establish a bulk gap which does not scale with the energy of edge states. 
This stems from the fact that the known proof strategies, the martingale method and finite-volume criteria, involve finite-volume Hamiltonians with open boundary conditions. 
We solve this problem by restricting these proof techniques a priori to invariant subspaces $\caC_\Lambda^\textrm{per} \subset  \caC_\Lambda^\infty $, which project out the edge states. This novel twist on these methods is provided here for the truncated bosonic Haldane pseudopotential. However, it is equally applicable to the fermionic  $\nu =1/3$ model. By appropriately modifying the approach here, one can prove a bulk gap which is stable for small $ |\lambda| $ for the analogously truncated model thereby improving~\cite[Therorem~1.2]{NWY:2021}. This analysis is carried out in the subsequent work \cite{WY:2021b}.

Hence, despite many similarities to the fermionic case of the truncated $\nu = 1/3$ Haldane pseudopotential studied in \cite{NWY:2021}, beyond modifying and streamlining of the proof of that result, the analysis in the present paper tackles three additional challenges -- the adaptation of the martingale method through a reduction to finite-dimensional subspaces, electrostatic estimates, and a proof of a bulk gap that circumvents edge states by customizing the gap-techniques to appropriate invariant subspaces. 

\section{Tilings and their state spaces}\label{sec:VMD}



The goal of this section is to identify  invariant subspaces $\caC_\Lambda$ and $ \caC_\Lambda^\textrm{per} $  that contain the ground-state space $ \mathcal{G}_\Lambda = \ker H_\Lambda$ and $  \mathcal{G}_\Lambda^\textrm{per} = \ker H_\Lambda^\textrm{per}$ respectively. These subspaces will be constructed as a direct sum of invariant subspaces $\caC_\Lambda^\sharp(R)$ each of which supports a unique ground state and is spanned by a finite subset of the orthonormal occupation basis $ \{ \ket{\mu} : \mu\in\bN_0^\Lambda\} $ of $ \cH_\Lambda $. Each of the chosen occupation states is described by a domino-tiling of the lattice, where the values of each domino indicate the occupation numbers of the covered sites.
 
 To motivate the definition of these tiles, recall that as the Hamiltonian is frustration-free, the ground state space is the set of vectors that simultaneously minimize the energy of all interaction terms: 
 \[
 \ker(H_\Lambda) = \bigcap_{x=a}^{b-1}\ker(n_xn_{x+1}) \cap \bigcap_{x=a+1}^{b-1}\ker(q_x), \qquad  \ker(H_\Lambda^\textrm{per}) = \bigcap_{x=a}^{b}\left( \ker(n_xn_{x+1}) \cap \ker(q_x) \right) .
 \]
 Every particle configuration $ \ket{\mu }$ gives rise to an electrostatic energy and is a ground state of these terms if and only if $\mu_x\mu_{x+1}=0$ for all $x$. The operator $q_x$ acts nontrivially on the sites $ \{ x-1,x,x+1\} $, and satisfies the equation
\begin{equation}\label{dipole_gs}
	q_x\ket{101} = -\frac{\lambda}{\sqrt{2}} q_x\ket{020}.
\end{equation}
Therefore, starting from a configuration of 1's and 0's that is a ground state of the electrostatic terms, a ground state of the hopping terms $\sum_{x} q_x^*q_x$ in either case of boundary conditions can be constructed by summing over the set of all configurations obtained from replacing sequences (101) with (020) and appropriately scaling. A relation similar to \eqref{dipole_gs} holds if either the first or third site in the configuration on the LHS contains more than one particle or  if the middle site on the RHS of \eqref{dipole_gs} contains more than 2 particles. 
However, the action of $ q_x $ will result in a configuration with electrostatic energy and thus such configurations cannot contribute to a ground state. This indicates that the bulk of a ground state can be at most half-filled. 
Of course, there are other configurations that satisfy the electrostatic ground-state condition $  e_\Lambda^\sharp(\mu) = 0 $. For example, any configuration with at most one particle is automatically in the kernel of $q_x$. 

With these observations in mind, we now turn to defining Void-Monomer-Dimer tilings. Since \eqref{dipole_gs} must be satisfied at every site $x$ in either the thermodynamic limit $\Lambda \uparrow \bZ$ or on $ \Lambda $ in the periodic geometry, we first define three \emph{bulk tiles}:
\begin{enumerate}
	\item a void $V=(0)$, which covers a single site and contains no particles,
	\item a monomer $M=(10)$, which covers two sites and contains a single particle on the first site, and
	\item a dimer $D=(0200)$, which covers 4 sites and contains only two particles on the second site.
\end{enumerate}
A \emph{VMD tiling} of $\bZ$ is any tiling of the entire lattice by these three tiles. Similarly, a \emph{periodic VMD tiling} of a finite volume $\Lambda=[a,b]$ with the periodic boundary conditions is any covering of the ring by these tiles.

\subsection{BVMD tilings for open boundary conditions}\label{subsec:tilings}

To describe the ground state for open boundary conditions, we need additional \emph{boundary tiles} to account for possible edge configurations that can support the ground states. One way to obtain such tiles is to consider the set of truncated tiles created from restricting a VMD-tiling of $ \bZ $ to $\Lambda$. We ignore cuttings that produce tiles with no particles, as these can be equivalently constructed using voids. This produces the following set of boundary tiles, which we refer to as \emph{$\bZ$-induced boundary tiles}:
\begin{enumerate}
	\item \emph{On the left boundary:} a truncated dimer $B_2^l=(200)$ which covers three sites and contains two particles on the first site.
	\item \emph{On the right boundary:} 
	\begin{enumerate}
		\item a truncated monomer $M^{(1)}= (1)$, which covers one site and contains one particle,
		\item a truncated dimer $B_2^r=(02)$, which covers two sites and contains two particles on the second site, and
		\item a truncated dimer $D^{(1)}=(020)$, which covers three sites and contains two particles on the second site. 
	\end{enumerate}
\end{enumerate}
To acccount for the full ground state of $ H_\Lambda $ two additional types of boundary tiles are found from the following observation: if $\mu\in \bN_0^\Lambda$ is of the form
\[
\ket{\mu} = \ket{\mu_a00}\otimes\ket{\mu'}\otimes\ket{0\mu_b}
\]
where $\mu'\in\{0,1,2\}^{|\Lambda|-5}$ is a particle configuration obtained from a tiling of $[a+3,b-2]$ by bulk tiles (see \eqref{config}) then
\be\label{artificial_action}
H_{[a,b]}\ket{\mu} = \ket{\mu_a00}\otimes H_{[a+3,b-2]}\ket{\mu'}\otimes\ket{0\mu_b}.
\ee
A similar statement holds if $\ket{\mu} = \ket{\mu_a00}\otimes\ket{\mu^b}$ or $\ket{\mu} = \ket{\mu^a}\otimes\ket{0\mu_b}$ where $\mu^b$, resp.\ $\mu^a$, is the particle configuration associated to a VMD-tiling by bulk and right, resp.\ left, $\bZ$-induced boundary tiles. Thus, we introduce the following \emph{non-$\bZ$-induced boundary tiles} for $n\geq 3$:
\begin{enumerate}
	\item \emph{On the left boundary:} $B_n^l = (n00)$ covering three sites with $n$ particles on the first site.
	\item \emph{On the right boundary:} $B_n^r = (0n)$ covering two sites with $n$ particles on the second site.
\end{enumerate}
The fact that an edge site of a ground state of $ H_\Lambda $ can hold an arbitrary number of particles is a consequence of the lack of hopping at the boundary. As indicated by \eqref{dipole_gs}, the interior sites of a ground state can hold at most two particles, and so this completes the set of boundary tiles.

A \emph{BVMD-tiling} of $\Lambda$ is then defined as any ordered covering of $\Lambda$
\be
T = (T_1, \, T_2, \ldots, \, T_k) \in\cT_\Lambda
\ee
where each $T_i$ is one of the tiles defined above and only $T_1$, resp. $T_k$, can belong to the set of left, resp.\ right, boundary tiles. The number of tiles $k$ in a tiling of $\Lambda$ can vary since tiles have different lengths. The set of all BVMD-tilings of $ \Lambda $ will be abbreviated by $ \cT_\Lambda $.

Motivated by~\eqref{dipole_gs} we define two \emph{substitution rules} that allow us to create a new tiling $T'$ from a fixed tiling $T$ by replacing two neighboring monomers by a dimer or vice-versa. Pictorially, these are represented by
\begin{equation}\label{eq:replacement}
	(0200) \leftrightarrow (10)(10) \quad (020) \leftrightarrow (10)(1),
\end{equation}
in which we exchange a bulk dimer $D$ with two bulk monomers, or a truncated right dimer $D^{(1)}$ with a bulk monomer and truncated monomer. These rules induce a equivalence relation ``$\leftrightarrow$'' on the set of BVMD tilings $\cT_\Lambda$. Namely, we say that two tilings $T,T'\in\cT_\Lambda$ are \emph{connected} and write $T\leftrightarrow T'$ if $T$ becomes $T'$ after a finite number of replacements of the form in \eqref{eq:replacement}. Each equivalence class $\cT_\Lambda(R)$ is uniquely characterized by a \emph{root tiling} $R = (R_1, \ldots, R_k)\in\cT_\Lambda$, which is defined as any tiling such that 
\begin{align*}
	R_1 & \in \{V, \, M\}\cup\{B_n^l \, : \, n \geq 2\} , \\
R_i &\in \{V,\, M\} \quad \mbox{for all}  \,\, 1<i<k ,  \\
R_k &\in \{V,\, M,\, M^{(1)} \}\cup\{B_n^r \, : \, n\geq 2 \},
\end{align*}
see Figure~\ref{fig:equivalence_class}. Said differently, a root tiling is any BVMD-tiling of $\Lambda$ that does not use the dimers $(0200)$ or $(020)$. We denote the set of root-tilings by $\cR_\Lambda$. 
Consequently, we can partition $\cT_\Lambda$ into subsets labeled by the root tilings,
\begin{equation}
	\cT_\Lambda = \biguplus_{R\in\cR_\Lambda} \cT_\Lambda(R), \quad \cT_{\Lambda}(R) = \{T \in \cT_\Lambda \, | \, T\leftrightarrow R\}.
\end{equation}

\begin{figure}
	\centering{\includegraphics[scale=.3]{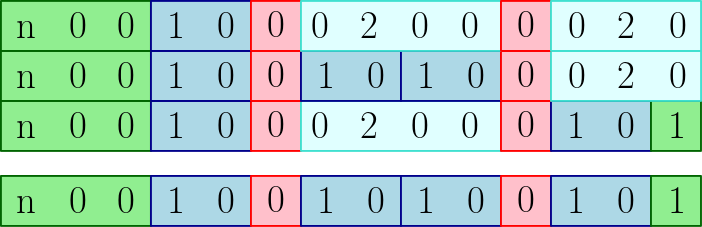}}
	\caption{The equivalence class $\cT_\Lambda(R)$ for the root tiling $R$ in the bottom line with boundary conditions $B_n^l $ with $n\geq2$ and $M^{(1)}$.}\label{fig:equivalence_class}
\end{figure}

The natural embedding 
\begin{equation}\label{config}
	\sigma_{\Lambda}: \cT_\Lambda \to \bN_0^\Lambda, \quad T\mapsto \sigma_\Lambda(T)
\end{equation}
identifies each tiling $T$ with its particle configuration $\sigma_\Lambda(T)$. As we will show next,  the particle configurations in the range are uniquely characterized by the tiling.
\begin{lem}[BVMD Tiling Configurations]\label{lem:tiling_configs}
	Fix an interval $\Lambda = [a,b]$ with $|\Lambda|\geq 4$. A configuration $\mu\in\bN_0^{\Lambda}$ is in $\ran \sigma_\Lambda$ if and only if the following three conditions hold:
	\begin{enumerate}
		\item $\mu_x \geq 3$ implies $x\in\{a,b\}$ 
		\item $\mu_x \geq 1$ implies $\mu_{x\pm 1} = 0$
		\item $\mu_x{\geq 2}$ implies $\mu_{x\pm 2} = 0$ and $\mu_{x\pm 3}\leq 1$,
	\end{enumerate}
	and we consider the conditions to be vacuously true for any site $x\pm k \in \bZ\setminus \Lambda$. Moreover, the tiling  $T\in\cT_\Lambda$ for which $ \mu = \sigma_\Lambda(T) $ is unique, i.e., $  \sigma_\Lambda $ is injective.
\end{lem}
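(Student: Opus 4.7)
The ``only if'' direction reduces to a tile-by-tile verification. Each of the nine tile types introduced above satisfies (1)--(3) locally on its own support by direct inspection. Since every bulk tile ends in $0$, and only the left-boundary tile $B_n^l$ (resp.\ right-boundary tile $B_n^r$) has a first (resp.\ last) entry $\geq 2$, no inter-tile junction of a tiling $T \in \cT_\Lambda$ can produce two adjacent nonzero occupations nor an occupation $\geq 3$ at an interior site, securing (1) and (2). For the bound $\mu_{x\pm 3}\leq 1$ in (3) one uses that the tile following immediately after a $D$ or a $B_n^l$ must have first entry in $\{0,1\}$, since $B_n^l$ is the only tile starting with an entry $\geq 2$ and is by definition restricted to position $a$.

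For the ``if'' direction together with injectivity I plan to reconstruct $T$ from $\mu$ by a left-to-right scan in which each tile is uniquely forced. The first tile $T_1$ at position $a$ is determined by $\mu_a$, using $|\Lambda|\geq 4$ to exclude the right-boundary tiles as covering $\Lambda$ on their own: $\mu_a\geq 2$ gives $T_1 = B_{\mu_a}^l$; $\mu_a = 1$ gives $T_1 = M$; and $\mu_a = 0$ gives $T_1 = D$ if $\mu_{a+1} = 2$ (a hypothetical $V$ at $a$ would force the next tile to start with $2$, which no bulk or right-boundary tile does) and $T_1 = V$ if $\mu_{a+1} \leq 1$. At any subsequent scan position $x > a$ the same trichotomy applies, with the only adjustment that $M^{(1)}$ replaces $M$ when $x = b$, and that the bulk tile $D$ is replaced by the right-boundary tiles $D^{(1)}$, $B_2^r$, or $B_{\mu_{x+1}}^r$ when $x+3 > b$ and $\mu_{x+1} \geq 2$.

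The crucial step is ruling out that the scan ever reaches an interior site $x > a$ with $\mu_x \geq 2$, as this would leave no valid tile available. I would handle this by casework on the tile just placed, which ends at $x-1$ and can only be $V, M, D$, or $B_n^l$ (right-boundary tiles would have terminated the scan): after $V$ the choice rule already guaranteed $\mu_x \leq 1$; after $M = (10)$ placed on $(x-2,x-1)$, condition (3) at $x$ would force $\mu_{x-2} = 0$, contradicting $\mu_{x-2} = 1$; and after $D$ (with $\mu_{x-3} = 2$) or $B_n^l$ (with $x = a+3$ and $\mu_a \geq 2$), condition (3) yields $\mu_x \leq 1$ directly. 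Since every algorithmic choice is forced by $\mu$ and the distance to $b$, the tiling it produces is automatically unique, giving both existence of the preimage and injectivity of $\sigma_\Lambda$. The main obstacle I anticipate is the boundary bookkeeping near $b$: once the scan hits $\mu_x = 0, \, \mu_{x+1} \geq 2$ within three sites of $b$, one must verify that condition (1) together with the remaining length singles out exactly one of $D, D^{(1)}, B_2^r, B_n^r$ so that the scan terminates precisely at $b$.
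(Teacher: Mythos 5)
Your proposal is correct and amounts to the same reconstruction argument as the paper: the ``only if'' direction is a direct tile-by-tile and junction check, and the ``if'' direction together with injectivity follows by showing that Conditions 1--3 force a unique tile at each step. The only difference is bookkeeping order --- you scan left to right, whereas the paper places tiles by type (boundary tiles, then dimers, then monomers, then voids) and checks non-overlap via Conditions 2--3 --- but the combinatorial facts invoked are identical, and your boundary cases near $b$ do close up as anticipated.
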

%
%
\begin{proof}
	Given the set of tiles defined above and the boundary constraints, it is clear that any $\mu\in \ran(\sigma_\Lambda) $ satisfies Conditions 1-3.
	Conversely, suppose that $\mu \in \bN_0^\Lambda$ satisfies Conditions 1-3. We first determine the unique choice of boundary tiles (if any), and then place each type of bulk tile systematically from longest to shortest.
	
	If either $\mu_a\geq 2$ or $\mu_b\geq 2$, then combining Conditions 2-3, it is clear there are enough empty sites next to the boundary site to lay the corresponding tile $B_n^\#$, with $\#\in\{l,r\}$. Similarly, if $\mu_{b-1} = 2$, there are enough empty sites to lie $D^{(1)}$ and one can always place $M^{(1)}$ if $\mu_b = 1$. Moreover, such configurations cannot be covered by bulk tiles, and so this uniquely places the boundary tiles.
	
	For any remaining uncovered site $x$ for which $\mu_x = 2$, a bulk dimer must and can be placed to cover $x$ as $\mu_{x\pm1}=\mu_{x+2}=0$ by Conditions 2-3. These tiles do not overlap with one another or with any boundary tile as for any other $y\in\Lambda$ with $\mu_y \geq 2$, Conditions 2-3 imply $|x-y|\geq 3$ which is the minimum distance one needs to place two successive dimer tiles, or a dimer neighboring a boundary tile with two or more particles.
	
	Similarly, for any remaining uncovered site with $\mu_x = 1$, we must and can place a bulk monomer as Condition 2 guarantees that $\mu_{x+1}=0$. Once again, this tile does not overlap with any other previously placed tiles since Condition 2 guarantees this does not overlap with a neighboring monomer, and Condition 3 guarantees this does not overlap with any neighboring tile with two or more particles.
	
	All remaining uncovered sites hold no particles and thus must be tiled with voids. This completes the unique tiling $T$ that produces the configuration, i.e. $\mu = \sigma_\Lambda(T)$ as desired. 
\end{proof}

With respect to each root-tiling $R\in\cR_\Lambda$, we define \emph{the BVMD-subspace associated to $R$} and \emph{the space of all BVMD-tilings} by
\be\label{tiling_spaces}
\caC_\Lambda(R) = \spa \left\{ \ket{\sigma_\Lambda(T)}\, | \, T\in \cT_\Lambda(R)\right\}, \quad \caC_\Lambda = \overline{\spa} \{ \ket{\sigma_\Lambda(T)} \, | \, T \in \cT_\Lambda\},
\ee
respectively. Each $\caC_\Lambda(R)$ is finite-dimensional as there are only finitely many tilings $T$ connected to a root $R$. However, $\dim(\caC_\Lambda) = \infty$ as there are an infinite number of non-$\bZ$-induced boundary tiles. The following lemma summarizes some of the most important properties of these subspaces.
\begin{lem}[BVMD Tiling Space Properties] \label{lem:tiling_properties} Let $\Lambda=[a,b]$ be any interval with $|\Lambda|\geq 4$. 
	\begin{enumerate}
		\item $\caC_\Lambda(R)\perp \caC_\Lambda(R')$ for any pair of root-tilings $R\neq R'$. As a consequence,
		$
		\caC_\Lambda = \bigoplus_{R\in\cR_\Lambda}\caC_\Lambda(R).
		$
		\item $\caC_\Lambda(R) $ is an invariant subspace of $H_\Lambda$ for each $R\in\cR_\Lambda$ and the restriction of $H_\Lambda $ to $ {\caC_\Lambda} \subset \dom(H_\Lambda) $ is a bounded operator. In particular, $\caC_\Lambda$ is also invariant. 
	\end{enumerate}
\end{lem}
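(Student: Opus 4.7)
The plan is to handle the orthogonality/decomposition of Part~1 first, and then the invariance and boundedness of Part~2.

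For Part~1 the main ingredients are already in place. Lemma~\ref{lem:tiling_configs} shows that $\sigma_\Lambda$ is injective on $\cT_\Lambda$, and by construction $\{\cT_\Lambda(R)\}_{R\in\cR_\Lambda}$ is a partition of $\cT_\Lambda$ into substitution-rule equivalence classes. Consequently, for distinct root tilings $R\neq R'$, the spanning families $\{\ket{\sigma_\Lambda(T)}\}_{T\in\cT_\Lambda(R)}$ and $\{\ket{\sigma_\Lambda(T')}\}_{T'\in\cT_\Lambda(R')}$ are disjoint subsets of the orthonormal occupation basis of $\cH_\Lambda$, which immediately yields $\caC_\Lambda(R)\perp\caC_\Lambda(R')$. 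The decomposition $\caC_\Lambda=\bigoplus_R\caC_\Lambda(R)$ then arises by taking the Hilbert-space closure of the algebraic direct sum.

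For Part~2 I would verify that each local term of $H_\Lambda$ preserves every $\caC_\Lambda(R)$. The electrostatic terms $n_xn_{x+1}$ annihilate $\caC_\Lambda$ outright, since Condition~2 of Lemma~\ref{lem:tiling_configs} forces $\mu_x\mu_{x+1}=0$ for every $\mu=\sigma_\Lambda(T)$. For the hopping terms $q_x^*q_x$ with $x\in\Lambda^{\textrm{obc}}$, I would perform a short case analysis on the local pattern $(\mu_{x-1},\mu_x,\mu_{x+1})$ of a BVMD configuration $\mu=\sigma_\Lambda(T)$. The summand $a_x^2$ contributes only when $\mu_x\geq 2$; since $x$ is interior, Condition~1 restricts this to $\mu_x=2$, and Condition~2 then forces $\mu_{x\pm 1}=0$, giving the pattern $(0,2,0)$. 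The summand $a_{x-1}a_{x+1}$ contributes only when $\mu_{x-1},\mu_{x+1}\geq 1$; combining Conditions~2 and~3 leaves only $(1,0,1)$. In either admissible pattern, $x$ sits at the interior position of a dimer $D$ or $D^{(1)}$ of $T$, or between two monomer-type tiles; the identity~\eqref{dipole_gs} then shows that $q_x^*q_x\ket{\sigma_\Lambda(T)}$ lies in $\spa\{\ket{\sigma_\Lambda(T)},\ket{\sigma_\Lambda(T')}\}$, where $T'$ is obtained from $T$ by a single application of the substitution rule~\eqref{eq:replacement}. By definition of the equivalence relation $T'\in\cT_\Lambda(R)$, establishing $q_x^*q_x\,\caC_\Lambda(R)\subset\caC_\Lambda(R)$.

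The boundedness of $H_\Lambda$ on $\caC_\Lambda$ then follows from this analysis. The previous step reduces $H_\Lambda$ on $\caC_\Lambda$ to $\kappa\sum_{x\in\Lambda^{\textrm{obc}}}q_x^*q_x$, and in the tiling basis each $q_x^*q_x$ is block-diagonal, with zero blocks on configurations whose local pattern at $(x-1,x,x+1)$ is neither $(0,2,0)$ nor $(1,0,1)$, and $2\times 2$ blocks on each pair $(T,T')$ related by one substitution at $x$. A direct computation in $\spa\{\ket{(0,2,0)},\ket{(1,0,1)}\}$ produces the matrix $\left(\begin{smallmatrix}2 & -\sqrt{2}\,\lambda\\ -\sqrt{2}\,\bar\lambda & |\lambda|^2\end{smallmatrix}\right)$, of operator norm $2+|\lambda|^2$. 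Crucially, this bound is independent of the (possibly large) occupation at the endpoints $a,b$, because $q_{a+1}$ and $q_{b-1}$ annihilate any BVMD configuration whose left, resp.\ right, boundary tile is $B_n^{l/r}$ with $n\geq 2$, on account of the adjacent zeros forced by these tiles. Summing over the $|\Lambda|-2$ hopping sites gives a bounded operator on $\caC_\Lambda$; in particular $\caC_\Lambda\subset\dom(H_\Lambda)$, and invariance of the full $\caC_\Lambda$ follows from invariance of each $\caC_\Lambda(R)$ together with orthogonality and continuity.

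The main delicate step is the verification in Part~2 that the partner configuration produced by $q_x^*q_x$ is indeed of the form $\sigma_\Lambda(T')$ for a tiling $T'$ in the same equivalence class as $T$. This closure is guaranteed by the padding constraints $\mu_{x\pm 2}=0$ and $\mu_{x\pm 3}\leq 1$ imposed by Condition~3 whenever $\mu_x=2$: they ensure that replacing the local dimer by two monomers (or $D^{(1)}$ by $M\,M^{(1)}$) preserves Conditions~1--3 of Lemma~\ref{lem:tiling_configs} and thus yields precisely one of the moves in~\eqref{eq:replacement}.
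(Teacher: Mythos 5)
Your proposal is correct and follows essentially the same route as the paper: Part~1 via the injectivity of $\sigma_\Lambda$ and the disjointness of the equivalence classes $\cT_\Lambda(R)$, and Part~2 via the local case analysis reducing the action of $q_x^*q_x$ to the patterns $(0,2,0)$ and $(1,0,1)$, the resulting $2\times 2$ block of norm $2+|\lambda|^2$, and summation over the $|\Lambda|-2$ hopping sites. Your explicit remarks on why large boundary occupations cannot enter the bound and why the substitution partner remains a valid tiling in the same class are fine elaborations of points the paper handles implicitly.
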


\begin{proof}
	
	1. The set of configuration states constitutes an orthonormal basis for $\cH_\Lambda$. Since $\cT_\Lambda(R) \cap \cT_\Lambda(R') = \emptyset$ for any two distinct roots $R\neq R'$, the  first result is an immediate consequence of the injectivity of $\sigma_\Lambda$ and the definition of $\caC_\Lambda(R)$, see \eqref{tiling_spaces}. The decomposition of $\caC_\Lambda$ is an immediate consequence of \eqref{tiling_spaces} since each $\caC_\Lambda(R)$ is finite-dimensional and the direct sum of countably many orthogonal closed subspaces is closed.
	
	2. It is trivial that $\caC_\Lambda(R)\subseteq\dom(H_\Lambda)$ as it is a span of a finite set of vectors in $\dom(H_\Lambda)$. We first show that $q_x^*q_x \ket{\sigma_\Lambda(T)} \in \caC_\Lambda(R)$ for each $T\in\cT_\Lambda(R)$ and $x\in[a+1,b-1]$. By direct computation, one finds
	\be\label{zero_configs}
	q_x^*q_x \ket{\sigma_\Lambda(T)} = 0 \in \caC_\Lambda(R)
	\ee
	if on the interval $[x-1,x+1] \subset \Lambda$, the configuration $\sigma_\Lambda(T)$ either has one particle on site $x$ and no particles at $x\pm1$, or $\sigma_\Lambda(T)$ has a pair of neighboring sites with no particles. One is thus left to consider tilings for which the particle configuration on $[x-1,x+1]$ is $ (101) $ or $ (020) $, that is, tilings $T^{M}\in \cT_\Lambda(R)$ with two consecutive monomers with particles at $x\pm1$, or tilings $T^D\in \cT_\Lambda(R)$ with a dimer ($D$ or $D^{(1)}$) with two particles at $x$. Note that these two sets are in one-to-one correspondence via a single replacement connecting $T^M\leftrightarrow T^D$.
	
Fixing a pair $T^M \leftrightarrow T^D$ as above, a direct computation yields
	\begin{align}
		q_x^*q_x \ket{\sigma_\Lambda(T^M)} & = |\lambda|^2 \ket{\sigma_{\Lambda}(T^M)} - \lambda\sqrt{2} \ket{\sigma_\Lambda(T^D)} \label{q_action_1}\\
		q_x^*q_x \ket{\sigma_\Lambda(T^D)} & = - \overline{\lambda}\sqrt{2} \ket{\sigma_\Lambda(T^M)}+ 2\ket{\sigma_{\Lambda}(T^D)} .\label{q_action_2}
	\end{align}
	Thus, the action of $q_x^*q_x$ on either kind of configuration produces a vector in $\caC_\Lambda(R)$ and $H_\Lambda \caC_\Lambda(R) \subseteq \caC_\Lambda(R)$ as claimed.
	From \eqref{zero_configs}-\eqref{q_action_2}, it  also follows that
	\[
	\|q_x^*q_x\|_{\caC_\Lambda(R)} := \sup_{0\neq \psi\in \caC_\Lambda(R)}\frac{\|q_x^*q_x\psi\|}{\|\psi\|} \leq |\lambda|^2+2
	\]
	where $|\lambda|^2+2$ is the largest eigenvalue of the $ 2\times 2 $ matrix
	\be\label{MM_action}
	\begin{bmatrix}
		|\lambda|^2 & -\overline{\lambda}\sqrt{2}\\
		-\lambda\sqrt{2} & 2
	\end{bmatrix}.
	\ee
	Therefore, $\|H_\Lambda\|_{\caC_\Lambda(R)} \leq (|\Lambda|-2)( |\lambda|^2+2)$. Since $R$ is arbitrary, the same bound holds for $H_\Lambda \restriction_{\caC_\Lambda}$ by part 1. Thus, $\caC_\Lambda \subseteq \mathrm{dom}(H_\Lambda)$ and the claimed invariance and boundedness holds. 
\end{proof}

\subsection{The ground state space for open boundary conditions}

We now turn to determining the ground states of $H_\Lambda$ on any interval $\Lambda$ with $|\Lambda|\geq 5$. We begin by proving that the ground-state space is contained in $\caC_\Lambda$, and then use this in combination with Lemma~\ref{lem:tiling_properties} to establish an orthogonal basis for the ground state space in Theorem~\ref{thm:gss}.

\begin{lem}[Support of Ground States]\label{lem:support}
For any interval $\Lambda =[a,b]$ with $|\Lambda|\geq 5$, the ground state space of $H_\Lambda$ is supported on BVMD-tilings, that is, $\cG_{\Lambda} \subseteq \caC_\Lambda$.
\end{lem}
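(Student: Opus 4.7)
The plan is to show that every $\psi \in \cG_\Lambda$ is supported on configurations $\mu$ satisfying the three conditions of Lemma~\ref{lem:tiling_configs}; by the injectivity of $\sigma_\Lambda$ this will identify $\psi$ as an element of $\caC_\Lambda$. Frustration-freeness gives that $\psi\in\cG_\Lambda$ is equivalent to $n_x n_{x+1}\psi = 0$ for all $x\in[a,b-1]$ together with $q_x\psi = 0$ for all interior $x\in[a+1,b-1]$. Expanding $\psi = \sum_\mu \psi(\mu)\ket{\mu}$, the first family immediately yields Condition~2 ($\psi(\mu)=0$ unless $\mu_x\mu_{x+1}=0$ for all $x$). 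Each $q_x$-equation yields the identity
\[
\sqrt{(\nu_x+1)(\nu_x+2)}\,\psi\bigl((\alpha_x^*)^2\nu\bigr) \;=\; \lambda\,\sqrt{(\nu_{x-1}+1)(\nu_{x+1}+1)}\,\psi\bigl(\alpha_{x+1}^*\alpha_{x-1}^*\nu\bigr)
\]
for every $\nu \in \bN_0^\Lambda$. The strategy is to pick $\nu$ and the interior site $x$ so that one side of this identity carries $\psi(\mu)$ for the configuration $\mu$ we wish to kill, while the other side evaluates $\psi$ on a configuration already known to vanish, either by Condition~2 or by a previously established condition.

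For Condition~1 (no interior site carries $\geq 3$ particles): suppose $\mu_x\geq 3$ for some interior $x$, and note $\mu_{x\pm 1}=0$ by Condition~2. Taking $\nu$ to be $\mu$ with two particles removed from $x$, the identity gives $\sqrt{\mu_x(\mu_x-1)}\,\psi(\mu) = \lambda\,\psi(\mu')$ where $\mu'$ has $\mu_x - 2\geq 1$ particles at $x$ and a single particle at each of $x\pm 1$, a clear violation of Condition~2, so $\psi(\mu)=0$. For Condition~3a ($\mu_x\geq 2\Rightarrow \mu_{x\pm 2}=0$): suppose $\mu_x\geq 2$ and $\mu_{x+2}\geq 1$. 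Using the identity at the interior site $x+1$ (valid whenever $x\in[a,b-2]$) with $\nu$ obtained from $\mu$ by removing one particle each at $x$ and at $x+2$, the right-hand side becomes a nonzero multiple of $\psi(\mu)$ while the left-hand side is a multiple of $\psi$ at a configuration with $\mu_x - 1\geq 1$ at $x$ adjacent to $2$ at $x+1$, again violating Condition~2. The case $\mu_{x-2}\geq 1$ is handled symmetrically via $q_{x-1}$.

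Condition~3b ($\mu_x\geq 2\Rightarrow \mu_{x\pm 3}\leq 1$) requires a genuine bootstrap: the configuration produced by the relevant $q$-equation satisfies Condition~2 but violates Condition~3a, which by the previous step already forces $\psi$ to vanish there. Suppose $\mu_x\geq 2$ and $\mu_{x+3}\geq 2$. Conditions~2 and~3a already imply that $\mu$ vanishes at each of the sites $x-1, x+1, x+2, x+4$ lying in $\Lambda$, and the hypothesis $|\Lambda|\geq 5$ rules out the degenerate possibility $x=a$ together with $x+3=b$, so at least one of $x$, $x+3$ is interior. Applying the identity at, say, the interior site $x$ with $\nu$ obtained by removing two particles at $x$ yields $\sqrt{\mu_x(\mu_x-1)}\,\psi(\mu) = \lambda\,\psi(\mu')$, where $\mu'$ has $\mu_x-2$ particles at $x$, a single particle at each of $x\pm 1$, and still $\geq 2$ particles at $x+3$. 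This $\mu'$ violates the already-proven Condition~3a at the site $x+3$ because its neighbour two sites away at $x+1$ is occupied, so $\psi(\mu')=0$ and hence $\psi(\mu)=0$. The case $\mu_{x-3}\geq 2$ is symmetric. The main obstacle throughout is the boundary bookkeeping in Condition~3b needed to ensure that a valid interior $q$-operator is available---which is precisely the role of the hypothesis $|\Lambda|\geq 5$.
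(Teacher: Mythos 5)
Your proposal is correct and follows essentially the same route as the paper's proof: Condition~2 from the electrostatic terms, then Conditions~1 and~3 from the relations $(q_x\psi)(\nu)=0$ at interior sites, with the $\mu_x,\mu_{x\pm3}\geq 2$ case bootstrapping off the already-established next-nearest-neighbour exclusion, and with $|\Lambda|\geq 5$ guaranteeing an available interior hopping site. The only difference is presentational (you phrase each step from the viewpoint of the configuration to be annihilated rather than of the pair $(\mu,\eta)$), so no further comment is needed.
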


\begin{proof}
Consider the expansion $ 
\psi = \sum_{\mu\in \bN_0^\Lambda} \psi(\mu) \ket{\mu} $ of an arbitrary ground state $\psi \in \caG_\Lambda$ in terms of the configuration basis. 
We use Lemma~\ref{lem:tiling_configs} and the frustration free property to show that $\psi(\mu)\neq 0$ implies $\mu\in \ran(\sigma_\Lambda)$. 

First, frustration-freeness guarantees that $\psi$ is in the kernel of each electrostatic interaction term $n_xn_{x+1}$. As such, for each $ \mu\in \bN_0^\Lambda $
\be\label{electrostatic_ff}
0 = \mu_x \mu_{x+1}\psi(\mu) \quad \mbox{for all} \;  x\in[a,b-1]  ,
\ee
and so $\mu$ satisfies Condition 2 of Lemma~\ref{lem:tiling_configs} if $\psi(\mu) \neq 0$. 

Second, frustration-freeness also implies $\psi \in \ker(q_x)$ for any $x\in[a+1,b-1]$. In particular, $0 = (q_x\psi)(\nu)$ for all $ \nu  \in \bN_0^\Lambda $ from which it follows that $ \psi(\mu) \neq 0 $ if and only if $\psi(\eta) \neq 0$ where $\mu$ and $\eta$ are the two associated configurations (see \eqref{creation-action}):
\begin{equation}\label{config_def}
 \mu := (\alpha^*_x)^2\nu , \qquad \eta := \alpha^*_{x+1}  \alpha^*_{x-1} \nu .
 \end{equation}
If there is $x\in[a+1,b-1]$ such that $\mu_x \geq 3$, then considering \eqref{config_def} the configuration $\eta$ associated to $\nu = \alpha_x^2\mu$ satisfies $\eta_x\eta_{x\pm1} >0$, and hence $ \psi(\mu) = \psi(\eta) = 0 $ by \eqref{electrostatic_ff}. Therefore, Condition 1 of Lemma~\ref{lem:tiling_configs} holds if $\psi(\mu)\neq 0$.

Now, consider any configuration $\eta\in \bN_0^\Lambda$ for which $\eta_{x-1} \geq 2$ and $\eta_{x + 1}>0$ for some $x\in[a+1,b-1]$. Then the configuration $\nu=\alpha_{x-1}\alpha_{x+1}\eta$ is well-defined, and the configuration $\mu$ as in \eqref{config_def} satisfies $\mu_{x-1}\mu_x >0$. Arguing as in the previous case we again find $\psi(\eta) = \psi(\mu) = 0$. The analogous argument holds if $\eta_{x+1}\geq 2$ and $\eta_{x-1}>0$. Therefore, if $\psi(\eta) \neq 0$ and $\eta_x \geq 2$ for some $x\in\Lambda$, then $\eta_{x\pm 2} = 0$.

To show that $\psi(\mu) \neq 0$ implies Condition 3 of Lemma~\ref{lem:tiling_configs} for $ \mu $, it is only left to show that  $\psi(\mu) = 0$ if $\min\{\mu_x,\,\mu_{x+3}\} \geq 2$ for some $x\in[a,b-3]$. Since $|\Lambda|\geq 5$, it is clear that either $x$ or $x-3$ is an interior site. Assume that $x>a$, and define $\nu=\alpha_x^2\mu$. Then, $\eta$ as in \eqref{config_def} satisfies $\eta_{x+1} > 0$ and $\eta_{x+3} \geq 2$. By the previous case this implies $0=\psi(\eta) = \psi(\mu)$. The analogous argument holds in the case that $x+3$ is interior, where we apply \eqref{config_def} with $\nu=\alpha_{x+3}^2\mu$ and $\eta=\alpha_{x+2}^*\alpha_{x+4}^*\nu$. This completes the proof.
\end{proof}

To summarize, the results up to this point, we have found that every BVMD-tiling space $\caC_\Lambda(R)$ is a closed invariant subspace of the Hamiltonian $H_\Lambda$ and any two distinct BVMD-spaces are orthogonal. Moreover, for $|\Lambda|\geq 5$ the ground state space is contained in the closed span of all BVMD-tilings $\caC_\Lambda$. Since $ \caG_\Lambda \subset \caC_\Lambda = \bigoplus_{R\in\cR_\Lambda} \caC_\Lambda(R)$, the orthogonality and invariance of the individual BVMD-spaces  imply
\be\label{gs_direct_sum}
\caG_\Lambda = \bigoplus_{R\in\cR_\Lambda} (\caG_\Lambda \cap \caC_\Lambda(R)).
\ee
Hence, one can build a orthogonal basis for $\caG_\Lambda$ by finding an orthogonal basis of each $\caG_\Lambda \cap \caC_\Lambda(R)$ and taking the union over all root tilings. We prove in Theorem~\ref{thm:gss} that each $\caG_\Lambda \cap \caC_\Lambda(R)$ is one-dimensional and spanned by the \emph{BVMD-state} $\psi_\Lambda(R)$ defined by
\begin{equation}
\label{BVMD}
\psi_\Lambda(R) := \sum_{T\in \cT_\Lambda(R)} \left(\frac{\lambda}{\sqrt{2}}\right)^{d(T)} \ket{\sigma_{\Lambda}(T)}
\end{equation}
where $d(T)$ is the number of dimers $D$ or $D^{(1)}$ in the tiling $T$. Our convention implies that $d(R) = 0$ for all root tilings.

\begin{theorem}[OBC Ground-State Space]\label{thm:gss} Fix an interval $\Lambda $ with $|\Lambda|\geq 5$. For any root-tiling $R\in\cR_\Lambda$, one has
	\be\label{BVMD_GS}
	\caG_\Lambda \cap \caC_\Lambda(R) =\spa\{\psi_\Lambda(R)\}.
	\ee
Thus, the BVMD-states form an orthogonal basis of the ground state space $\cG_\Lambda$,
	\be\label{G_Lambda}
	\caG_\Lambda = \spa \{ \psi_\Lambda(R) \, | \, R\in \cR_\Lambda\}.
	\ee
\end{theorem}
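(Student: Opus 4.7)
The plan is to prove both inclusions in~\eqref{BVMD_GS} for each root tiling $R \in \cR_\Lambda$, and then assemble~\eqref{G_Lambda} via the orthogonal decomposition~\eqref{gs_direct_sum}. A preliminary observation simplifies matters: direct inspection of the bulk tiles $V, M, D$ and all boundary tiles shows that no two adjacent sites of any $\sigma_\Lambda(T)$ with $T \in \cT_\Lambda$ are simultaneously occupied, so the electrostatic term $\sum_x n_x n_{x+1}$ vanishes on $\caC_\Lambda(R)$. Consequently $H_\Lambda \restriction_{\caC_\Lambda(R)} = \kappa \sum_{x=a+1}^{b-1} q_x^* q_x \restriction_{\caC_\Lambda(R)}$, and by frustration-freeness $\caG_\Lambda \cap \caC_\Lambda(R) = \caC_\Lambda(R) \cap \bigcap_{x=a+1}^{b-1} \ker q_x$.

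For the inclusion $\spa\{\psi_\Lambda(R)\} \subseteq \caG_\Lambda \cap \caC_\Lambda(R)$, I fix $x \in [a+1,b-1]$ and partition $\cT_\Lambda(R)$ according to the local configuration at $[x-1,x,x+1]$. Tilings whose configuration there is neither $(1,0,1)$ nor $(0,2,0)$ are annihilated by $q_x$, cf.~\eqref{zero_configs}. The remaining tilings split into those $T^M$ containing a monomer-pair (or monomer/truncated-monomer) straddling $x$ and those $T^D$ containing a dimer (or truncated dimer) centered at $x$; the substitution rule~\eqref{eq:replacement} puts these in bijection, with $d(T^D) = d(T^M)+1$. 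By~\eqref{q_action_1}--\eqref{q_action_2}, the kernel of $q_x^* q_x$ on $\spa\{\ket{\sigma_\Lambda(T^M)}, \ket{\sigma_\Lambda(T^D)}\}$ is one-dimensional and proportional to $\ket{\sigma_\Lambda(T^M)} + (\lambda/\sqrt{2})\ket{\sigma_\Lambda(T^D)}$. Since the coefficients of $\psi_\Lambda(R)$ in~\eqref{BVMD} satisfy exactly the ratio $\lambda/\sqrt{2}$ between each such pair, summing over all pairs yields $q_x \psi_\Lambda(R) = 0$.

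For the reverse inclusion, take any $\psi \in \caG_\Lambda \cap \caC_\Lambda(R)$ and expand $\psi = \sum_{T \in \cT_\Lambda(R)} \psi(T) \ket{\sigma_\Lambda(T)}$. Collecting the action of $q_x$ by the pairs $(T^M, T^D)$ above, the condition $q_x \psi = 0$ is equivalent to the linear constraint $\psi(T^D) = (\lambda/\sqrt{2}) \psi(T^M)$ for each pair related by a substitution at position $x$. Since $\cT_\Lambda(R)$ is connected under single substitutions by definition of the equivalence class, iterating these constraints along any chain of substitutions from the root $R$ (with $d(R)=0$) to an arbitrary $T$ determines $\psi(T)$ in terms of $\psi(R)$. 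The propagation is path-independent because a single substitution changes $d$ by $\pm 1$ and the accompanying coefficient ratio by $(\lambda/\sqrt{2})^{\pm 1}$, so along any sequence of substitutions the net multiplicative factor depends only on the net change of $d$, namely $d(T) - d(R) = d(T)$. This forces $\psi(T) = \psi(R) (\lambda/\sqrt{2})^{d(T)}$ and hence $\psi = \psi(R) \cdot \psi_\Lambda(R)$, completing the proof of~\eqref{BVMD_GS}.

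The decomposition~\eqref{G_Lambda} then follows immediately by substituting~\eqref{BVMD_GS} into~\eqref{gs_direct_sum}, with orthogonality of the basis $\{\psi_\Lambda(R) \mid R \in \cR_\Lambda\}$ inherited from Lemma~\ref{lem:tiling_properties}. The principal technical point is establishing the well-definedness of the pairing $T^M \leftrightarrow T^D$ inside $\cT_\Lambda(R)$ and the accompanying path-independence for coefficient propagation; these reduce to the combinatorial structure of BVMD tilings already encoded in the substitution rules~\eqref{eq:replacement} and so require no further machinery beyond Lemma~\ref{lem:tiling_properties} and Lemma~\ref{lem:support}.
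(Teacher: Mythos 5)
Your proof is correct and takes essentially the same route as the paper's: reduce to $\ker q_x$ on each $\caC_\Lambda(R)$, pair tilings via $T^M\leftrightarrow T^D$, extract the ratio $c_{T^D}=(\lambda/\sqrt{2})\,c_{T^M}$, propagate to $c_T=c_R(\lambda/\sqrt2)^{d(T)}$, and assemble via \eqref{gs_direct_sum}. The one step you assert without justification is that $q_x\psi=0$ decouples into an independent constraint for each pair; this requires noting (as the paper does) that distinct pairs are sent by $q_x$ to multiples of \emph{distinct} configuration states $\ket{\sigma_\Lambda(T^V)}$, which are therefore linearly independent, so each pair's contribution must vanish separately.
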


\begin{proof}
	Any vector 
	\[
	  \psi = \sum_{T\in\cT_\Lambda(R)} c_T\   \ket{\sigma_\Lambda(T)} \in  \caC_\Lambda(R)  \quad\mbox{with coefficients $ c_T:= \psi(\sigma_\Lambda(T)) $} 
	\]
	 is in $ \caG_\Lambda $ if and only if $\psi \in\ker(q_x)\cap \caC_\Lambda(R)$ for all interior sites $x\in[a+1,b-1]$. 
	Using the criterion from Lemma~\ref{lem:tiling_configs} it is easy to check that $q_x \ket{\sigma_\Lambda(T)} = 0$ for all tilings $T$ except those that have either a pair of neighboring monomers with particles at $x\pm 1$, or a dimer with two particles at $x$. Consequently, if $ \psi \in\ker(q_x)\cap \caC_\Lambda(R) $ then 
	\be\label{qx_kernel}
	0=q_x\psi = \sum_{T^M\in\cT_\Lambda^M(R)} q_x\left(c_{T^M}\ket{\sigma_\Lambda(T^M)}+c_{T^D}\ket{\sigma_\Lambda(T^D)}\right),
	\ee
	where $\cT_\Lambda^M(R)$ denotes the set of tilings of $\Lambda$ that have two monomers with particles at $x\pm 1$, and $T^D$ is the tiling obtained by replacing these two monomers with a dimer in $T^M$. A direct computation shows that
	\be\label{qx-action}
	q_x(c_{T^M}\ket{\sigma_\Lambda(T^M)}+c_{T^D}\ket{\sigma_\Lambda(T^D)}) = (-\lambda c_{T^M}+\sqrt{2}c_{T^D})\ket{\sigma_\Lambda(T^V)}
	\ee
	where $T^V$ is the tiling obtained by replacing the two monomers at $x\pm 1$ with voids. Noting that $T^V \neq \tilde{T}^V$ for any pair of distinct $T^M, \tilde{T}^M\in \cT_\Lambda^M(R)$, combining \eqref{qx_kernel} with \eqref{qx-action} implies that
	\be\label{gs_coeff}
	c_{T^D} = \frac{\lambda}{\sqrt{2}} c_{T^M}.
	\ee
	
	Conversely, given any pair of tilings $T^M,T^D\in\cT_\Lambda(R)$ that differ only by a single replacement of two monomers by a dimer, there is an interior $x\in [a+1,b-1]$ for which \eqref{qx-action} holds and, hence, the respective coefficients satisfy \eqref{gs_coeff}. By definition, every $T\in\cT_\Lambda(R)$ can be connected to the root tiling $R$ by replacing all dimers $D$ or $D^{(1)}$ by a pair of neighboring monomers. Thus, inductively applying \eqref{gs_coeff} shows
	\[
	c_T = c_R\left(\frac{\lambda}{\sqrt{2}}\right)^{d(T)}\quad \text{for all}\quad T\in\cT_\Lambda(R),
	\]
	from which it follows that $\psi = c_R \psi_\Lambda(R)$. This completes the proof.	
\end{proof}

\subsection{Properties of BVMD States}\label{subsec:BVMD_properties}
We briefly summarize some important properties of BVMD-states, the proofs of which are immediate consequences of the previous results, or simple modifications of the equivalent statements found in \cite{NWY:2021}.

\begin{enumerate}
\item 
Applying the replacement rules \eqref{eq:replacement} to any tiling $T\in\cT_\Lambda$ leaves the number of particles invariant. As a consequence, each BVMD-state is an eigenstate of the number operator $N_\Lambda= \sum_{x\in \Lambda} n_x $, 
\[
N_\Lambda\psi_\Lambda(R) = \sum_{x\in \Lambda} \sigma_\Lambda(R)_x \  \psi_\Lambda(R) . 
\]
Moreover, the orthogonality of distinct BVMD-spaces immediately implies that
\[
\braket{\psi_\Lambda(R')}{\psi_\Lambda(R)} = \delta_{R,R'} \sum_{T\in \cT_\Lambda(R)} \left(\frac{|\lambda|^2}{2}\right)^{d(T)}
\]
and $\dim(\caG_\Lambda) = |\cR_\Lambda|=\infty$, as there are an infinite number of non-$\bZ $-induced boundary tiles.

\item Observing that voids are unaffected by the replacement rules, each BVMD-state can be factored (up to possible boundary states) using void states $\ket{0}$, and \emph{squeezed Tao-Thouless states} $\vp_{L+1}^{(i)}\in \cH_{[1,2L+i]}$. For fixed $L\geq 0$ and $i\in\{1,2\}$, the squeezed Tao-Thouless state $\vp_{L+1}^{(i)}$ is the BVMD-state generated by the root tiling that covers $2L+i$ sites with monomers, that is
\be\label{TT}
\vp_{L+1}^{(i)} := \psi_{[1,2L+i]}(M_{L+1}^{(i)}), \quad M_{L+1}^{(i)} = (M, M, \ldots, M, \, M^{(i)}),
\ee
where $M^{(2)}=M$, and $M_{L+1}^{(i)}$ has $L+1$ tiles, see Figure~\ref{fig:M3}. We will also write $\vp_L:=\vp_L^{(2)}$ and use the convention $\vp_0 = 1$.
\begin{figure}
	\begin{center}
		\includegraphics[scale=.3]{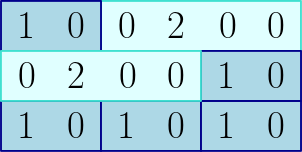}
	\end{center}
\caption{The tilings generated from $M_3^{(2)}$.} 
\label{fig:M3}
\end{figure}

To factorize an arbitrary BVMD-state $\psi_\Lambda(R)$, let $ \{v_1, \ldots, v_k\}\subseteq \Lambda =[a,b]$ be the ordered set of sites covered by voids in the root tiling $R=(R_1,\ldots, R_m)$, and denote by $L_i\in\bN_0$, $i=1, \ldots, k+1$, the number of monomers ($M$ or $M^{(1)}$) between $v_{i-1}$ and $v_i$. Here, we use the convention that $v_0 = a-1$ and $v_{k+1} = b+1$. Then, $\psi_\Lambda(R)$ factors as
\be\label{fragmentation}
\psi_\Lambda(R) = \psi^l \otimes \vp_{L_1}\otimes\ket{0}_{v_1}\otimes \ldots \otimes\vp_{L_{k}}\otimes\ket{0}_{v_k}\otimes\psi^r
\ee
where the boundary states $\psi_{l}$, $\psi_r$ are: 
\be
\psi^l = \begin{cases}
	\ket{n00} & \text{if}\;\;R_1 =B_n^l\\
	1 & \text{otherwise}
\end{cases}
 \qquad
\psi^r = \begin{cases}
	\vp_{L_{k+1}}\otimes \ket{0n}, & \text{if}\;\;R_m =B_n^r \\
	\ket{\vp_{L_{k+1}}^{(1)} }& \text{if}\;\;R_m = M^{(1)}\\
	\ket{\vp_{L_{k+1}}} & \text{otherwise}
\end{cases},
\ee
see Figure~\ref{fig:equivalence_class}. The formal proof of this expression follows from a slight modification the argument used in \cite[Theorem 2.10]{NWY:2021}.
\item 
As a fundamental building block of the BVMD-states, the squeezed Tao-Thouless states and their properties play a key role in our analysis. Since the bulk monomer and dimer both end in a vacant site, for each $L\geq 1$
\be\label{vp2_to_vp1}
\vp_{L} = \vp_L^{(1)} \otimes \ket{0}.
\ee

In view of the substitution rules \eqref{eq:replacement}, for either $i\in\{1,2\}$ these states can be further decomposed according to the following \emph{recursion relations}: for any $n=l+r$ with $l\geq 1$ and $r\geq 2$, 
\be\label{recursion_general}
\vp_n^{(i)} = \vp_l\otimes \vp_r^{(i)} +\frac{\lambda}{\sqrt{2}} \vp_{l-1} \otimes \ket{\sigma_d}\otimes\vp_{r-1}^{(i)}
\ee
where $\ket{\sigma_d} = \ket{0200}$. In the case that $r=1$, one also has the modified relation
\be\label{recursion}
\vp_n^{(i)} = \vp_{n-1}\otimes\vp_1^{(i)} +  \frac{\lambda}{\sqrt{2}} \vp_{n-2} \otimes \ket{\sigma_d^{(i)}}
\ee
where $\ket{\sigma_d^{(1)}} := \ket{020}$ and $\ket{\sigma_d^{(2)}} := \ket{\sigma_d}$, see Figure~\ref{fig:M3}.

\item The final property is an expression for the ratio $\beta_n:=\|\vp_{n-1}\|^2/\|\vp_{n}\|^2$ and follows from observing that the two vectors on the right side of \eqref{recursion} are orthogonal. As such $\|\vp_n^{(i)}\|^2 = \|\vp_n\|^2$ for all $i$ and
\be\label{norm_recursion}
\|\vp_n\|^2 = \|\vp_{n-1}\|^2 + \frac{|\lambda|^2}{2}\|\vp_{n-2}\|^2.
\ee
By applying the argument of \cite[Lemma 2.13]{NWY:2021}, this relation indicates that the ratio $\beta_n$ converges as $n\to \infty$. Specifically,
\be\label{alpha_convergence}
\beta_n = \frac{1}{\beta_+}\cdot\frac{1-\beta^n}{1-\beta^{n+1}} \to \frac{1}{\beta_+}
\ee
where $\beta = \frac{\beta_-}{\beta_+}\in(-1,0)$ and $\beta_{\pm} = (1\pm \sqrt{1+2|\lambda|^2})/2$.
\end{enumerate}

\subsection{Tiling spaces and ground states for periodic boundary conditions}\label{sec:periodic_tilings}

For the ground state of  $H_\Lambda^\per$ the relation in \eqref{dipole_gs} holds at every site in $\Lambda=[a,b]$. Hence, the ground state space can be described in terms of tilings  that only require the bulk tiles $V$, $M$, and $D$. As defined at the beginning of Section~\ref{sec:VMD}, we call any cover $T$ of the ring $\Lambda$ by these tiles a \emph{periodic VMD-tiling}, and further say it is a \emph{periodic root tiling} if it only consists of bulk monomers and voids. 
Any periodic tiling can be written in a (non-unique) ordered form $T = (T_1, \ldots, T_k)$ as long as the location of the first tile, e.g. the one covering $ a $,  is specified. Two periodic tilings are then called \emph{connected}, denoted $T\leftrightarrow T'$, if they can be transformed into one another using the bidirectional replacement rule $(10)(10)\leftrightarrow (0200)$, for which  we consider the first and last tiles in $T$ to be neighbors. The set of periodic root tilings $\cR_{\Lambda}^\per$ partitions the set of all periodic tilings $ \cT_\Lambda^\textrm{per} $  via this equivalence relation. An invariant subspace of the Hamiltonian $H_\Lambda^\per$ is given by 
\[
\caC_\Lambda^\per(R) := \spa\left\{\ket{\sigma_\Lambda(T)} | T\leftrightarrow R\right\}
\]
where $\sigma_\Lambda(T)\in\bN_0^\Lambda$ is again the particle configuration associated with the periodic tiling $T\in  \cT_\Lambda^\textrm{per} $, cf.~\eqref{config}.
A consequence of Lemma~\ref{lem:periodic_configs} below is that these tiling spaces are again mutually orthogonal and
\be\label{per_tiling_space}
\caC_\Lambda^\per := \spa \left\{\ket{\sigma_\Lambda(T)} | \, T\in\cT_{\Lambda}^\per\right\} = \bigoplus_{R\in\cR_{\Lambda}^{\per}}\caC_\Lambda^\per(R).
\ee
This subspace will turn out to be finite-dimensional, and hence closed.  

Note that cutting a periodic tiling between the endpoints $a$ and $b$ produces a BVMD tiling of the interval $\Lambda$, and so one can identify $\cT_\Lambda^\per \subseteq \cT_\Lambda$. As such, configurations that arise from period tilings can be characterized in a similar, in fact, even simpler way than done in Lemma~\ref{lem:tiling_properties}.
\begin{lem}[Periodic VMD-Tiling Configurations]\label{lem:periodic_configs}
	Given a ring $\Lambda = [a,b]$ with $|\Lambda|\geq 4$, a configuration $\mu\in\bN_0^{\Lambda}$ is in the range of the restriction $ \sigma_\Lambda: \cT_\Lambda^\textrm{per}  \to \bN_0^{\Lambda} $ if and only if the following two conditions hold:
	\begin{enumerate}
		\item $\mu_x \geq 1$ implies $\mu_{x\pm 1} = 0$
		\item $\mu_x{\geq 2}$ implies $\mu_{x\pm 2} = 0$ and $\mu_{x\pm 3}\leq 1$,
	\end{enumerate}
	where $x\pm k$ is taken modulo $|\Lambda|$. Moreover, the tiling  $T\in \cT_\Lambda^\textrm{per} $ for which $ \mu = \sigma_\Lambda(T) $ is unique, i.e., $  \sigma_\Lambda \restriction_{\cT_\Lambda^\textrm{per}} $ is injective.
\end{lem}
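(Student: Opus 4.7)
The lemma is the periodic analogue of Lemma \ref{lem:tiling_configs}, simplified by the absence of boundary tiles. My plan is to mirror that proof, with arithmetic of sites understood modulo $|\Lambda|$ throughout.

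For the forward direction, I would read the conditions directly off the tile shapes. A bulk monomer $M=(10)$ places one particle followed by an empty site, while a bulk dimer $D=(0200)$ places two particles surrounded by one empty site to the left and two empty sites to the right. Since only the bulk tiles $V$, $M$, $D$ appear, every occupation is at most $2$; any site with $\mu_x=1$ lies inside a monomer, so $\mu_{x\pm 1}=0$ (condition 1); and any site with $\mu_x=2$ lies inside a dimer, forcing $\mu_{x\pm 2}=0$ and $\mu_{x\pm 3}\leq 1$ because the nearest occupied position of a neighboring tile is at least three steps away (condition 2).

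For the converse, the plan is to place tiles greedily from longest to shortest. First, at every $x$ with $\mu_x=2$, place a dimer $D$ covering $\{x-1,x,x+1,x+2\}\pmod{|\Lambda|}$; conditions 1 and 2 guarantee that these four occupations are $(0,2,0,0)$. Two such dimers cannot overlap: condition 2 forbids a second site of occupation $\geq 2$ within distance $3$ of $x$, so consecutive $2$'s are separated by at least $4$, which is exactly the footprint of a dimer. Next, at every remaining uncovered site with $\mu_x=1$, place a monomer $M$ covering $\{x,x+1\}$. Condition 1 gives $\mu_{x+1}=0$, and the verification that $x+1$ is not already covered by a dimer proceeds exactly as in Lemma \ref{lem:tiling_configs}: a dimer centered at $y\in\{x-1,x,x+1,x+2\}$ would contradict either $\mu_x=1$ or one of conditions 1 and 2. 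All leftover sites then have $\mu_x=0$ and are tiled by voids.

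The main point that requires care is the wrap-around: in contrast to Lemma \ref{lem:tiling_configs} one has to check that conditions 1 and 2 remain consistent across the identification $a-1\equiv b$. This is immediate since the two conditions are already stated modulo $|\Lambda|$, and the hypothesis $|\Lambda|\geq 4$ is exactly what is needed to accommodate the footprint of a single dimer on the ring. Uniqueness of the tiling $T$ with $\sigma_\Lambda(T)=\mu$ is automatic from the construction, because each placement above is forced: only a dimer can carry an occupation of $2$, only a monomer can carry an occupation of $1$, and everything else must be a void. I do not anticipate a genuine obstacle; the only subtlety is that the periodic geometry removes the freedom of edge tiles, which slightly simplifies rather than complicates the argument.
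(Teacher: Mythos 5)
Your argument is correct and essentially identical to the paper's, whose entire proof of this lemma is the remark that it follows exactly as Lemma~\ref{lem:tiling_configs} without the case of boundary tiles, together with the observation that periodic tiling configurations have at most two particles per site. One caveat, which you share with the lemma as literally stated: Conditions 1--2 alone do not exclude $\mu_x\geq 3$ (e.g.\ $\mu=(3,0,\dots,0)$ satisfies both), so the step ``all leftover sites then have $\mu_x=0$'' in your converse direction implicitly invokes the periodic analogue of Condition~1 of Lemma~\ref{lem:tiling_configs}, namely $\mu_x\leq 2$ for all $x$ --- which is precisely the extra observation the paper appends to its one-line proof.
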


The proof of this result follows exactly as that of Lemma~\ref{lem:tiling_configs} without the case of boundary tiles and with the observation that any tiling configuration $\sigma_\Lambda(T)$ with $T\in \cT_\Lambda^\textrm{per} $  has at most two particles at any site.  Using this result, we establish the following properties of the ground state space.

\begin{theorem}[Periodic Ground State Space] \label{thm:periodic_gss} The following properties hold for the ground state space $\caG_\Lambda^\per$ on any ring $\Lambda = [a,b]$ with $|\Lambda|\geq 4$:
	\begin{enumerate}
		\item The set of periodic VMD-states $\{\psi_\Lambda^\per(R) | R\in \cR_\Lambda^\per\}$ is an orthogonal basis of $\caG_\Lambda^\per$ 
		where
		\be
		\psi_\Lambda^{\per}(R) = \sum_{T\in \cT_\Lambda^\per(R)} \left(\frac{\lambda}{\sqrt{2}}\right)^{d(T)}\ket{\sigma_\Lambda(R)}
		\ee
		and $d(T)$ is again the number of dimers $D$ in the periodic tiling $T$. 
		\item The dimension grows exponentially in the system size. Specifically, there are positive constants $c,C>0$ independent of $\Lambda$ for which
		\be\label{gs_dimension}
		c\mu_+^{|\Lambda|} \leq \dim \cG_\Lambda^\per \leq C\mu_+^{|\Lambda|},
		\ee
		where $\mu_+ := (1+\sqrt{5})/2$.
		\item For any periodic root tiling, $N_\Lambda\psi_\Lambda^\per(R)=N_\Lambda(R)\psi_\Lambda^\per(R)$, where $N_\Lambda(R)$ is the number of particles in $R\in \cR_\Lambda^\per$. Moreover, the ground state is at most half filled, 
		\be\label{gs_filling}
		\frac{1}{2}-\frac{1}{2|\Lambda|} \leq \max_{R\in\cR_\Lambda^\per}\frac{N_\Lambda(R)}{|\Lambda|} \leq \frac{1}{2}.
		\ee
	\end{enumerate}
\end{theorem}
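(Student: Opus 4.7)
The proof proceeds along the same lines as the OBC case (Lemma on Support and Theorem~\ref{thm:gss}), but is simpler because the periodic tiling configurations do not involve boundary tiles. The three parts of the theorem are addressed in sequence: first characterize the ground state space as a direct sum of one-dimensional intersections with each $\caC_\Lambda^\per(R)$; then count the periodic root tilings via a transfer-matrix recursion; finally analyze particle content of root tilings.

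\textbf{Part 1 (Basis of $\caG_\Lambda^\per$).} The plan is to first prove the periodic analogue of Lemma~\ref{lem:support}: $\caG_\Lambda^\per \subseteq \caC_\Lambda^\per$. Using frustration-freeness, for any $\psi\in\caG_\Lambda^\per$ expanded in the occupation basis and any $\mu$ with $\psi(\mu)\neq 0$, the electrostatic terms force $\mu_x\mu_{x+1}=0$ cyclically, giving condition 1 of Lemma~\ref{lem:periodic_configs}. The relation $\psi\in\ker(q_x)$ for \emph{every} $x\in\Lambda$ (rather than only interior sites) is then used exactly as in the proof of Lemma~\ref{lem:support} to force condition 2 and to rule out $\mu_x \geq 3$ anywhere; since $|\Lambda|\geq 4$, the modular arithmetic arguments go through verbatim. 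By Lemma~\ref{lem:periodic_configs} this gives $\mu\in\ran(\sigma_\Lambda|_{\cT_\Lambda^\per})$, hence $\psi\in\caC_\Lambda^\per$. Combined with the invariance and mutual orthogonality of the $\caC_\Lambda^\per(R)$ (proved identically to Lemma~\ref{lem:tiling_properties} via the computation \eqref{q_action_1}-\eqref{q_action_2}, which is purely local), we obtain $\caG_\Lambda^\per = \bigoplus_R (\caG_\Lambda^\per\cap \caC_\Lambda^\per(R))$. Within each $\caC_\Lambda^\per(R)$, the argument of Theorem~\ref{thm:gss} applies unchanged: the condition $q_x\psi=0$ yields the coefficient relation \eqref{gs_coeff} between any pair of tilings differing by a single dimer/monomer replacement, and inductively all coefficients are determined by that of $R$, giving $\psi\propto \psi_\Lambda^\per(R)$.

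\textbf{Part 2 (Dimension bound).} Here the task is to count $|\cR_\Lambda^\per|$, which equals $\dim\caG_\Lambda^\per$ by Part 1. A periodic root tiling partitions the ring $\Lambda$ into blocks of lengths $1$ (voids) and $2$ (monomers). The plan is to use a transfer-matrix method: let $r(L)$ denote the number of such cyclic tilings on a ring of length $L$. Partitioning by whether site $a$ is covered by a void or is the left end of a monomer (and in the latter case whether the monomer wraps around from $b$ to $a$), one obtains $r(L) = L_{lin}(L) + L_{lin}(L-2)$ where $L_{lin}(L) = L_{lin}(L-1) + L_{lin}(L-2)$ is the linear count (Fibonacci-like). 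Both sequences have dominant eigenvalue $\mu_+ = (1+\sqrt5)/2$ of the characteristic polynomial $x^2-x-1=0$, and the subdominant eigenvalue $\mu_- = (1-\sqrt5)/2$ satisfies $|\mu_-|<1$. Hence $r(L) = A\mu_+^L + B\mu_-^L$ for explicit constants $A,B>0$, yielding the two-sided bound \eqref{gs_dimension}.

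\textbf{Part 3 (Particle number and filling).} Since the replacement $(10)(10)\leftrightarrow(0200)$ preserves particle number (two in both cases), every tiling $T\in\cT_\Lambda^\per(R)$ satisfies $\sum_x \sigma_\Lambda(T)_x = N_\Lambda(R)$; by linearity $N_\Lambda\psi_\Lambda^\per(R)=N_\Lambda(R)\psi_\Lambda^\per(R)$. For the filling bound, note that a root tiling uses only voids ($1$ site, $0$ particles) and bulk monomers ($2$ sites, $1$ particle). Writing $|\Lambda|=2m+v$ with $m$ monomers and $v$ voids, one has $N_\Lambda(R)=m$, so $N_\Lambda(R)/|\Lambda|$ is maximized by minimizing $v$. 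The minimum is $v=0$ if $|\Lambda|$ is even (giving ratio $1/2$) and $v=1$ if $|\Lambda|$ is odd (giving ratio $1/2-1/(2|\Lambda|)$), which in either case satisfies \eqref{gs_filling}.

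\textbf{Main obstacle.} The only nontrivial step is Part 1, specifically the support argument: one must verify that the case analysis used in Lemma~\ref{lem:support} for interior sites extends to all sites in the periodic geometry, which relies on $|\Lambda|\geq 4$ so that the shifted configurations $\alpha_x^2\mu$ and $\alpha_{x\pm 3}^2\mu$ remain well-defined and distinct modulo $|\Lambda|$. The counting in Part 2 and the filling analysis in Part 3 are routine combinatorics.
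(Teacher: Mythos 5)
Your proposal is correct and follows essentially the same route as the paper: part 1 by transplanting the support argument of Lemma~\ref{lem:support} and the coefficient recursion of Theorem~\ref{thm:gss} to the periodic setting, part 2 by the decomposition $|\cR_\Lambda^{\per}| = r_{|\Lambda|} + r_{|\Lambda|-2}$ with the Fibonacci-type recursion for $r_L$, and part 3 by the observation that replacements preserve particle number and that voids must be minimized. You in fact supply more detail than the paper does for part 1 (which the paper dispatches by reference to Theorem~\ref{thm:gss}), and your account is accurate.
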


\begin{proof}
	1. This result follows from the same argument used in the proof of Theorem~\ref{thm:gss}. 
	
	2. From part 1, it is clear that $\dim\cG_\Lambda^\per = |\cR_\Lambda^\per|$. 
	Any periodic root tiling of $\Lambda = [a,b]$ considered as a ring either covers $\{a,b\}$ with a monomer, or is a root tiling of the interval $[a,b]$ by monomers and voids. As such,
\[
|\cR_\Lambda^\per| = r_{|\Lambda|} + r_{|\Lambda|-2}
\]
where $r_{|\Lambda|}$ is the number of tilings of an interval of length $|\Lambda|$ with monomers and voids, which is clearly finite. 
	Thus, we need only count the number of tilings $r_{L}$ that cover an interval of size $L$ (with open boundary conditions) by voids and monomers. This number satisfies the recursion relation
	\be\label{tile_count}
	r_L = r_{L-1} + r_{L-2}
	\ee
	with initial conditions $r_1=1$ and $r_2 =2$. The solution reads $r_L = (\mu_+^{L+1}-\mu_-^{L+1})/\sqrt{5}$ where $\mu_{\pm} = (1\pm\sqrt{5})/2$, from which the result follows. 
	
	3. The claim $N_\Lambda\psi_\Lambda^\per(R) = N_\Lambda(R)\psi_\Lambda^\per(R)$ is clear since the replacement rule does not change the total number of particles. The value $N_\Lambda(R)$ is maximized by any root-tiling $R_{\max}$ with $\lfloor\frac{|\Lambda|}{2}\rfloor$ monomers. This gives 
	\[
	N_\Lambda(R_{\max}) = \begin{cases} 
		|\Lambda|/2, & |\Lambda| \text{ even} \\ 
		(|\Lambda|-1)/2, & |\Lambda| \text{ odd} 
	\end{cases}
	\]
	which establishes \eqref{gs_filling}.
\end{proof}

To conclude this section, we briefly comment on the decay of ground state correlations. Theorem~\ref{thm:periodic_gss} establishes that an orthogonal basis for the ground state space is labeled by the periodic root tilings. Similar to \cite[Theorem 4.1]{NWY:2021}, we expect that each periodic VMD-state will exhibit exponential decay of correlations for bounded observables. In contrast, in \cite[Section 4.3]{NWY:2021} it was pointed out that due to the exponential degeneracy of the ground state, other pure ground states with arbitrarily slow decay of correlations could be constructed. We expect that similar examples can be created for the present $\nu=1/2$ model.

\section{Proof of a uniform gap in the BVMD tiling space}\label{sec:MM}
We now apply the martingale method to produce a  lower bound on the spectral gap $E_1(\caC_\Lambda)$ corresponding to open boundary conditions that is uniform in the volume. The martingale method can be used to estimate the spectral gap above the ground state of a frustration-free Hamiltonian on a finite-dimensional Hilbert space. While in previous works it has been used to study spectral gaps for finite-volume quantum spin and lattice fermion models, we adapt it here to the present lattice boson model.
\subsection{Reduction to a finite dimensional subspace}
 As remarked earlier, one difficulty in adapting the martingale method is that the Hilbert space $ \cH_\Lambda $ and the tiling subspace $ \caC_\Lambda $ are both infinite dimensional. For the present model, we solve this issue and establish an initial estimate on the finite-volume gap by observing that $ E_1(\caC_\Lambda)   $ is realized on the invariant subspace associated to \emph{bulk BVMD-tilings} of $ \Lambda $, which turns out to be finite dimensional. This set is the collection of all tilings generated by the substitution rules on a subset $\cR_\Lambda^{\infty} \subset\cR_\Lambda$ of root-tilings $R=(R_1, \ldots, R_k) $ for which the boundary tiles are restricted to
\be\label{na_bdy_conds}
R_1 \in \{V, \, M, \, B_2^l\}, \quad R_k \in \{V, \, M, \, M^{(1)}, \, B_2^r\} .
\ee
Said differently, this is precisely the set of tilings obtained from truncating VMD-tilings of $\bZ$. 
The corresponding subspace of \emph{$\bZ $-induced  BVMD-tilings}, or \emph{bulk tilings} for short, is abbreviated by
\be\label{def:IVBVMD}
\caC_\Lambda^{\infty} := \bigoplus_{R\in\cR_{\Lambda}^{\infty}}\caC_\Lambda(R).
\ee
 Since each subspace $ \caC_\Lambda(R) $ is invariant for $ H_\Lambda $, so too is  $ \caC_\Lambda^{\infty}  \subseteq\dom(H_\Lambda)$, which allows us to define the gap
 \be\label{eq:gapinvariant}
 E_1(\caC_\Lambda^\infty) := \inf_{0\neq \psi\in\caC_\Lambda^\infty \cap \caG_\Lambda^\perp}\frac{\braket{\psi}{H_\Lambda\psi}}{\|\psi\|^2}.
 \ee
 \begin{theorem}[Restriction to Bulk Tilings]\label{thm:gap_reduction} For any interval $\Lambda = [a,b]$:
 \begin{enumerate}\itemsep0pt
 \item $ \dim \caC_\Lambda^{\infty} < \infty $,
 \item $ \displaystyle 
E_1(\caC_\Lambda) \geq  \min\left\{ E_1(\caC_\Lambda^{\infty}) , \, E_1(\caC_{[a+3,b]}^{\infty}), E_1(\caC_{[a,b-2]}^{\infty}), E_1(\caC_{[a+3,b-2]}^{\infty}) \right\} $ is strictly positive, where we use the convention that $E_1(\mathcal{V}_{\Lambda'})=\infty$ if $\mathcal{V}_{\Lambda'} \subseteq \caG_{\Lambda'}$ or $ \Lambda' = \emptyset $. 
\end{enumerate}
\end{theorem}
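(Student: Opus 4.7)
The plan is to refine the direct-sum decomposition $\caC_\Lambda=\bigoplus_{R\in\cR_\Lambda}\caC_\Lambda(R)$ of Lemma~\ref{lem:tiling_properties} by partitioning roots according to the type of each boundary tile. Writing $\cR_\Lambda^\infty$ (both boundary tiles $\bZ$-induced), $\cR_\Lambda^L$ (non-$\bZ$-induced left $B_n^l$, $n\geq 3$, with $\bZ$-induced right), $\cR_\Lambda^R$ (symmetrically), and $\cR_\Lambda^{LR}$ (both boundaries non-$\bZ$-induced) for the four cases, this gives an orthogonal decomposition
\[
\caC_\Lambda=\caC_\Lambda^\infty\oplus \caC_\Lambda^L\oplus\caC_\Lambda^R\oplus\caC_\Lambda^{LR}
\]
into closed invariant subspaces of $H_\Lambda$. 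Since the ground-state splitting~\eqref{gs_direct_sum} is compatible with this partition, $E_1(\caC_\Lambda)$ equals the minimum of the four gaps on these summands.

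Part 1 is quick: interior tiles of a root in $\cR_\Lambda^\infty$ lie in $\{V,M\}$ and its boundary tiles are drawn from finite sets, hence $|\cR_\Lambda^\infty|<\infty$; together with the finiteness of $\dim\caC_\Lambda(R)$ noted after~\eqref{tiling_spaces}, this yields $\dim\caC_\Lambda^\infty<\infty$.

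The main work in part 2 is a Hamiltonian-reduction on each non-$\bZ$-induced summand; all three are essentially the same, so I will detail only $\caC_\Lambda^L$. Since the replacement rules~\eqref{eq:replacement} never touch a $B_n^l$ tile, every $T\leftrightarrow R$ with $R=(B_n^l,R_2,\ldots,R_k)\in\cR_\Lambda^L$ keeps the initial $(n,0,0)$ block frozen, and $R':=(R_2,\ldots,R_k)$ is exactly a root tiling of $[a+3,b]$ with $R_1'\in\{V,M\}$. This yields
\[
\caC_\Lambda^L \;=\; \bigoplus_{n\geq 3}\,\ket{n00}\otimes \caC_{[a+3,b]}^{(V,M)},
\]
where $\caC_{[a+3,b]}^{(V,M)}\subset \caC_{[a+3,b]}^\infty$ denotes the invariant subspace generated by roots starting with $V$ or $M$. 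A short direct computation --- using $\mu_{a+1}=\mu_{a+2}=0$ together with the key observation that every tiling in $\caC_{[a+3,b]}^{(V,M)}$ has $\mu_{a+3}\leq 1$ (a leading $M$ either remains a monomer or is absorbed into a dimer $(0200)$, vacating the leftmost site) --- shows that the boundary electrostatic terms vanish and that the hopping operators $q_{a+1}$, $q_{a+2}$, $q_{a+3}$ all annihilate vectors in $\caC_\Lambda^L$. Hence $H_\Lambda=\1\otimes H_{[a+3,b]}$ on this subspace, which gives
\[
E_1(\caC_\Lambda^L)\;=\;E_1\big(\caC_{[a+3,b]}^{(V,M)}\big)\;\geq\; E_1\big(\caC_{[a+3,b]}^\infty\big),
\]
the last inequality following from the invariant orthogonal decomposition $\caC_{[a+3,b]}^\infty=\caC_{[a+3,b]}^{(V,M)}\oplus \caC_{[a+3,b]}^{(B_2^l)}$ and the observation that any excited state of the smaller summand is also an excited state of the larger. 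The symmetric analyses produce $E_1(\caC_\Lambda^R)\geq E_1(\caC_{[a,b-2]}^\infty)$ and $E_1(\caC_\Lambda^{LR})\geq E_1(\caC_{[a+3,b-2]}^\infty)$; taking the minimum over the four summands yields the stated bound.

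Strict positivity is then automatic: by part 1 each $H_{\Lambda'}$ restricted to the finite-dimensional invariant subspace $\caC_{\Lambda'}^\infty$ is a non-negative matrix with kernel $\caG_{\Lambda'}\cap\caC_{\Lambda'}^\infty$, so its gap above this kernel is strictly positive whenever the subspace properly contains the kernel, and equals $+\infty$ by convention otherwise. I expect the main technical obstacle to be the clean bookkeeping in the reduction step --- specifically, confirming that $q_{a+3}$ vanishes on $\caC_\Lambda^L$ (which hinges on the $\mu_{a+3}\leq 1$ bound for leading $V$-or-$M$ tilings) and running the matching argument on the right boundary, where the non-$\bZ$-induced tile is $B_m^r=(0m)$ of length $2$ rather than $B_n^l=(n00)$ of length $3$, which is what produces the asymmetric shortened interval $[a,b-2]$.
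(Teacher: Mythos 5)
Your proposal is correct and follows essentially the same route as the paper: the paper also reduces to the per-root decomposition, observes that a frozen non-$\bZ$-induced boundary tile factors off so that $H_\Lambda$ acts as $\1\otimes H_{\Lambda'}$ (this is exactly the observation recorded in~\eqref{artificial_action}, which you re-derive by checking that the boundary electrostatic and hopping terms annihilate the subspace), and then obtains strict positivity from the finite-dimensionality in part 1. Your four-way grouping of roots and the explicit verification that $q_{a+3}$ vanishes via the $\mu_{a+3}\leq 1$ bound are just a more detailed bookkeeping of the same argument.
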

 \begin{proof}
1. It suffices to show that $|\cR_\Lambda^{\infty}|<\infty$ since $\dim(\caC_\Lambda(R))<\infty$ for each $R\in \cR_\Lambda$. The number of root tilings $R\in\cR_\Lambda^{\infty}$ that cover $\Lambda$ with just voids $V$ and bulk monomers $M$ satisfies the recursion relation from \eqref{tile_count}. As a consequence, the number of root tilings with a fixed pair of boundary tiles $R_l$, $R_r$ is given by $r_{L-|R_l|-|R_r|}$ where $|R_\#|$ is the length of the tile $ R_\# $. Using the convention that $r_0 = 1$, this implies
\[
|\cR_\Lambda^{\infty}| = \sum_{\substack{R_l\in\{V,M,B_2^l\}\\R_r\in\{V,M,M^{(1)},B_2^r\}}}r_{|\Lambda|-|R_l|-|R_r|}
\]
which is clearly finite.

2. Since $\caC_\Lambda = \bigoplus_{R\in\cR_\Lambda}\caC_\Lambda(R)\subseteq\dom(H_\Lambda)$ is a sum of orthogonal, invariant subspaces all of which contain a unique ground state, $\psi_\Lambda(R)$, the spectral gap on $\caC_\Lambda $ is the infimum over the gaps in each subspace, 
\[
E_1(\caC_\Lambda) = \inf\{E_1(\caC_\Lambda(R)) \,| \, R\in \cR_\Lambda\} . 
\]
The analogous argument implies $E_1(\caC_\Lambda^\infty) = \inf\{E_1(\caC_\Lambda(R)) \,| \, R\in\cR_\Lambda^\infty\}$. Thus, we need only consider $E_1(\caC_\Lambda(R))$ for $R\in\cR_\Lambda\setminus\cR_\Lambda^\infty$. Suppose that $R=(R_1,\ldots,R_k)\in\cR_\Lambda$ is such that both boundary tiles do not belong to the sets in~\eqref{na_bdy_conds}, that is, $R_1 = B_n^l$ and $R_k = B_m^r$ for some $n,m\geq 3$. Since the replacement rules do not apply to these tiles, any nonzero $\psi_\Lambda \in \caC_\Lambda(R)$ factors as
\[
\psi_\Lambda = \ket{n00}\otimes \psi_{\Lambda'}\otimes \ket{0m}
\]
where $\psi_{\Lambda'}\in \caC_{\Lambda'}(R')$, $\Lambda'=[a+3,b-2]$, and $R'=(R_2,\ldots, R_{k-1})\in\cR_{\Lambda'}^{\infty}$. Moreover, \eqref{artificial_action} shows
\[
H_\Lambda\psi_\Lambda = \ket{n00}\otimes H_{\Lambda'}\psi_{\Lambda'}\otimes \ket{0m},
\]
and so $E_\Lambda^1(\caC_\Lambda(R)) = E_{\Lambda'}^1(\caC_{\Lambda'}(R'))$. A similar statement can be made in the case that only one, but not both, of the boundary tiles of $R$ do not belong to~\eqref{na_bdy_conds}. This proves the asserted inequality. The strict positivity of the spectral gap follows from the fact that $ H_{\Lambda} $ restricted to $ \caC_{\Lambda}^\infty $ for any finite interval $ \Lambda $ is unitarily equivalent to a matrix.
%
\end{proof}

\subsection{The martingale method}\label{subsec:martingale}
We are now able to apply the martingale method proved in \cite[Theorem 5.1]{nachtergaele:2016b} to produce a lower bound on the spectral gap $E_1(\caC_\Lambda^{\infty})$. Since our Hamiltonian is translation invariant, it is sufficient to consider $\Lambda = [1,L]$, and Theorem~\ref{thm:tiling_gap} below establishes that
$
\inf_{L\geq 7}E_1(\caC_{[1,L]}^{\infty}) >0 $. 

To state the main result in this section, recall that for any operator $A $ on $\cH_{\Lambda'} $, 
the mapping 
\be\label{idenfification} A \mapsto A\otimes \1_{\Lambda\setminus \Lambda'}
\ee
identifies $A$ as an operator on $\cH_\Lambda$ for any finite $\Lambda \supseteq \Lambda'$. We introduce several sequences of positive operators of this type associated to a fixed $\Lambda = [1,L]$ with $L \geq 7$. Let $N\geq 3$, $k\in\{1,2\}$ denote the unique integers so that $L=2N+k$ and define two finite sequences of Hamiltonians $H_n,h_n \geq 0$ for $2\leq n \leq N$ by
\be\label{ham_sequences}
H_n = \sum_{k=2}^n h_k, \quad
h_{n} = H_{\Lambda_n} \quad \text{where}\quad
\Lambda_n = \begin{cases}
	[1,4+k], & n=2 \\
	[2n+k-5,2n+k], & 3\leq n\leq N
\end{cases}
\ee
The associated sequence of intervals satisfies $|\Lambda_2|\in\{5,6\}$, $|\Lambda_n| = 6$ for $n\geq 3$, and $|\Lambda_n\cap\Lambda_{n+1}|=4$ for $2\leq n< N$, from which one can check that each interaction term supported on $\Lambda$ ($n_xn_{x+1}$ or $q_x^*q_x$) is a summand in at least one and most three of the Hamiltonians $h_n$. As a result, for all $2\leq n \leq N$,
\be\label{equivalent_hams}
H_{[1,2n+k]} \leq H_n \leq 3 H_{[1,2n+k]}
\ee
and, in particular, $H_\Lambda \leq H_{N} \leq 3H_{\Lambda}$. An important consequence of \eqref{equivalent_hams} is that the ground-state spaces agree. Thus, 
\be
\ker H_n = \cG_{[1,2n+k]}\otimes\cH_{[2n+k+1,L]}\subseteq \cH_\Lambda  
\ee
where $\cG_{[1,2n+k]}$ is as in Theorem~\ref{thm:gss}. Let $G_n$ denote the orthogonal projection onto $\ker H_n$. By frustration-freeness, $\ker H_{n+1}\subseteq \ker H_n$ for each $n$, and so the following resolution of the identity forms a mutually orthogonal family of orthogonal projections:
\begin{equation}\label{En}
	E_n := \begin{cases}
		\1 - G_2 & n = 1 \\
		G_n - G_{n+1} & 2\leq n \leq N-1 \\
		G_{N} & n=N
	\end{cases}
\end{equation}
Finally, we denote by $g_n,$ $2\leq n \leq N$, the orthogonal projection onto $\ker h_n=\cG_{\Lambda_n}\otimes\cH_{\Lambda\setminus\Lambda_n}\subseteq\cH_\Lambda$.

For our application of the martingale method, we consider the restriction of these operators to the subspace $\caC_\Lambda^{\infty}=\bigoplus_{R\in\cR_\Lambda^{\infty}}\caC_\Lambda(R) $. The BVMD-space $\caC_\Lambda(R)$  for each root-tiling $R\in\cR_\Lambda$ is invariant under $H_{\Lambda'}$ for any $\Lambda' \subseteq \Lambda$ as this is invariant under all $q_x^*q_x$ supported on $\Lambda$ (and in particular those supported on $\Lambda'$). By the same reasoning, $ \caC_\Lambda(R) $ is invariant under $h_n$ and $H_n$ as well as the associated ground-state projections $g_n$ and $G_n$ for all $n\geq 2$. Hence each of these self-adjoint operators can be jointly block-diagonalized with respect to the decomposition $\cH_\Lambda = \caC_\Lambda^{\infty}\oplus(\caC_\Lambda^{\infty})^\perp.$ Explicitly, for any $2\leq n \leq N$:
\begin{align}\label{blocks}
	A_n & = A^{\infty}_n + (\1 - P_{\caC_\Lambda^{\infty}}) A_n(\1 - P_{\caC_\Lambda^{\infty}}) , \qquad \mbox{with} \quad A^{\infty}_n:=A_n\restriction_{\caC_\Lambda^{\infty}} =P_{\caC_\Lambda^{\infty}} A_n P_{\caC_\Lambda^{\infty}}.
\end{align}
where $A_n \in \{H_n,\, h_n,\, G_n, \, g_n\}$ and $P_{\caC_\Lambda^{\infty}}$ is the orthogonal projection onto $\caC_{\Lambda}^{\infty}$. This block diagonalization also extends to every $E_n$ by \eqref{En}. Thus, the restriction of any of these operators to $\caC_\Lambda^{\infty}$ is given by the associated block diagonal component $ A^{\infty}_n $. 

We are  now able to state the main result in this section:

\begin{theorem}[Application of the Martingale Method]\label{thm:tiling_gap}
	Fix $\Lambda =[1,L]$ with $L\geq 7$, and let $h_n$, $g_n$ and $E_n$ be as in \eqref{ham_sequences} and \eqref{En}. The restrictions of these operators to $\caC_\Lambda^{\infty}$ satisfy the following three criterion:
	\begin{enumerate}
		\item For all $n\geq 2$, $h_n^{\infty}\geq 2\kappa(\1-g_n^{\infty})$.
		\item For all $n\geq 2$, $[g_n^{\infty},E_m^{\infty}]\neq 0$ only if $m \in [n-3,n-1]$.
		\item For all $2\leq n \leq N-1$ and $|\lambda|\neq 0$, the ground state projections satisfy 
		\be\label{MM3}
		\|g_{n+1}^{\infty}E_n^{\infty}\| \leq f(|\lambda|^2/2) := \sup_{n\geq 4}f_n(|\lambda|^2/2)
		\ee
		where given $\beta_n = \|\vp_{n-1}\|^2/\|\vp_n\|^2$, see \eqref{alpha_convergence},
		\be\label{def:fn}
		f_n(r) : = r\beta_n\beta_{n-2}\left(\frac{[1-\beta_{n-1}(1+r)]^2}{1+2r}+\frac{2(1-\beta_{n-1})^2}{1+r}\right).
		\ee
	\end{enumerate}
	As a consequence, if $|\lambda|>0$ and $f(|\lambda|^2/2)<1/3$ then the spectral gap of $H_\Lambda$ in $\caC_\Lambda^{\infty}$ is bounded from below by
	\be\label{tiling_gap}
	E_1(\caC_{[1,L]}^{\infty})  \geq \frac{2\kappa}{3}(1-\sqrt{3f(|\lambda|^2/2)})^2.
	\ee
	Moreover, $f(|\lambda|^2/2)<1/3$ for $|\lambda|\leq 7.4.$
\end{theorem}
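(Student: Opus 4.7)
The plan is to verify the three enumerated conditions and then invoke the abstract martingale method of \cite{nachtergaele:2016b}, which converts a local gap $\gamma$, a commutation window of length $\ell$, and a mixing bound $\epsilon<1/\ell$ into a bulk gap of order $\tfrac{\gamma}{\ell}(1-\sqrt{\ell\epsilon})^2$. In our setting the three inputs are $\gamma=2\kappa$ (from condition~1), $\ell=3$ (from the window in condition~2), and $\epsilon=f(|\lambda|^2/2)<1/3$ (from condition~3), which reproduces the asserted bound \eqref{tiling_gap}. Thus the whole task reduces to establishing the three conditions uniformly in $n$ and in the root sector.

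Conditions~1 and~2 are the more routine pieces. For condition~1, the electrostatic part vanishes on $\caC_\Lambda^\infty$, so $h_n^\infty$ reduces to $\kappa\sum_x (q_x^*q_x)^\infty$. By Lemma~\ref{lem:tiling_properties} and the factorization~\eqref{fragmentation}, $h_n^\infty$ decomposes into finitely many finite-dimensional invariant sectors $\caC_{\Lambda_n}(R')$, each containing a unique ground state $\psi_{\Lambda_n}(R')$. Because every $q_x^*q_x$ restricted to its invariant two-dimensional subspace $\spa\{\ket{\sigma(T^M)},\ket{\sigma(T^D)}\}$ has spectrum $\{0,\,|\lambda|^2+2\}$ according to \eqref{MM_action}, a direct diagonalization on each sector shows the first excited energy of $h_n^\infty$ is at least $\kappa(|\lambda|^2+2)\geq 2\kappa$. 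For condition~2, $g_n^\infty$ is supported on $\Lambda_n=[2n+k-5,2n+k]$, whereas $E_m^\infty$ acts non-trivially only on the sites added between $H_m$ and $H_{m+1}$ together with a small collar dictated by where the BVMD factorization can be cut. A comparison of these supports forces $m\in\{n-3,n-2,n-1\}$, whence the window of length $\ell=3$.

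Condition~3 is the main obstacle and the heart of the argument. The strategy is to parametrize an arbitrary vector in the range of $E_n^\infty$ via the orthogonal BVMD basis, then reduce $\|g_{n+1}^\infty E_n^\infty\|$ to the operator norm of a small matrix on the overlap $\Lambda_{n+1}\cap[1,2n+k+2]$. Using the factorization~\eqref{fragmentation} restricted to that overlap, the relevant squeezed Tao-Thouless block splits, via the recursion relations~\eqref{recursion_general}-\eqref{recursion}, into an orthogonal sum of a \emph{monomer branch} $\vp_l\otimes\vp_r^{(i)}$ and a \emph{dimer branch} $\vp_{l-1}\otimes\ket{\sigma_d}\otimes\vp_{r-1}^{(i)}$. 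The two summands in \eqref{def:fn} correspond precisely to the projection of $E_n^\infty$ onto the two branches, and the prefactors $\beta_n\beta_{n-2}$ encode the norm ratios introduced by the recursion \eqref{norm_recursion}. Uniformity of the bound in $n$, and hence finiteness of the supremum defining $f$, follows from the convergence $\beta_n\to 1/\beta_+$ in \eqref{alpha_convergence}. The combinatorial bookkeeping of how boundary tiles interact across the cut, and the careful identification of the matrix whose norm realizes $f_n$, are the technically delicate steps.

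Finally, the numerical assertion that $f(|\lambda|^2/2)<1/3$ for $|\lambda|\leq 7.4$ is a purely quantitative check: with $\beta_n$ given explicitly by~\eqref{alpha_convergence} and converging monotonically to $1/\beta_+$, the function $f_n$ is an explicit rational expression in $|\lambda|^2$ whose supremum over $n\geq 4$ is attained either at $n=4$ or in the limit $n\to\infty$. Both values can be evaluated in closed form and their maximum compared against $1/3$, which yields the stated range of $|\lambda|$.
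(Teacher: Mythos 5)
Your overall architecture matches the paper's: verify the three conditions and feed them into the abstract martingale theorem with $\gamma=2\kappa$, window length $\ell=3$, and $\epsilon=f(|\lambda|^2/2)$. However, there are concrete problems already in the "routine" parts. For Condition~1, your claim that direct diagonalization shows the first excited energy of $h_n^{\infty}$ is at least $\kappa(|\lambda|^2+2)$ is false: on a sector generated by a root with \emph{three} consecutive monomers, $h_n^\infty$ is unitarily equivalent to the $3\times 3$ matrix \eqref{eq:bmatrix}, whose spectrum is $\{0,\,2\kappa,\,2\kappa(1+|\lambda|^2)\}$, so the gap there is only $2\kappa$; a sum of terms each with local gap $\kappa(|\lambda|^2+2)$ above its own kernel does not inherit that gap above the joint kernel. (The conclusion $h_n^\infty\geq 2\kappa(\1-g_n^\infty)$ survives, but for a reason your argument does not supply.) For Condition~2, a pure support comparison cannot give $[g_n^{\infty},E_m^{\infty}]=0$ for $m\geq n$, since $\supp(E_m)\subseteq[1,2m+2+k]$ contains $\Lambda_n$ in that range; that half of the claim needs frustration-freeness ($\ker h_n\supseteq\ker H_m$ for $m\geq n$, hence $g_nG_m=G_m$ and $g_nE_m=E_mg_n$), while the disjoint-support argument only disposes of $m\leq n-4$.

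The more serious gap is Condition~3, which you correctly identify as the heart of the matter but do not actually prove. Obtaining \eqref{MM3} with the specific function \eqref{def:fn} requires (i) an explicit orthogonal basis of $\caC_\Lambda^{\infty}\cap(\caG_{\Lambda_1}\otimes\cH_{[b-1,b]})$ extending the ground states $\psi_\Lambda(R)$ by the vectors $\xi_\Lambda(R)=\psi_{\Lambda(n,i)}(\tilde R)\otimes\eta_n^{(i)}$ indexed by roots ending in two or more monomers --- this rests on the decomposition of each $\caC_\Lambda(R)$ into at most two tensor-product pieces over $\Lambda'=[a,b-2]$ (Lemma~\ref{lem:tiling_decomp2} and Corollary~\ref{cor:OB}) --- and (ii) the explicit evaluation of $G_{\Lambda_2}\eta_n^{(i)}$ via the recursion relations \eqref{recursion_general}--\eqref{recursion}, which identifies the two summands of $f_n$ with the overlaps onto the two root tilings $R^i$ and $R^i_D$ of $\Lambda_2$ and produces the prefactor $\beta_n\beta_{n-2}$ through the norm identity \eqref{eta_norm}. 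Your sketch gestures at the monomer/dimer branches but omits the construction of $\xi_\Lambda(R)$ and the computation that actually yields $f_n$; without it the stated value of $\epsilon$ is unverified, and the theorem's quantitative conclusion does not follow. Finally, your numerical argument asserts that $\sup_{n\geq4}f_n(|\lambda|^2/2)$ is attained at $n=4$ or as $n\to\infty$; since $\beta_n\to1/\beta_+$ with alternating-sign corrections (because $\beta\in(-1,0)$ in \eqref{alpha_convergence}), $f_n$ is not monotone in $n$ and this extremality claim needs justification --- the paper delegates exactly this analysis to an appendix of earlier work rather than treating it as immediate.
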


\subsection{Restrictions of BVMD spaces to subvolumes}\label{sec:truncations}

For the proof of Theorem~\ref{thm:tiling_gap}, one needs to analyze the action on $\caC_\Lambda^{\infty}$ by operators $A$ supported on subintervals $\Lambda'\subseteq \Lambda$. It is therefore useful to expand $\caC_\Lambda^{\infty}$ as a direct sum of tensor products of the form $\cK_{\Lambda_l}\otimes\caC_{\Lambda'}(R)\otimes\cK_{\Lambda_r}$ where $\Lambda = \Lambda_l\cup \Lambda' \cup \Lambda_r$ is the disjoint union of three consecutive intervals, and $\cK_{\Lambda_\#}\subseteq \cH_{\Lambda_\#}.$ The main observation that allows us to write $\caC_\Lambda^{\infty}$ in such a form is that for any tiling $T\in\cT_\Lambda$  the restriction of the configuration $\sigma_\Lambda(T)$ to $ \Lambda' $ satisfies the requirements of Lemma~\ref{lem:tiling_configs}. Therefore,
\be\label{tiling_restriction}
\sigma_\Lambda(T)\restriction_{\Lambda'} = \sigma_{\Lambda'}(T') \quad \text{for some} \quad T'\in\cT_{\Lambda'}
\ee
and one sees that if $T$ is a $\bZ $-induced tiling then so too is $T'$ as every site holds at most two particles, see Figure~\ref{fig:restrictions}.
To state the desired decomposition of $\caC_\Lambda^{\infty}$, we introduce its orthonormal configuration basis
\begin{align}
	\cB_{\Lambda}^{\infty} & = \left\{\ket{\mu} \, | \, \mu\in\ran\sigma_\Lambda\cap\{0,1,2\}^\Lambda\right\} \label{NATile-basis}.
\end{align}

\begin{figure}
	\begin{center}
		\includegraphics[scale=.3]{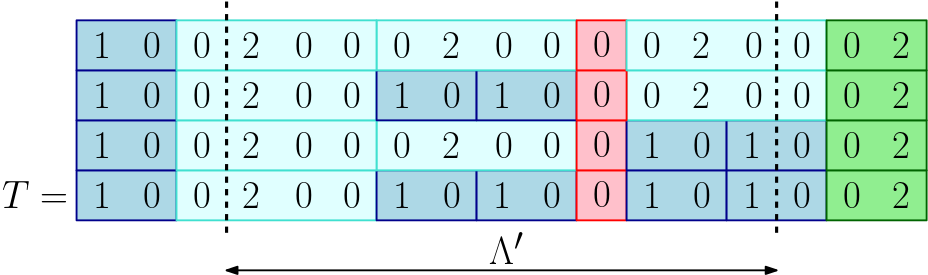}
	\end{center}
\caption{The restriction $T'$ of a tiling $T$ to $\Lambda'$. Inserting any connected tiling $T''\leftrightarrow T'$ produces a tiling on $\Lambda$ that is connected to $T$.}
\label{fig:restrictions}
\end{figure}

\begin{lem}[Tiling Space Decomposition]\label{lem:tiling_decomp}
	Let $\Lambda = [a,b]$, and suppose that $\Lambda'\subseteq \Lambda$ is a subinterval with $|\Lambda'|\geq 4$. Then, $C_\Lambda^{\infty}$ can be decomposed as
	\be\label{tiling_decomp}
	\caC_{\Lambda}^{\infty} = \bigoplus_{R'\in\cR_{\Lambda'}^{\infty}}\bigoplus_{\substack{\ket{\mu}\in\cB_{\Lambda}^{\infty} \, : \\ \mu = (\mu^l, \, \sigma_{\Lambda'}(R'), \, \mu^r)}} \ket{\mu^l}\otimes\caC_{\Lambda'}(R')\otimes\ket{\mu^r},
	\ee	
	where $\mu^l$ and $\mu^r$ are the subconfigurations of $\mu$ supported on the subinterval of $\Lambda$ to the left and right of $\Lambda'$, respectively. In the case that one of these subintervals is empty, we use the convention $\ket{\mu^\#} = 1$.
\end{lem}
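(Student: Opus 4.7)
The plan is to construct an explicit bijection between an orthonormal basis of $\caC_\Lambda^\infty$ and the natural basis induced by the right-hand side of \eqref{tiling_decomp}, and then pass to linear spans. First I would verify that $\cB_\Lambda^\infty$ from \eqref{NATile-basis} is an orthonormal basis of $\caC_\Lambda^\infty$: by \eqref{na_bdy_conds} every bulk root tiling uses only boundary tiles carrying at most two particles per site, so $\sigma_\Lambda(T)\in\{0,1,2\}^\Lambda$ for every $T\in\cT_\Lambda^\infty$, and the converse identification of such configurations with bulk tilings is supplied by Lemma~\ref{lem:tiling_configs}.

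Next, for any $\mu\in\cB_\Lambda^\infty$ the restriction $\nu:=\mu\restriction_{\Lambda'}\in\{0,1,2\}^{\Lambda'}$ trivially satisfies conditions (1)--(3) of Lemma~\ref{lem:tiling_configs} on $\Lambda'$, since those conditions on $\Lambda$ are only weakened when interpreted on a subinterval. Hence $\nu=\sigma_{\Lambda'}(T')$ for a unique $T'\in\cT_{\Lambda'}^\infty$ with unique root $R'\in\cR_{\Lambda'}^\infty$. This partitions $\cB_\Lambda^\infty$ into fibers indexed by triples $(R',\mu^l,\mu^r)$ subject to the compatibility $(\mu^l,\sigma_{\Lambda'}(R'),\mu^r)\in\cB_\Lambda^\infty$. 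The key combinatorial claim is that the fiber over such a triple equals
\[
F(R',\mu^l,\mu^r) \;:=\; \bigl\{(\mu^l,\sigma_{\Lambda'}(T''),\mu^r) \;:\; T''\in\cT_{\Lambda'}(R')\bigr\}.
\]
One inclusion is immediate by construction. For the other I would argue by induction on the minimum number of substitutions $(10)(10)\leftrightarrow(0200)$ or $(10)(1)\leftrightarrow(020)$ needed to connect $T''$ to $R'$; the base case $T''=R'$ is the compatibility hypothesis.

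The main technical obstacle is the inductive step. A single substitution alters $\sigma_{\Lambda'}(\cdot)$ at only three or four consecutive sites contained in $\Lambda'$. When these sites are at distance $\geq 3$ from both endpoints of $\Lambda'$, conditions (1)--(3) of Lemma~\ref{lem:tiling_configs} on $\Lambda$ reduce to the same conditions on $\Lambda'$, which are manifestly preserved by substitutions. The delicate case is a substitution adjacent to $a'$ or $b'$: here I would case-analyze the finitely many configurations produced by the two substitution rules. For instance, if the original configuration looks like $(\ldots,\mu^l_{a'-2},\mu^l_{a'-1},1,0,1,\ldots)$ and lies in $\cB_\Lambda^\infty$, then conditions (2) and (3) already force $\mu^l_{a'-1}=0$ and $\mu^l_{a'-2}\leq 1$, which is exactly what is needed for the substituted configuration $(\ldots,\mu^l_{a'-2},\mu^l_{a'-1},0,2,0,\ldots)$ to satisfy conditions (2)--(3) at the new sites $a',a'+1$. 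A parallel check handles $(10)(1)\leftrightarrow(020)$ at $b'$ and the reverse substitutions. Once the key claim is established, each fiber spans $\ket{\mu^l}\otimes\caC_{\Lambda'}(R')\otimes\ket{\mu^r}$ by the very definition of $\caC_{\Lambda'}(R')$ in \eqref{tiling_spaces}, distinct triples yield mutually orthogonal subspaces since the fibers are disjoint subsets of an orthonormal set, and the orthogonal direct sum \eqref{tiling_decomp} follows by assembling.
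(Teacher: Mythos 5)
Your proposal is correct and follows essentially the same route as the paper: both arguments reduce the statement to the configuration basis $\cB_\Lambda^\infty$, use the fact that restricting a bulk tiling configuration to $\Lambda'$ yields a bulk tiling configuration via the conditions of Lemma~\ref{lem:tiling_configs}, and then check that applying the substitution rules inside $\Lambda'$ preserves those conditions on all of $\Lambda$ (the paper asserts this with a picture where you carry out the boundary-adjacent case analysis explicitly, which is a welcome added detail). Your fiber-partition packaging and the paper's two-inclusion argument are the same proof in different clothing.
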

Above, we use a slight abuse of notation and denote $\cS\otimes \psi:=\{\phi\otimes\psi:\phi\in\cS\}\subseteq \cH_1\otimes\cH_2$ for a subset $\cS\subseteq\cH_1$ and a vector $\psi\in\cH_2$ of two Hilbert spaces. Note that the direct sum in \eqref{tiling_decomp} is well-defined since it is taken over a collection of mutually orthogonal subspaces. Moreover, every root tiling $R'\in\cR_{\Lambda'}^\infty$ is represented in at least one summand on the RHS of \eqref{tiling_decomp} since any tiling configuration $\sigma_{\Lambda'}(T')$ can be extended by zeros to a tiling configuration on $\Lambda$.

\begin{proof}
	Fix $R'\in\cR_{\Lambda'}^{\infty}$ and pick any $\mu = (\mu^l,\sigma_{\Lambda'}(R'),\mu^r)\in \ran\sigma_{\Lambda}\cap\{0,1,2\}^\Lambda$. Applying the replacement rules to neighboring monomers in $R'$ to create $T'\in\cT_{\Lambda'}(R')$ once again produces a configuration $\mu(T')=(\mu^l,\sigma_{\Lambda'}(T'),\mu^r)$ that satisfies the conditions of Lemma~\ref{lem:tiling_configs}, see Figure~\ref{fig:restrictions}. Moreover, this configuration is $ \bZ $-induced since each site holds at most two particles. Hence, $\ket{\mu(T')}\in\cB_{\Lambda}^{\infty}$ for all $T'\in\cT_{\Lambda'}(R')$, and one finds that the RHS of \eqref{tiling_decomp} is a subspace of $\caC_\Lambda^{\infty}$.
	
	Now, fix any $\ket{\mu}\in\cB_{\Lambda}^{\infty}$, and decompose $\mu =(\mu^l,\mu^{\Lambda'},\mu^r)$ where $\mu^{\Lambda'}$ is the subconfiguration associated with $\Lambda'$. Since $\mu=\sigma_\Lambda(T)$ for some BVMD-tiling $T$ on $\Lambda$, $\mu^{\Lambda'}=\sigma_{\Lambda'}(T')$ for some $T'\in\cT_{\Lambda'}$  by \eqref{tiling_restriction}. Moreover, this tiling is $ \bZ $-induced as $\mu_x^{\Lambda'}\leq 2$ for all $x\in\Lambda'$. Thus,  
	\[
	\ket{\mu} \in \ket{\mu^l}\otimes\caC_{\Lambda'}(R')\otimes\ket{\mu^r}
	\]
	where $R'\in\cR_{\Lambda'}^{\infty}$ is the root-tiling associated to $T'$. Once again, $\mu(R'):=(\mu^l,\sigma_{\Lambda'}(R'),\mu^r)$ is a $ \bZ $-induced BVMD-tiling of $\Lambda$ since applying the replacement rules to any dimer ($D$ or $D^{(1)}$) in $T'$ reproduces a configuration on $\Lambda$ that satisfies Lemma~\ref{lem:tiling_configs}, see Figure~\ref{fig:restrictions}. Thus, $\caC_\Lambda^{\infty}$ is a subspace of the RHS of \eqref{tiling_decomp} and equality holds as claimed.
\end{proof}

The following is an immediate consequence of the above decomposition.
\begin{cor}[Subspace Reductions]\label{cor:reductions} Suppose $A$ is a self-adjoint operator supported on a subinterval $\Lambda'\subseteq\Lambda$ with $|\Lambda'|\geq 4$, and $\caC_{\Lambda'}(R')\subseteq\dom(A)$ is an invariant subspace of $A$ for each $R'\in\cR_{\Lambda'}^{\infty}$. Then $\caC_\Lambda^{\infty}\subseteq \dom(A\otimes\1_{\Lambda\setminus\Lambda'})$ is an invariant subspace of $A\otimes\1_{\Lambda\setminus\Lambda'}$, and the following properties hold:
	\begin{enumerate}
		\item $\|A\otimes \1_{\Lambda\setminus\Lambda'}\|_{\caC_\Lambda^{\infty}}=\|A\|_{\caC_{\Lambda'}^{\infty}}$, where the subscript denotes the Hilbert space in which the operator norm is taken.
		\item  $\spec(A\otimes \1_{\Lambda\setminus\Lambda'}\restriction_{\caC_\Lambda^{\infty}}) = \spec(A\restriction_{\caC_{\Lambda'}^{\infty}}).$
	\end{enumerate}
\end{cor}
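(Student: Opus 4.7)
The plan is to reduce both assertions to the tensor-product decomposition provided by Lemma~\ref{lem:tiling_decomp}, which rewrites $\caC_\Lambda^{\infty}$ as a finite orthogonal direct sum of subspaces of the form $\ket{\mu^l}\otimes \caC_{\Lambda'}(R')\otimes \ket{\mu^r}$ indexed by $R'\in\cR_{\Lambda'}^{\infty}$ and by boundary configurations $(\mu^l,\mu^r)$. Since $A\otimes \1_{\Lambda\setminus\Lambda'}$ only touches the middle tensor factor, each of these summands is manifestly a reducing subspace provided $\caC_{\Lambda'}(R')$ is a reducing subspace of $A$, which is exactly the standing hypothesis.

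First I would verify the invariance and domain claim. Fix any summand $\ket{\mu^l}\otimes \caC_{\Lambda'}(R')\otimes \ket{\mu^r}$. Because $\caC_{\Lambda'}(R')\subseteq \dom(A)$ and the outer factors are fixed unit vectors, the operator $A\otimes \1_{\Lambda\setminus\Lambda'}$ is defined on the entire summand and acts as $\phi\mapsto \ket{\mu^l}\otimes A\phi \otimes \ket{\mu^r}$ on vectors of the form $\ket{\mu^l}\otimes \phi \otimes \ket{\mu^r}$. The hypothesis that $A$ preserves $\caC_{\Lambda'}(R')$ then maps the summand into itself. Since $\caC_\Lambda^{\infty}$ is the finite-dimensional orthogonal sum of such summands (by Theorem~\ref{thm:gap_reduction}(1)), it lies in $\dom(A\otimes \1_{\Lambda\setminus\Lambda'})$ and is invariant.

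Next I would establish assertions 1 and 2 simultaneously, by observing that on each summand the restricted operator $A\otimes \1_{\Lambda\setminus\Lambda'}$ is unitarily equivalent to $A\restriction_{\caC_{\Lambda'}(R')}$ via the canonical identification $\ket{\mu^l}\otimes \phi \otimes \ket{\mu^r}\leftrightarrow \phi$. Taking the supremum of norms over summands (using that the norm of a direct sum is the supremum of the norms of its blocks) gives
\[
\|A\otimes \1_{\Lambda\setminus\Lambda'}\|_{\caC_\Lambda^{\infty}} = \sup_{R'\in\cR_{\Lambda'}^{\infty}}\|A\|_{\caC_{\Lambda'}(R')} = \|A\|_{\caC_{\Lambda'}^{\infty}},
\]
where the second equality uses the block decomposition $\caC_{\Lambda'}^{\infty}=\bigoplus_{R'\in\cR_{\Lambda'}^{\infty}}\caC_{\Lambda'}(R')$. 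The same unitary equivalence applied set-theoretically yields
\[
\spec(A\otimes \1_{\Lambda\setminus\Lambda'}\restriction_{\caC_\Lambda^{\infty}}) = \bigcup_{R'\in\cR_{\Lambda'}^{\infty}}\spec(A\restriction_{\caC_{\Lambda'}(R')}) = \spec(A\restriction_{\caC_{\Lambda'}^{\infty}}),
\]
since repetition across different boundary configurations $(\mu^l,\mu^r)$ only affects multiplicities.

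There is no substantial obstacle; the argument is essentially bookkeeping. The only delicate point to check is that every $R'\in\cR_{\Lambda'}^{\infty}$ actually appears in some summand of Lemma~\ref{lem:tiling_decomp}, so that the union in the spectrum identity is taken over all of $\cR_{\Lambda'}^{\infty}$ and not a proper subset. This is guaranteed by the observation, made right after the statement of Lemma~\ref{lem:tiling_decomp}, that $\sigma_{\Lambda'}(R')$ extended by zeros outside of $\Lambda'$ is always a valid configuration in $\cB_\Lambda^{\infty}$.
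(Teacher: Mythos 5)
Your proposal is correct and follows essentially the same route as the paper: both block-diagonalize $A\otimes\1_{\Lambda\setminus\Lambda'}$ with respect to the decomposition of Lemma~\ref{lem:tiling_decomp}, identify each block unitarily with $A\restriction_{\caC_{\Lambda'}(R')}$, and then take the maximum of norms and the union of spectra over the blocks. Your explicit check that every $R'\in\cR_{\Lambda'}^{\infty}$ appears in at least one summand is the same point the paper disposes of in the remark following Lemma~\ref{lem:tiling_decomp}.
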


Above, we use the notation $A\otimes\1_{\Lambda\setminus\Lambda'}$ to emphasize which Hilbert space we are considering the action of $A$. We suppress this notation in the proof below. Note also that the norms are well defined since $\caC_{\Lambda}^{\infty}$ is finite-dimensional for any finite volume $\Lambda$. 

\begin{proof} Since each $\caC_{\Lambda'}(R')$ is invariant under $A$, the latter is block diagonal with respect to the decomposition in \eqref{tiling_decomp} as
	\[
	A\left(\ket{\mu^l}\otimes\caC_{\Lambda'}(R)\otimes\ket{\mu^r}\right) = \ket{\mu^l}\otimes \left(A\caC_{\Lambda'}(R)\right)\otimes\ket{\mu^r}.
	\]
As a consequence, the norm and spectrum of the restrictions agree, i.e.
	\[
	\|A\|_{\ket{\mu_l}\otimes\caC_{\Lambda'}(R')\otimes\ket{\mu_r}} = \|A\|_{\caC_{\Lambda'}(R')} \quad\text{and}\quad \spec(A\restriction_{\ket{\mu_l}\otimes\caC_{\Lambda'}(R')\otimes\ket{\mu_r}}) = \spec(A\restriction_{\caC_{\Lambda'}(R')}),
	\]
 The claimed equalities then follow from applying the mutual orthogonality of the BVMD-spaces to conclude
	\be \label{norm_max}
	\|A\|_{\caC_{\Lambda'}^{\infty}} = \max_{R'\in\cR^\infty_{\Lambda'}}\|A\|_{\caC_{\Lambda'}(R')} \quad \mbox{and} \quad \spec(A\restriction_{\caC_{\Lambda'}^{\infty}}) = \bigcup_{R'\in\cR_{\Lambda'}^{\infty}}\spec(A\restriction_{\caC_{\Lambda'}(R')}),
	\ee	
	and similarly for $\|A\|_{\caC_\Lambda^\infty}$ and $\spec(A\restriction_{C_\Lambda^\infty})$ given \eqref{tiling_decomp}.
\end{proof}

A natural question to ask is how \eqref{tiling_decomp} relates to the trivial decomposition~\eqref{def:IVBVMD}.
As will be evident in the proof of Theorem~\ref{thm:tiling_gap}, the particular case of interest is when $\Lambda = [a,b]$ and $\Lambda'=[a,b-2]$. It is easy to see for this situation that every $\caC_{\Lambda'}(R')\otimes\ket{\mu^r}$ as in \eqref{tiling_decomp} is contained in some $\caC_\Lambda(R)$ with $R\in\cR_{\Lambda}^{\infty}$. More can be said, though, as illustrated in the next result. Specifically, we show that every $\caC_{\Lambda}(R)$ decomposes as a direct sum of one or two subspaces of the form $\caC_{\Lambda'}(R')\otimes\ket{\mu^r}$. This result is again derived from the possible ways tilings on $\Lambda$ can restrict to tilings on $\Lambda'$. 
There are two cases one needs to consider, which are distinguished by whether or not the replacement rules apply to the last two tiles in $R\in\cR_\Lambda^{\infty}$. We denote by
\be\label{R_MM}
\cR_\Lambda^{MM} = \{ R\in \cR_\Lambda^{\infty} \, | \, R \text{  ends in two or more monomers}\}
\ee
the set of $ \bZ$-induced root tilings for which the last two tiles can be replaced. For any tiling $R\in\cR_\Lambda^{MM}$ there is a unique choice $n\geq 2$ and $i\in\{1,2\}$ so that 
\be\label{two_monomers}
R = (\tilde{R},M_n^{(i)}) ,
\ee
where $\tilde{R}$ does not end in a monomer, and we recall that $M_n^{(i)}=(M,\ldots,M,M^{(i)})$ stands for a tiling of an interval of length $2(n-1)+i$ by $n$ monomers (the last of which has length $i\in\{1,2\}$).  We use the convention that $\tilde{R} = \emptyset$ if $R=M_n^{(i)}$. With respect to this decomposition, define the tiling $R_D \leftrightarrow R$ by replacing the last two monomers of $R$ with a dimer, 
\be\label{R_D}
R_D = (\tilde{R}, \, M_{n-2}^{(2)}, \, D^{(i)})
\ee
where $D^{(2)} = D$ and $R_D = (\tilde{R},D^{(i)})$ if $n=2$. Even though $R_D$ is not a root tiling of $\Lambda,$ its restriction produces a root tiling on $\Lambda'$ that is $\bZ$-induced.

\begin{lem}[BVMD-Space Decomposition]\label{lem:tiling_decomp2}
	Suppose $\Lambda = [a,b]$ and $\Lambda' = [a,b-2]$ with $|\Lambda'|\geq 4$. For any $R\in \cR_{\Lambda}^{\infty}\setminus \cR_\Lambda^{MM}$,
	\be\label{R_decomp_1}
	\caC_\Lambda(R) = \caC_{\Lambda'}(R')\otimes \ket{\mu}
	\ee
	where $\sigma_\Lambda(R) = (\sigma_{\Lambda'}(R'),\mu)$ and $R'\in\cR_{\Lambda'}^{\infty}$. Moreover, for any $R\in \cR_\Lambda^{MM}$,
	\be\label{R_decomp_2}
	\caC_\Lambda(R) = \big( \caC_{\Lambda'}(R')\otimes \ket{\mu_R}\big)  \oplus \big( \caC_{\Lambda'}(R_D')\otimes \ket{\mu_{R_D}}\big)
	\ee
	where $\sigma_\Lambda(R) = (\sigma_{\Lambda'}(R'),\mu_R)$, $\sigma_\Lambda(R_D) = (\sigma_{\Lambda'}(R_D'),\mu_{R_D})$ for $R_D$ as in \eqref{R_D}, and both $R',R_D'\in\cR_{\Lambda'}^{\infty}$.
\end{lem}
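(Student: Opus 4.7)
The plan is to carry out a case analysis determined by whether a substitution from~\eqref{eq:replacement} can reach into the last two sites $\{b-1,b\}$ of $\Lambda$. The unifying principle is that every tiling $T\in\cT_\Lambda(R)$ restricts to a tiling on $\Lambda'$: by Lemma~\ref{lem:tiling_configs} the restriction $\sigma_\Lambda(T)\!\restriction_{\Lambda'}$ still satisfies its three conditions and therefore equals $\sigma_{\Lambda'}(T')$ for some $T'\in\cT_{\Lambda'}$, so that
\[
\sigma_\Lambda(T)=\bigl(\sigma_{\Lambda'}(T'),\mu_T\bigr)
\]
with a residual configuration $\mu_T\in\mathbb{N}_0^{\{b-1,b\}}$. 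The goal is to describe the set of pairs $(T',\mu_T)$ arising as $T$ ranges over $\cT_\Lambda(R)$ and then translate the resulting combinatorial identity into a Hilbert space decomposition via the injectivity of $\sigma_\Lambda$.

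In the first case $R\in\cR_\Lambda^\infty\setminus\cR_\Lambda^{MM}$, I would argue that no substitution can alter the configuration at $\{b-1,b\}$. Indeed, such a substitution would require $R$ either to contain a dimer $D$ or $D^{(1)}$ covering one of these sites (excluded since $R$ is a root tiling) or to end in a monomer pair (excluded by $R\notin\cR_\Lambda^{MM}$). Hence $\mu_T$ takes a single value $\mu$ independent of $T\in\cT_\Lambda(R)$, and the restriction map $T\mapsto T'$ is a bijection onto $\cT_{\Lambda'}(R')$ for a uniquely determined $R'\in\cR_{\Lambda'}^\infty$. A short subcase analysis on the ending of $R$ (in $V$, $B_2^r$, $M^{(1)}$, or a single $M$) identifies $R'$ explicitly; the only subtle point is that a bulk monomer $M$ straddling the new boundary appears on $\Lambda'$ as an $M^{(1)}$. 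The decomposition~\eqref{R_decomp_1} then follows immediately from~\eqref{tiling_spaces}.

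In the second case $R\in\cR_\Lambda^{MM}$, I would write $R=(\tilde R,M_n^{(i)})$ as in~\eqref{two_monomers}. Here exactly one substitution reaches into the last two sites of $\Lambda$, namely the rule $(M,M^{(i)})\leftrightarrow D^{(i)}$ applied to the terminal monomer pair. This yields a dichotomy: every $T\in\cT_\Lambda(R)$ either still ends with $(M,M^{(i)})$ or with $D^{(i)}$. In the first subgroup, $\mu_T=\mu_R$, equal to $(1,0)$ if $i=2$ and $(0,1)$ if $i=1$, and restriction gives a bijection onto $\cT_{\Lambda'}(R')$ with $R'=(\tilde R,M_{n-1}^{(i)})$. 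In the second subgroup, $\mu_T=\mu_{R_D}$, equal to $(0,0)$ if $i=2$ and $(2,0)$ if $i=1$, and restriction gives a bijection onto $\cT_{\Lambda'}(R'_D)$ with $R'_D=(\tilde R,M_{n-2}^{(2)},B_2^r)$ for $i=2$ and $R'_D=(\tilde R,M_{n-2}^{(2)},V)$ for $i=1$; in both cases $R'_D\in\cR_{\Lambda'}^\infty$ since the suffix $B_2^r$ or $V$ is an allowed right-boundary tile. Since $\mu_R\neq\mu_{R_D}$ in both subcases, the configuration bases of the two summands are orthogonal, and~\eqref{R_decomp_2} follows.

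I expect the main obstacle to be the careful verification of the bijection between substitutions applied on $\Lambda$ and those applied on $\Lambda'$, particularly in the subgroup of Case 2 with $i=1$. There, a truncated bulk monomer reappears on $\Lambda'$ as an $M^{(1)}$, converting an $(M,M)$ replacement on $\Lambda$ into an $(M,M^{(1)})$ replacement on $\Lambda'$, and one must check that the counts of available substitutions agree on both sides (both amount to $n-2$ pairings internal to the monomer run). Once the combinatorial bijections are in place, the Hilbert space identities~\eqref{R_decomp_1} and~\eqref{R_decomp_2} are immediate consequences of the definitions in~\eqref{tiling_spaces} together with the injectivity of $\sigma_\Lambda$ from Lemma~\ref{lem:tiling_configs}.
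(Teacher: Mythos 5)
Your proposal is correct and follows essentially the same route as the paper: both arguments reduce the claim to an identity between configuration bases by tracking how the substitution rules can (or cannot) reach the last two sites, and both identify the same residual configurations $\mu_R,\mu_{R_D}$ and truncated roots $R'=(\tilde R,M_{n-1}^{(i)})$, $R_D'=(\tilde R,M_{n-2}^{(2)},B_2^r)$ or $(\tilde R,M_{n-2}^{(2)},V)$. The only cosmetic difference is that you phrase the dichotomy in Case 2 as ``ends with $(M,M^{(i)})$ or with $D^{(i)}$'' where the precise criterion is whether the terminal monomer is absorbed into a dimer (a tiling may end with $(D,M^{(i)})$), but this does not affect the values of $\mu_T$ or the conclusion.
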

\begin{figure}
	\begin{center}
		\includegraphics[scale=.45]{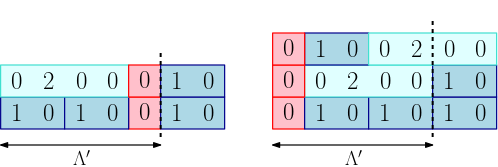}
	\end{center}
\caption{Examples of the restrictions to $\Lambda'$ for root tilings in $\cR_\Lambda^\infty\setminus \cR_{\Lambda}^{MM}$ and $\cR_\Lambda^{MM}$ respectively.}
\label{fig:restrictions2}
\end{figure}
This result will be proved by showing that the configuration bases agree. As such, denote by
\begin{eqnarray}
	\cB_{\Lambda}(R) = \left\{\ket{\sigma_\Lambda(T)} \, | \, T\in\cT_\Lambda(R)\right\}, \label{BVMD-basis}
\end{eqnarray}
the orthonormal basis of $\caC_\Lambda(R)$ with $ R \in \cR_\Lambda $.
\begin{proof}
Consider the two cases separately.
	
	\emph{Case $R\in\cR_\Lambda^{\infty}\setminus\cR_{\Lambda}^{MM}$:} In this case, the replacement rules used to generate the set of tilings $\cT_\Lambda(R)$ will never change the particle content of the last two sites of $\Lambda$, see Figure~\ref{fig:restrictions2}. As a consequence, every tile replacement on $\Lambda$ is in one-to-one correspondence with a tile replacement on $\Lambda'$. Thus,
	\[
	\cB_{\Lambda}(R) = \cB_{\Lambda'}(R')\otimes\ket{\mu}
	\]
	where $\mu = \sigma_\Lambda(R)\restriction_{[b-1,b]}$, and $R'$ is the root-tiling associated to $\sigma_{\Lambda}(R)\restriction_{\Lambda'}$.

	\emph{Case $R\in\cR_\Lambda^{MM}$:} Consider first the case that $R = (\tilde{R},\,M_n^{(2)})$ for some $n\geq 2$ and $\tilde{R}$ as in \eqref{two_monomers}. The particle content of the last two sites for any tiling $T\in\cT_\Lambda(R)$ is either $(1,0)$ if these two sites are covered by a monomer, or $(0,0)$ if the last two monomers are replaced by a bulk dimer. Considering all possible tilings on $\Lambda$, one quickly finds
	\be\label{k=2}
	\cB_\Lambda(R)= \big( \cB_{\Lambda'}(R')\otimes\ket{10} \big) \cup \big( \cB_{\Lambda'}(R_D')\otimes\ket{00}\big) 
	\ee
	where $R' = (\tilde{R},M_{n-1}^{(2)})$ and $R_D' = (\tilde{R},M_{n-2}^{(2)},B_2^r)$, see Figure~\ref{fig:restrictions2}. 
	
	The analogous argument holds when $R = (\tilde{R},\,M_n^{(1)})$, for which
	\be\label{k=1}
	\cB_\Lambda(R)= \big( \cB_{\Lambda'}(R')\otimes\ket{01}\big)  \cup\big( \cB_{\Lambda'}(R_D')\otimes\ket{20}\big) 
	\ee
	where $R' = (\tilde{R},M_{n-1}^{(1)})$ and $R_D' = (\tilde{R},M_{n-2}^{(2)},V)$. 
\end{proof}

A useful corollary for establishing \eqref{MM3} identifies a special orthogonal basis of $\caC_{\Lambda}^{\infty}\cap\big( \caG_{\Lambda'}\otimes \cH_{[b-1,b]}\big) $ with $\Lambda$ and $\Lambda'$ as in Lemma~\ref{lem:tiling_decomp2}. To state the result, we first recall that  $\caG_\Lambda\subseteq \caG_{\Lambda'}\otimes\cH_{[b-1,b]}$ by frustration-freeness, and so
\be\label{ff_containment}
\{\psi_\Lambda(R) \, | \, R\in\cR_\Lambda^{\infty}\} \subseteq \caC_{\Lambda}^{\infty}\cap \big( \caG_{\Lambda'}\otimes\cH_{[b-1,b]}\big) 
\ee
is an orthogonal set of vectors, see~\eqref{BVMD}. Using Lemma~\ref{lem:tiling_decomp2} we extend this set to an orthogonal basis in Corollary~\ref{cor:OB} by adding a set of vectors $\{\xi_\Lambda(R) \, | \, R\in \cR_\Lambda^{MM}\}$, which result from  decomposing $R = (\tilde{R},M_n^{(i)})\in\cR_\Lambda^{MM}$ as in \eqref{two_monomers}. Specifically,
\be\label{xi}
\xi_\Lambda(R) := \psi_{\Lambda(n,i)}(\tilde{R})\otimes \eta_n^{(i)} 
\ee
where $\psi_{\Lambda(n,i)}(\tilde{R})$ is the associated BVMD-state on $\Lambda(n,i):=[a,b-2(n-1)-i]$, 
\be\label{eta}
\eta_{n+1}^{(i)} := -\frac{\overline{\lambda}}{\sqrt{2}}\beta_{n} \vp_{n}\otimes\vp_1^{(i)} + \vp_{n-1}\otimes\ket{\sigma_d^{(i)}} \in\caC_{[1,2n+i]}(M_{n+1}^{(i)}) ,
\ee
and the ingredients defining the RHS above are as in Subsection~\ref{subsec:BVMD_properties}, see specifically \eqref{TT}, \eqref{recursion}, and \eqref{alpha_convergence}. This state is chosen so that $\braket{\eta_n^{(i)}}{\vp_n^{(i)}}=0$ for all $n\geq 2$ and $i\in\{1,2\}$. Like the squeezed Tao-Thouless states, $\eta_n^{(i)}$ is not normalized, but satisfies
\be\label{eta_norm}
\|\eta_n^{(k)}\|^2 = \|\vp_{n-2}\|^2\left[1+\beta_{n-1}\frac{|\lambda|^2}{2}\right]=\frac{\|\vp_{n-3}\|^2}{\beta_n\beta_{n-2}}.
\ee

\begin{cor}[Orthogonal Basis] \label{cor:OB}Let $\Lambda = [a,b]$ with $|\Lambda|\geq 7$, and $\Lambda' = [a,b-2]$. Then, the following is an orthogonal basis for $\caC_{\Lambda}^{\infty}\cap (\caG_{\Lambda'}\otimes\cH_{[b-1,b]})$:
	\be\label{orthogonal_basis}
	 \left\{\psi_{\Lambda}(R) \, | \, R\in\cR_\Lambda^{\infty}\right\}\cup \left\{\xi_{\Lambda}(R)\, | \ R\in\cR_\Lambda^{MM}\right\}.
	\ee
\end{cor}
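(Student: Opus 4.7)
The plan is to establish containment, pairwise orthogonality, and a matching dimension count. I first apply Lemma~\ref{lem:tiling_decomp2} with $\Lambda'=[a,b-2]$ and intersect each summand of $\caC_\Lambda^{\infty}$ with $\caG_{\Lambda'}\otimes\cH_{[b-1,b]}$. Using $\caG_{\Lambda'}\cap\caC_{\Lambda'}(R')=\spa\{\psi_{\Lambda'}(R')\}$ from Theorem~\ref{thm:gss}, this yields
\begin{equation*}
\caC_\Lambda^{\infty}\cap\bigl(\caG_{\Lambda'}\otimes\cH_{[b-1,b]}\bigr) = \bigoplus_{R\in\cR_\Lambda^{\infty}\setminus\cR_\Lambda^{MM}}\!\spa\{\psi_{\Lambda'}(R')\otimes\ket{\mu}\}\;\oplus \bigoplus_{R\in\cR_\Lambda^{MM}}\!\spa\{\psi_{\Lambda'}(R')\otimes\ket{\mu_R},\,\psi_{\Lambda'}(R_D')\otimes\ket{\mu_{R_D}}\},
\end{equation*}
which has dimension $|\cR_\Lambda^{\infty}|+|\cR_\Lambda^{MM}|$ -- exactly the cardinality of the proposed set \eqref{orthogonal_basis}.

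Next, I verify membership of each candidate vector. That $\psi_\Lambda(R)$ lies in the intersection for $R\in\cR_\Lambda^{\infty}$ is immediate from \eqref{ff_containment}. For $\xi_\Lambda(R)$ with $R=(\tilde R,M_n^{(i)})\in\cR_\Lambda^{MM}$, its membership in $\caC_\Lambda(R)\subseteq\caC_\Lambda^{\infty}$ is clear from \eqref{xi} together with $\eta_n^{(i)}\in\caC_{[1,2(n-1)+i]}(M_n^{(i)})$. To place $\xi_\Lambda(R)$ inside $\caG_{\Lambda'}\otimes\cH_{[b-1,b]}$, I separate the last two sites from each term in the definition of $\eta_n^{(i)}$. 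For $i=2$, writing $\vp_1^{(2)}=\ket{10}$ and $\ket{\sigma_d^{(2)}}=\ket{02}\otimes\ket{00}$ gives
\begin{equation*}
\xi_\Lambda(R)=-\tfrac{\overline{\lambda}}{\sqrt{2}}\beta_{n-1}\bigl[\psi_{\Lambda(n,i)}(\tilde R)\otimes\vp_{n-1}\bigr]\otimes\ket{10}+\bigl[\psi_{\Lambda(n,i)}(\tilde R)\otimes\vp_{n-2}\otimes\ket{02}\bigr]\otimes\ket{00},
\end{equation*}
after which the factorization \eqref{fragmentation} identifies the bracketed vectors with $\psi_{\Lambda'}(R')$ and $\psi_{\Lambda'}(R_D')$ for $R'=(\tilde R,M_{n-1}^{(2)})$, $R_D'=(\tilde R,M_{n-2}^{(2)},B_2^r)$, matching the summand in the display above with $\mu_R=(1,0)$ and $\mu_{R_D}=(0,0)$. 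An analogous splitting based on \eqref{vp2_to_vp1} together with $\ket{\sigma_d^{(1)}}=\ket{0}\otimes\ket{20}$ handles the case $i=1$ and produces $R'=(\tilde R,M_{n-1}^{(1)})$, $R_D'=(\tilde R,M_{n-2}^{(2)},V)$, with $\mu_R=(0,1)$ and $\mu_{R_D}=(2,0)$.

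Pairwise orthogonality for distinct roots is automatic from Lemma~\ref{lem:tiling_properties}(1), since $\psi_\Lambda(R)$ and $\xi_\Lambda(R)$ both lie in $\caC_\Lambda(R)$. For fixed $R\in\cR_\Lambda^{MM}$ the relation $\psi_\Lambda(R)\perp\xi_\Lambda(R)$ reduces via $\psi_\Lambda(R)=\psi_{\Lambda(n,i)}(\tilde R)\otimes\vp_n^{(i)}$ to $\braket{\eta_n^{(i)}}{\vp_n^{(i)}}=0$, which is a one-line computation from \eqref{recursion} and the defining identity $\beta_{n-1}\|\vp_{n-1}\|^2=\|\vp_{n-2}\|^2$. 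Combined with the dimension count above and the nonvanishing of $\xi_\Lambda(R)$ implied by \eqref{eta_norm}, this yields the claimed orthogonal basis. The principal point requiring care is tracking the factorization \eqref{fragmentation} across the admissible shapes of $\tilde R$ -- namely $\tilde R=\emptyset$, $\tilde R$ ending in a void $V$, and $\tilde R=(B_2^l)$ with no interior voids -- so that the bracketed tensor factors genuinely coincide with the BVMD states labeled by $R'$ and $R_D'$ in every case; these are handled by the tile-by-tile argument of \cite[Theorem 2.10]{NWY:2021}.
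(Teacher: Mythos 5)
Your proposal is correct and follows essentially the same route as the paper's proof: both decompose $\caC_\Lambda^{\infty}\cap(\caG_{\Lambda'}\otimes\cH_{[b-1,b]})$ root by root via Lemma~\ref{lem:tiling_decomp2} and Theorem~\ref{thm:gss} to get the dimension count ($1$ for $R\notin\cR_\Lambda^{MM}$, $2$ for $R\in\cR_\Lambda^{MM}$), use \eqref{ff_containment} for the $\psi_\Lambda(R)$, identify $\xi_\Lambda(R)$ with the span of $\psi_{\Lambda'}(R')\otimes\ket{\mu_R}$ and $\psi_{\Lambda'}(R_D')\otimes\ket{\mu_{R_D}}$ through the factorization of $\eta_n^{(i)}$, and conclude orthogonality from $\braket{\vp_n^{(i)}}{\eta_n^{(i)}}=0$. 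Your treatment of the $i=1,2$ cases is merely a more explicit spelling-out of the two identities the paper states directly.
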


\begin{proof} Just as in \eqref{gs_direct_sum}, the mutual orthogonality of the BVMD-spaces and the direct sum decomposition from \eqref{def:IVBVMD} guarantee that
	\[
	\caC_{\Lambda}^{\infty}\cap (\caG_{\Lambda'}\otimes\cH_{[b-1,b]}) = \bigoplus_{R\in\cR_{\Lambda}^{\infty}} \caC_\Lambda(R)\cap(\caG_{\Lambda'}\otimes\cH_{[b-1,b]}).
	\]
Since each $\caC_{\Lambda'}(R')$ supports a unique ground state of $\caG_{\Lambda'}$, by Lemma~\ref{lem:tiling_decomp2}  \[\dim\left(\caC_\Lambda(R)\cap\caG_{\Lambda'}\otimes\cH_{[b-1,b]}\right) = \begin{cases} 1 & \mbox{if} \; R\in \cR_\Lambda^{\infty}\setminus\cR_{\Lambda}^{MM}, \\ 2 & \mbox{if} \;  R\in \cR_{\Lambda}^{MM} .
\end{cases}
\]
Given \eqref{ff_containment}, one only needs to consider $R=(\tilde{R},M_n^{(i)})\in\cR_\Lambda^{MM}$ to complete the orthogonal basis. Using the notation from Lemma~\ref{lem:tiling_decomp2} (see also \eqref{k=2}-\eqref{k=1}) one can verify that for such $R$, 
\begin{align}
\psi_{\Lambda'}(R')\otimes\ket{\mu_R} & = \psi_{\Lambda(n,i)}(\tilde{R})\otimes\vp_{n-1}\otimes\vp_1^{(i)} \\
\psi_{\Lambda'}(R_D')\otimes\ket{\mu_{R_D}} & = \psi_{\Lambda(n,i)}(\tilde{R})\otimes\vp_{n-2}\otimes\ket{\sigma_d^{(i)}} 
\end{align}
and so $\xi_\Lambda(R)\in\caC_\Lambda(R)\cap(\caG_{\Lambda_1}\otimes\cH_{[b-1,b]}).$ By construction $\xi_\Lambda(R)$ and $\psi_\Lambda(R)$ are orthogonal since $\psi_\Lambda(R)=\psi_{\Lambda(n,i)}(\tilde{R})\otimes\vp_{n}^{(i)}$ and $\braket{\vp_n^{(i)}}{\eta_n^{(i)}} = 0$. Thus, the result holds as stated.
\end{proof}

We conclude this subsection with the following lemma, which constitutes the core of the proof of \eqref{MM3} in Theorem~\ref{thm:tiling_gap}.

\begin{lem}[Overlap]\label{lem:epsilon}
	Fix  $\Lambda = [a,b]$ with $|\Lambda|\geq 7$ and set $\Lambda_1 =[a,b-2]$ and $\Lambda_2 = [b-5,b]$. The ground-state projections satisfy the norm bound
	\be
	\label{norm_bound}
	\|G_{\Lambda_2}(\1-G_\Lambda)G_{\Lambda_1}\|_{\caC_{\Lambda}^{\infty}} \leq \sqrt{f(|\lambda|^2/2)}.
	\ee
	where $f$  is as in Theorem~\ref{thm:tiling_gap}.
\end{lem}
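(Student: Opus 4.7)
The strategy will be to decompose $\caC_\Lambda^{\infty} = \bigoplus_{R \in \cR_\Lambda^\infty} \caC_\Lambda(R)$ and reduce the operator norm to a rank-one problem on each summand. Each $\caC_\Lambda(R)$ is invariant not only under $H_\Lambda$ (Lemma~\ref{lem:tiling_properties}) but also under $H_{\Lambda_1}$ and $H_{\Lambda_2}$, and hence under their spectral projections $G_\Lambda, G_{\Lambda_1}, G_{\Lambda_2}$. Lemma~\ref{lem:tiling_decomp2} reveals two cases. For $R \in \cR_\Lambda^\infty \setminus \cR_\Lambda^{MM}$, one has $\caC_\Lambda(R) = \caC_{\Lambda_1}(R') \otimes \ket{\mu}_{[b-1,b]}$, so $G_{\Lambda_1}$ projects this subspace onto $\text{span}\{\psi_{\Lambda_1}(R') \otimes \ket{\mu}\} = \text{span}\{\psi_\Lambda(R)\} \subseteq \caG_\Lambda$, and consequently $(\1-G_\Lambda)G_{\Lambda_1} = 0$ on $\caC_\Lambda(R)$. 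For $R \in \cR_\Lambda^{MM}$, Corollary~\ref{cor:OB} provides an orthogonal basis $\{\psi_\Lambda(R), \xi_\Lambda(R)\}$ of $\caG_{\Lambda_1} \otimes \cH_{[b-1,b]} \cap \caC_\Lambda(R)$ with $\psi_\Lambda(R) \in \caG_\Lambda$ and $\xi_\Lambda(R) \perp \caG_\Lambda$. Thus $(\1-G_\Lambda)G_{\Lambda_1}$ restricted to $\caC_\Lambda(R)$ is precisely the orthogonal projection onto $\text{span}\{\xi_\Lambda(R)\}$, and
\[
\|G_{\Lambda_2}(\1 - G_\Lambda)G_{\Lambda_1}\|_{\caC_\Lambda(R)}^2 \;=\; \frac{\|G_{\Lambda_2}\xi_\Lambda(R)\|^2}{\|\xi_\Lambda(R)\|^2}.
\]

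To evaluate this ratio, I would use the factorization $\xi_\Lambda(R) = \psi_{\Lambda(n,i)}(\tilde R) \otimes \eta_n^{(i)}$ where $R = (\tilde R, M_n^{(i)})$ with $n \geq 2$. For $n \geq 4$, the state $\psi_{\Lambda(n,i)}(\tilde R)$ is supported entirely left of $\Lambda_2$, so the left factor passes through $G_{\Lambda_2}$ untouched and the ratio reduces to $\|G_{\Lambda_2}\eta_n^{(i)}\|^2/\|\eta_n^{(i)}\|^2$. The denominator is known explicitly from \eqref{eta_norm}. For the numerator, the plan is to expand $\vp_{n-1}$ and $\vp_{n-2}$ in the definition of $\eta_n^{(i)}$ via the recursion relations \eqref{recursion_general} and \eqref{recursion} at the cut between $[b-2n-i+3, b-6]$ and $\Lambda_2 = [b-5,b]$. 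Grouping the resulting terms by their left tensor factor, which can be shown to collapse to finitely many mutually orthogonal building blocks (e.g.\ $\vp_{n-3}$ and $\vp_{n-4} \otimes \ket{02}$, distinguished by the parity/occupation of the cut sites), one obtains an orthogonal expansion $\eta_n^{(i)} = \sum_k \phi_k \otimes \chi_k$ where each $\chi_k$ lies in a single BVMD-space on $\Lambda_2$. Applying $G_{\Lambda_2}$ amounts to projecting each $\chi_k$ onto the one-dimensional ground space of its BVMD-component, which produces explicit scalar factors of the form $\bar\lambda[1 - \beta_{n-1}(1+r)]/\sqrt{2}$ and $r(1-\beta_{n-1})$ divided by the norms $1+2r$ and $1+r$ of the local ground states.

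Assembling these pieces and dividing by $\|\eta_n^{(i)}\|^2 = \|\vp_{n-3}\|^2/(\beta_n\beta_{n-2})$ produces an expression recognizable as a bound by $f_n(r)$ in \eqref{def:fn}; the factor $2$ in front of $(1-\beta_{n-1})^2/(1+r)$ is introduced by using a Cauchy–Schwarz-type inequality $|a+b|^2 \leq 2|a|^2 + 2|b|^2$ at one step of the bookkeeping, which simultaneously absorbs the $i=1$ vs.\ $i=2$ cases and the residual $\beta_{n-3}$ factor. The remaining cases $n \in \{2, 3\}$, where the left factor $\psi_{\Lambda(n,i)}(\tilde R)$ extends into $\Lambda_2$, need to be treated separately by a direct calculation; one verifies that they yield a strictly smaller contribution (indeed, $\eta_3^{(2)}$ is orthogonal to $\vp_3$ by construction, contributing $0$), so the supremum over $R \in \cR_\Lambda^{MM}$ reduces to $\sup_{n \geq 4} f_n(r) = f(r)$, and taking a square root delivers~\eqref{norm_bound}. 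The main obstacle will be Step 2 of this program: carrying out the cut-expansion of $\eta_n^{(i)}$ cleanly enough that the resulting terms distribute into distinct BVMD-spaces on $\Lambda_2$ (so that no cross-term survives), and verifying that the algebra genuinely matches $f_n(r)$.
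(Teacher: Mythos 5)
Your proposal is correct and follows essentially the same route as the paper: reduce via the orthogonal basis of Corollary~\ref{cor:OB} to the ratio $\|G_{\Lambda_2}\xi_\Lambda(R)\|^2/\|\xi_\Lambda(R)\|^2$, factor $\xi_\Lambda(R)=\psi_{\Lambda(n,i)}(\tilde R)\otimes\eta_n^{(i)}$, treat $n\leq 3$ by orthogonality of $\eta_n^{(i)}$ to the local ground state (giving zero) and $n\geq 4$ by expanding $\eta_n^{(i)}$ with the recursion relations and projecting onto the two relevant BVMD-spaces on $\Lambda_2$. The only cosmetic differences are that you phrase $(\1-G_\Lambda)G_{\Lambda_1}$ blockwise as the rank-one projection onto $\xi_\Lambda(R)$ rather than as a supremum over $\caG_\Lambda^\perp\cap\cG_{\Lambda_1}^\infty$, and that you obtain the coefficient $2$ in $f_n$ by an $|a+b|^2\leq 2|a|^2+2|b|^2$ estimate where the paper's exact orthogonal computation gives a smaller constant -- both still yield the claimed bound.
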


\begin{proof} The subspace $\cG_{\Lambda_1}^\infty := \caC_\Lambda^\infty\cap (\caG_{\Lambda_1}\otimes\cH_{[b-1,b]})$ is of the form considered in Corollary~\ref{cor:OB}. Comparing the orthogonal basis from that result with the orthogonal basis for $\cG_\Lambda$ in Theorem~\ref{thm:gss}, it is clear that $\caG_\Lambda^\perp \cap \cG_{\Lambda_1}^\infty = \spa \{\xi_\Lambda(R) |  R\in\cR_\Lambda^{MM}\}$ and
	\be\label{norm}
	\|G_{\Lambda_2}(\1-G_\Lambda)G_{\Lambda_1}\|_{\caC_{\Lambda}^{\infty}}^2 
	= 
	\sup_{0\neq \psi \in\caG_\Lambda^\perp \cap \cG_{\Lambda_1}^\infty }
	\frac{\|G_{\Lambda_2}\psi\|^2}{\|\psi\|^2}.
	\ee
	
	Recalling the identification from \eqref{idenfification}, $G_{\Lambda_2}\psi$ for any $\psi\in\cG_\Lambda^\perp\cap\caG_{\Lambda_1}^\infty$ can be expanded via Theorem~\ref{thm:gss} as
	\be\label{G_action}
	G_{\Lambda_2}\psi =  \sum_{R\in\cR_{\Lambda_2}^{\infty}}\frac{\ketbra{\psi_{\Lambda_2}(R)} }{\|\psi_{\Lambda_2}(R)\|^2}  \psi.
	\ee
	where we need only sum over $R\in\cR_{\Lambda_2}^\infty$, since $\psi$ is supported on $ \bZ $-induced tiling states. We first compute $G_{\Lambda_2}\xi_\Lambda(R)$ for an arbitrary $R\in\cR_\Lambda^{MM}$ and use this to bound $\|G_{\Lambda_2}\psi\|^2$ for an arbitrary state $\psi\in\caG_\Lambda^\perp \cap \cG_{\Lambda_1}^\infty$. Recalling the factored form $\xi_\Lambda(R) = \psi_{\Lambda(n,i)}(\tilde{R})\otimes \eta_n^{(i)}$ from \eqref{xi} and denoting by $\Gamma(n,i):=\Lambda\setminus \Lambda(n,i)$ the support of $\eta_n^{(i)}$, we consider two cases distinguished by $\Gamma(n,i)\subseteq \Lambda_2$, which holds for $n\leq 3$, and $\Lambda_2 \subsetneq \Gamma(n,i)$, which holds for $n\geq 4$.

Assume $n\leq 3$. Then $\Gamma(n,i)\subseteq \Lambda_2$, and so $G_{\Lambda_2}=G_{\Lambda_2}G_{\Gamma(n,i)}$ by frustration-freeness, and hence
\[
G_{\Lambda_2}\xi_\Lambda(R)= G_{\Lambda_2}\left(\psi_{\Lambda(n,i)}(\tilde{R})\otimes G_{\Gamma(n,i)}\eta_n^{(i)}\right) = 0 ,
\]
where the final equality holds since $\eta_{n}^{(i)}\in \caC_{\Gamma(n,i)}(M_n^{(i)})$ is orthogonal to the unique ground state $\vp_{n}^{(i)}\in \caC_{\Gamma(n,i)}(M_n^{(i)})$, and so the pairwise orthogonality of the BVMD-tiling spaces and analogous expansion from \eqref{G_action} for $\Gamma(n,i)$ guarantees
\[
G_{\Gamma(n,i)}\eta_n^{(i)} = \frac{\ketbra{\vp_n^{(i)}}\eta_n^{(i)} \rangle}{\|\vp_n^{(i)}\|^2}= 0.
\]

\begin{figure}
	\begin{center}
		\includegraphics[scale=.35]{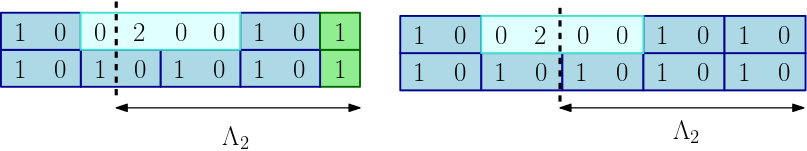}
	\end{center}
	\caption{The root tilings $R^i$ and $R^i_D$ for $i=1,2$, respectively.}
	\label{fig:Lambda_2_truncation}
\end{figure}	
If $n\geq4$, then $\Lambda_2 \subsetneq \Gamma(n,i)$ and we need only consider $G_{\Lambda_2}\eta_n^{(i)}$ since by \eqref{idenfification}
	\be\label{ngeq4}
	G_{\Lambda_2}\xi_\Lambda(R) = \psi_{\Lambda(n,i)}(\tilde{R})\otimes G_{\Lambda_2}\eta_n^{(i)}.
	\ee
To this end, notice that the restriction of any tiling $T\leftrightarrow M_n^{(i)}$ to $\Lambda_2$ produces a tiling $T'$ connected to one of two root tilings, $R^i,R_D^i\in\cR_{\Lambda_2}^\infty$, determined by whether or not $T$ has a dimer laying across the boundary of $\Lambda_2$ as in Figure~\ref{fig:Lambda_2_truncation}. Concretely, the root tilings are: 
\be\label{G_roots}
R^i = \begin{cases}
(V,M_3^{(1)}) & i = 1 , \\
M_3^{(2)} & i= 2 , 
\end{cases}
\qquad
R^i_D = \begin{cases}
(B_2^l,M_2^{(1)}), & i = 1 , \\
(V,V,M_2^{(2)}) & i= 2 .
\end{cases}
\ee

	
Using \eqref{G_action} to evaluate $G_{\Lambda_2}\eta_n^{(i)}$, the mutual orthogonality of the BVMD-spaces combined with \eqref{G_roots} reduce the calculation to
	\begin{align}\label{G_on_eta}
		G_{\Lambda_2}\eta_n^{(i)} & = \sum_{R'\in\{R^{i},R_D^{i}\}}\frac{\ketbra{\psi_{\Lambda_2}(R')}}{\|\psi_{\Lambda_2}(R')\|^2}\left(-\frac{\overline{\lambda}}{\sqrt{2}}\beta_{n-1} \vp_{n-1}\otimes\vp_1^{(i)} + \vp_{n-2}\otimes\ket{\sigma_d^{(i)}}\right)
	\end{align}
	where we have inserted the expansion of $\eta_n^{(i)}$ from \eqref{eta}. Applying the recursion relations~\eqref{recursion_general}-\eqref{recursion} to further expand $\eta_n^{(i)}$ and $\psi_{\Lambda_2}(R')$, one can compute \eqref{G_on_eta} to find
	\begin{align}\label{G2_calculation}
		G_{\Lambda_2}\eta_n^{(i)} = \frac{\overline{\lambda}\left(1-\beta_{n-1}\|\vp_2\|^2\right)}{\sqrt{2}\|\vp_3^{(i)}\|^2}\vp_{n-3}\otimes\vp_{3}^{(i)} + \frac{|\lambda|^2\left(1-\beta_{n-1}\right)}{2\|\vp_2^{(i)}\|^2}\vp_{n-4}\otimes\ket{\sigma_d}\otimes\vp_2^{(i)}
	\end{align}
	which is a sum of orthogonal vectors from $\caC_{\Gamma(n,i)}(M_n^{(i)}).$ The coefficients in \eqref{G2_calculation} are independent of $i$ by \eqref{vp2_to_vp1}. Calculating these explicitly and applying \eqref{eta_norm} then produces
	\[
	\|G_{\Lambda_2}\eta_{n}^{(i)}\|^2 = f_n(|\lambda|^2/2)\|\eta_n^{(i)}\|^2 \quad \mbox{with}\quad f_n(r) = r\beta_n\beta_{n-2}\left(\frac{[1-\beta_{n-1}(1+r)]^2}{1+2r}+\frac{r(1-\beta_{n-1})^2}{1+r}\right).
	\]
		
	 We further conclude that the action of $G_{\Lambda_2}$ on $\{\xi_\Lambda(R) | R\in\cR_\Lambda^{MM}\}$ preserves orthogonality since $G_{\Lambda_2}\xi_\Lambda(R) \in \caC_\Lambda(R)$ for any $R = (\tilde{R},M_n^{(i)})\in\cR_\Lambda^{MM}$ by \eqref{ngeq4} and \eqref{G2_calculation}. In addition, for each such $R$ the previous equality implies
	\[
	\|G_{\Lambda_2}\xi_\Lambda(R) \|^2 \leq \sup_{m\geq4}f_m(|\lambda|^2/2)\|\psi_{\Lambda(n,i)}(\tilde{R})\|^2\|\eta_n^{(i)} \|^2 = f(|\lambda|^2/2)\|\xi_\Lambda(R)\|^2 ,
	\]
	and so combining these two observations shows $\|G_{\Lambda_2}\psi\|^2 \leq f(|\lambda|^2/2) \|\psi\|^2$ for any $\psi\in\cG_\Lambda^\perp\cap\cG_{\Lambda_1}^\infty$, and the claim holds by \eqref{norm}.
\end{proof}

\subsection{Proof of Theorem~\ref{thm:tiling_gap}}\label{sec:proofMM}	
We now prove the main result of Section~\ref{sec:MM}.  First, if Conditions 1-3 of Theorem~\ref{thm:tiling_gap} hold, then applying \cite[Theorem~5.1]{nachtergaele:2016b} to the restriction $H_N^{\infty} = H_N\restriction_{\caC_\Lambda^{\infty}}$ produces,
\[
\inf_{0\neq \psi\in\caC_\Lambda^\infty \cap \ker(H_N)^\perp}\frac{\braket{\psi}{H_N^\infty\psi}}{\|\psi\|^2} \geq 2\kappa\left(1-\sqrt{3f(|\lambda|^2/2)}\right)^2,
\]
and \eqref{tiling_gap} follows by \eqref{equivalent_hams}. Thus, we only need to verify the three claims and the bound on $ f $.

1. For all $n$, $\caC_\Lambda^{\infty}$ is an invariant subspace of $h_n$. Its restriction $ h_n^{\infty} $ is defined as in~\eqref{blocks},
and the associated ground state projection is given by $g_n^{\infty}=  P_{\caC_{\Lambda}^{\infty}} g_n P_{\caC_{\Lambda}^{\infty}}$. Using Corollary~\ref{cor:reductions} it follows that
\[
h_n^{\infty} \geq E_1(\caC_{\Lambda_n}^{\infty})(\1-g_n^{\infty}) \geq \min_{k=5,6}E_1(\caC_{[1,k]}^{\infty})(\1-g_n^{\infty})
\]
where we have used translation invariance and $|\Lambda_n|\in\{5,6\}$ for the second inequality. Condition 1 then holds after computing the above minimum. Recalling that
\[
E_1(\caC_{[1,k]}^\infty) = \min\{E_1(\caC_{[1,k]}(R)) |  R\in\cR_{[1,k]}^\infty\},
\]
we compute $E_1(\caC_{[1,k]}(R))$ for all possible choices for $R$. By convention, $E_1(\caC_\Lambda(R)) =\infty$ for any root where $\dim(\caC_\Lambda(R)) = 1$ as such subspaces are contained in the ground-state space. The condition $\dim(\caC_\Lambda(R))>1$ requires that $R$ has two or more neighboring monomers.  Up to a factor of $\kappa$, the restriction $H_{[1,k]}\restriction_{\caC_{[1,k]}(R)}$ for any root with exactly two consecutively monomers is unitarily equivalent to the matrix from \eqref{MM_action}, which has a gap of $\kappa(|\lambda|^2+2)$. The restriction $H_{[1,k]}\restriction_{\caC_{[1,k]}(R)}$ for any root with three consecutive monomers is unitarily equivalent to the matrix
\be\label{eq:bmatrix}
\begin{bmatrix}
	2\kappa|\lambda|^2 & -\sqrt{2}\kappa\overline{\lambda} & -\sqrt{2}\kappa\overline{\lambda}\\
	-\sqrt{2}\kappa\lambda & 2\kappa & 0 \\
	-\sqrt{2}\kappa\lambda & 0 & 2\kappa
\end{bmatrix},
\ee
which has a gap $2\kappa$. This verifies Condition 1 since an interval $[1,k]$ with $k\leq 6$ can hold at most three monomers.

2. First, notice that the operators $g_n$ and $E_m=G_m-G_{m+1}$ as in Section~\ref{sec:MM} are defined so that
\[
\supp(g_n) = \Lambda_n, \quad \supp(E_m) \subseteq \supp(H_{m+1}) = [1,2m+2+k].
\]
Moreover, by construction $\ran(g_n) = \ker(H_{\Lambda_n})$ and $\ran(G_m) = \ker(H_{[1,2m+k]})$ and so by frustration-freeness $\ker(h_n)\subseteq \ker(H_m)$ for all $m\geq n$. As a consequence, $g_nG_m=G_m$ and $[g_n,E_m]=0$ for all $m\geq n$. In addition, $[g_n,E_m]=0$  for $m\leq n-4$ since $\supp(E_m)\subseteq[1,2m-6+k]$ which is disjoint from $\Lambda_n$. Summarizing,
\be\label{comm}
[g_n,E_m]\neq 0 \; \mbox{only if} \; m\in[n-3,n-1].
\ee
Since $\caC_\Lambda^{\infty}$ is an invariant subspace of the operators $g_n$ and $E_m$, they can be block diagonalized as in \eqref{blocks}, and one concludes that the commutator relations in \eqref{comm} are inherited by the respective blocks $E_m^{\infty}$ and $g_n^{\infty}$. Hence, Condition 2 holds.

3. Fix $2\leq n\leq N-1$. Since $\caC_\Lambda^{\infty}$ is an invariant subspace of both $g_{n+1}$ and $E_{n}$, \eqref{blocks} applies to both operators and $g_{n+1}^\infty E_n^\infty = (g_{n+1}E_n)\restriction_{\caC_\Lambda^\infty}$. The product $g_{n+1}E_n$ is supported on $[1,2n+2+k]$, and so Corollary~\ref{cor:reductions} yields
\[
\|g_{n+1}^{\infty}E_n^{\infty}\| = \|g_{n+1}E_n\|_{\caC_\Lambda^{\infty}} = \|g_{n+1}E_n\|_{\caC_{[1,2n+2+k]}^{\infty}}.
\]
Once again applying frustration-freeness, the product factored as
\[
g_{n+1}E_n = G_{\Lambda_{n+1}}(G_{[1,2n+k]}-G_{[1,2n+2+k]}) = G_{\Lambda_{n+1}}(\1-G_{[1,2n+2+k]})G_{[1,2n+k]},
\]
which is of the form considered in Lemma~\ref{lem:epsilon}. This produces
\[
\|g_{n+1} E_n\|_{\caC_{[1,2n+2+k]}^{\infty}} \leq \sqrt{f(|\lambda|^2/2)},
\]
verifying Condition 3. 

4. The function $f(r)$ was analyzed in \cite[Appendix A]{NWY:2020}, where it was shown that $f(r^2)<1/3$ for $r\in[0,5.3]$. The claimed bound on $f$ is immediate from taking $r=|\lambda|/\sqrt{2}$.

\section{Proof of the uniform bulk gap}\label{sec:UBG}

The goal of this section is to present a method that establishes the robust bulk gap for the periodic Hamiltonian $ H_\Lambda^\textrm{per} $ asserted in Theorem~\ref{thm:main2}. In order to avoid the issue of edge states, we will proceed as sketched in Subsection~\ref{sec:invariant}. Using
\begin{equation}\label{eq:bulkgapest}
	E_1^\textrm{per}(\cH_\Lambda) = \min\left\{ E_1^\textrm{per}(\caC_\Lambda^\textrm{per}) , E_0^\textrm{per}\left(\big(\caC_\Lambda^\textrm{per}\big)^\perp\right)\right\} , 
\end{equation}
we reduce the proof of the lower bound on the spectral gap to proving separately a lower bound on the spectral gap of $  H_\Lambda^\textrm{per} $  restricted to the invariant subspace $ \caC_\Lambda^\textrm{per} $ of periodic tiling states introduced in Subsection~\ref{sec:periodic_tilings}, and a lower bound on the ground-state energy in the complement subspace. We start with a bound on the former quantity, and follow up with electrostatic estimates on the latter in Subsection~\ref{sec:electro2}. 

The a priori decomposition of the Hilbert space in terms of the invariant subspace $ \caC_\Lambda^\textrm{per}$ such that \eqref{eq:bulkgapest} holds is key to the proof of the uniform bulk gap in Theorem~\ref{thm:main2} since the Hamiltonian $ H_{\Lambda'} $ with open boundary conditions on any interval $ \Lambda' \subsetneq  \Lambda $ is free of edge states in this subspace. As long as one can find such a decomposition, this idea is robust and also applicable to the bulk gap question for other models with edge states, such as the one studied in~\cite{NWY:2021}.

\subsection{Finite-volume criteria avoiding edge states}

As in the application of the martingale method in Section~\ref{sec:MM}, for the finite-size criterion we will consider the restriction of operators $A:\cH_\Lambda\to\cH_\Lambda$ supported on subintervals $\Lambda'\subsetneq\Lambda$ to the periodic tiling space $\caC_\Lambda^\per$. In the case that $\caC_\Lambda^\per$ is an invariant subspace of $A$, the operator can be block diagonalized similar to \eqref{blocks} as
\be\label{per_blocks}
A := A\restriction_{\caC_\Lambda^\per} + (\1-P_\Lambda^\per)A(\1-P_\Lambda^\per), \quad A\restriction_{\caC_\Lambda^\per} = P_\Lambda^\per A P_\Lambda^\per 
\ee
where $P_\Lambda^\per$ is the orthogonal projection onto $\caC_\Lambda^\per$. In particular, such a decomposition holds for the Hamiltonian $H_{\Lambda'}$ as well as its ground state projection $G_{\Lambda'}$ for any subinterval $\Lambda' \subseteq \Lambda$ in the ring geometry.

An analysis similar to that of Section~\ref{sec:truncations} also applies to restrictions of the periodic tiling space $\caC_\Lambda^\per$ to subintervals $\Lambda'\subseteq\Lambda$. 
The truncation of any periodic tiling on the ring $\Lambda$ to an interval $\Lambda'\subseteq\Lambda$ produces a $\bZ$-induced tiling on $\Lambda'$. 
 Conversely, if $|\Lambda|\geq |\Lambda'|+3$ every $\bZ$-induced tiling on $\Lambda'$ can be realized as a truncation of a periodic tiling, from which the following variation of Lemma~\ref{lem:tiling_decomp} can be deduced. Namely,
 \[ 
 \caC_\Lambda^\textrm{per} = \bigoplus_{R'\in\cR_{\Lambda'}^{\infty}}\bigoplus_{\substack{\ket{\mu}\in\cB_{\Lambda}^\textrm{per} \, : \\ \mu = ( \sigma_{\Lambda'}(R'), \, \mu^{\Lambda\backslash \Lambda'} )}} \caC_{\Lambda'}(R')\otimes\ket{ \mu^{\Lambda\backslash \Lambda'} },
 \]
 where $ \cB_\Lambda^\textrm{per} $ is the orthonormal configuration basis of periodic tilings. 
 The size constraint here guarantees that bulk tiles can be placed in $\Lambda$ to produce all possible combinations of $\bZ$-induced boundary tiles on $\Lambda'$. A minor adaptation of the argument in Corollary~\ref{cor:reductions} then produces the following isospectral relation for any subinterval $ \Lambda' \subset \Lambda $ of the ring geometry such that $|\Lambda'|\leq |\Lambda|-3$:
\begin{equation}\label{eq:isospectral}
	\spec \left(H_{\Lambda'} \otimes \mathbbm{1}_{\Lambda\backslash \Lambda'} \restriction_{  \caC_\Lambda^\textrm{per}  }\right) \ = \ \spec \left(H_{\Lambda'}  \restriction_{  \caC_{\Lambda'}^\infty }\right)  \ = \ \spec \left(H_{\Lambda'}^\infty\right) . 
\end{equation}

Of particular interest, the spectral gap of the restriction on the LHS of \eqref{eq:isospectral} agrees with the spectral gap of the restriction on the RHS:
\begin{equation}\label{eq:iso2}
	\inf_{0\neq\psi\in \caC_\Lambda^\textrm{per}\cap (\caG_{\Lambda'}^\perp \otimes \cH_{\Lambda\backslash \Lambda'} )  }  \frac{\braket{\psi}{H_{\Lambda'}\psi}}{\|\psi\|^2}   = E_1(\caC_{\Lambda'}^\infty) .
\end{equation}
A lower bound on the RHS was the topic of Theorem~\ref{thm:tiling_gap}. Moreover, we recall from the proof of this theorem in Subsection~\ref{sec:proofMM} that the spectral gap of $ H_{\Lambda'}^\infty  $ on any interval $ \Lambda' $ of size five or six is at least $ 2 \kappa $, and the operator norm of this restriction is bounded by the largest eigenvalue of the matrix~\eqref{eq:bmatrix}, i.e.
\begin{equation}\label{def:gamma}
	\gamma :=  \inf_{|\Lambda'| \in \{ 5,6 \} }  E_1(\caC_{\Lambda'}^\infty) = 2\kappa , \qquad   \Gamma:= \sup_{|\Lambda'| \in \{ 5,6 \} } \left\| H_{\Lambda'}^\infty \right\| = 2 \kappa (1 + |\lambda|^2) .
\end{equation}

With this in mind, we may now formulate the following finite-volume criterion for the gap $ E_1^\textrm{per}(\caC_\Lambda^\textrm{per}) $. This criterion is an improvement of~\cite[Theorem 3.11]{NWY:2021}. Its proof below closely follows the strategy used in this previous work and relies on the version of Knabe's method~\cite{knabe:1988} from~\cite[Theorem~3.10]{NWY:2021}. 
\begin{theorem}[Periodic Spectral Gap]\label{thm:pbc_gap} Fix $n\geq 2$. Then for any ring $\Lambda$ such that $|\Lambda| \geq 3n+6$,
	\begin{equation}\label{eq:pergap3}
		E_1^\per(\caC_{\Lambda}^\per)  \geq \frac{\gamma\,  n}{2\Gamma(n-1)}\left[\min_{1\leq l\leq3}  E_1(\caC_{[ 1,3n  + l]}^\infty)-\frac{\Gamma}{n}\right] . 
	\end{equation}
\end{theorem}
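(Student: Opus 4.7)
The plan is to adapt the Knabe-type finite-volume criterion of~\cite[Theorem~3.10]{NWY:2021} to the restriction $H_\Lambda^\per\restriction_{\caC_\Lambda^\per}$ rather than the full periodic Hamiltonian. The decisive advantage is that, by the isospectral identity~\eqref{eq:iso2}, the local finite-volume gap that feeds the Knabe estimate becomes $E_1(\caC_{\Lambda'}^\infty)$ — the gap of the open-boundary Hamiltonian restricted to bulk tilings — precisely the quantity one can control via the martingale method of Section~\ref{sec:MM}. This is what allows the bulk-gap bound to avoid the low-energy edge states, whereas a naive application of Knabe to $H_\Lambda^\per$ on all of $\cH_\Lambda$ would be spoiled by edge states of the open-boundary windows.

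Concretely, I would proceed as follows. First, cover the ring $\Lambda$ by overlapping intervals of length $3n+l$, $l\in\{1,2,3\}$, chosen so that every interaction term $n_xn_{x+1}$ and $q_x^*q_x$ on the ring is contained in the same fixed number of consecutive windows; this is possible as soon as $|\Lambda|\geq 3n+6$, which is the stated size hypothesis. Second, observe that every $H_{\Lambda'}$ with $\Lambda'\subseteq \Lambda$ preserves $\caC_\Lambda^\per$, so the block decomposition~\eqref{per_blocks} passes to each windowed Hamiltonian and to all products appearing in the squaring argument. Third, square the restriction: following~\cite[Theorem~3.10]{NWY:2021}, expand $\bigl(H_\Lambda^\per\restriction_{\caC_\Lambda^\per}\bigr)^2$ as a sum of local squares and cross terms, then use $h_x^2\leq \Gamma\, h_x$ on the five- or six-site elementary pieces — a consequence of the operator-norm bound in~\eqref{def:gamma} — to control the diagonal contribution and regroup the off-diagonal terms into the windowed Hamiltonians of length $3n+l$. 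Fourth, apply~\eqref{eq:iso2} on each window to replace the windowed gap by $E_1(\caC_{[1,3n+l]}^\infty)$, and minimize over $l\in\{1,2,3\}$ to match the boundary placement of the overlap. A final rearrangement using the elementary-block gap $\gamma=2\kappa$ and norm $\Gamma$ from~\eqref{def:gamma}, together with the standard $n/(n-1)$ factor of Knabe's bookkeeping, yields the advertised prefactor $\gamma n/(2\Gamma(n-1))$ and the subtracted term $\Gamma/n$.

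The main obstacle is the careful handling of the subspace restriction throughout. The Knabe argument relies on operator identities and inequalities for sums, products, and commutators of local Hamiltonians and ground-state projections; each of these must be compatible with the block decomposition~\eqref{per_blocks}, i.e., every operator in the argument must leave $\caC_\Lambda^\per$ invariant so that restriction and multiplication commute. This is automatic for the $H_{\Lambda'}$'s and the associated ground-state projections $G_{\Lambda'}$, since $\caC_\Lambda^\per$ is an invariant subspace for every bond electrostatic term and every hopping term on the ring; consequently, every commutator and product appearing in the Knabe calculation can be viewed purely inside $\caC_\Lambda^\per$, where the isospectral identification~\eqref{eq:iso2} converts the windowed spectral data into the bulk-tiling gap $E_1(\caC_{[1,3n+l]}^\infty)$. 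Once this compatibility is verified, the remainder is the algebraic manipulation already carried out in~\cite[Theorem~3.10]{NWY:2021}, now fed the restricted-space quantities~\eqref{def:gamma} and~\eqref{eq:iso2} in place of their full-space analogues.
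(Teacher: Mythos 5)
Your proposal is correct and follows essentially the same route as the paper: restrict everything to the invariant subspace $\caC_\Lambda^\per$ via~\eqref{per_blocks}, use the isospectral identity~\eqref{eq:iso2} to convert the finite-volume gaps into the edge-state-free bulk-tiling gaps $E_1(\caC_{[1,3n+l]}^\infty)$, and feed these into the Knabe-type criterion of~\cite[Theorem~3.10]{NWY:2021} with the constants $\gamma,\Gamma$ from~\eqref{def:gamma}. The only (inessential) difference is bookkeeping: the paper first tiles the ring by elementary $5$--$6$ site blocks $\Lambda_i$, replaces each restricted block Hamiltonian by a projection via $\gamma P_i\leq H_{\Lambda_i}\restriction_{\caC_\Lambda^\per}\leq\Gamma P_i$, and applies the projection version of the criterion to $\sum_i P_i$ before translating back, whereas you describe the underlying squaring argument directly on the windowed Hamiltonians.
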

\begin{proof}
	By translation invariance, it is sufficient to consider the ring associated to $\Lambda := [1,3N+r]$ where $N>0$ and $r\in\{2,3,4\}$ are the unique integers so that $|\Lambda|=3N+r$. Let $\Lambda_1, \ldots, \Lambda_{N+1}$ be the sequence of subintervals of $\Lambda$ defined by
	\begin{align*}
	\Lambda_{i}  & = [3i-2,3i+2] \;\; \text{for} \;\; 1\leq i \leq N \;\; \text{and} \;\; \Lambda_{N+1} = \begin{cases}
		\left[|\Lambda|-2,|\Lambda|+2\right], & r=2,3 \\
		\left[|\Lambda|-3,|\Lambda|+2\right], & r=4
	\end{cases}
	\end{align*}
where we identify $x\equiv x+|\Lambda|$. These intervals are defined so that every interaction term, $n_xn_{x+1}$ or $q_x^*q_x$, that contributes to $H_\Lambda^\per$ is supported on at least one and at most two of the intervals $\Lambda_i$. As a consequence, for each $1\leq k\leq N+1$ 
\begin{equation}
	H_{\Lambda_{n,k}}\leq \sum_{i=k}^{n+k-1} H_{\Lambda_i} \leq 2H_{\Lambda_{n,k}} , \quad
	H_{\Lambda}^{\per} \leq \sum_{i=1}^{N+1}H_{\Lambda_i} \leq 2H_{\Lambda}^\per \label{eq:first_bound}, 
\end{equation}
where the addition $n+k-1$ is taken modulo $N+1$ and $\Lambda_{n,k}:=\bigcup_{i=k}^{n+k-1}\Lambda_i$. As $\caC_\Lambda^\per$ is an invariant subspace of each of the Hamiltonians in \eqref{eq:first_bound}, the same relations hold when one restricts all of the Hamiltonians to this subspace.

Let $P_i:\caC_\Lambda^\per \to \caC_\Lambda^\per$ denote the orthogonal projection onto $\ran(H_{\Lambda_i}\restriction_{\caC_\Lambda^\per})$. Thus, applying the isospectrality from~\eqref{eq:isospectral} 
\begin{equation}\label{eq:projection_bounds}
	\gamma P_i \leq H_{\Lambda_i} \restriction_{\caC_\Lambda^\textrm{per}} \leq \Gamma P_i, 
\end{equation}
with $ \gamma $ and $ \Gamma $ as in~\eqref{def:gamma} since each interval $\Lambda_i$ contains 5 or 6 sites. Summing \eqref{eq:projection_bounds} over appropriate values of $i$ and using the restricted versions of \eqref{eq:first_bound} produces the operator inequalities
\begin{equation}\label{sf_inequalites}
	\frac{\gamma}{2} H_{n,k}  \leq  \, H_{\Lambda_{n,k}}\restriction_{\caC_\Lambda^\textrm{per}} \, \leq \Gamma H_{n,k} , \quad \frac{\gamma}{2} H_N  \, \leq H_{\Lambda}^{\per}\restriction_{\caC_\Lambda^\textrm{per}}\, \leq \Gamma  H_N 
\end{equation}
where the operators $H_{n,k}$ and $H_N$ on $\caC_\Lambda^\per$ are defined by
\[ H_{n,k} := \sum_{i=k}^{n+k-1}P_i, \qquad  H_N := \sum_{i=1}^{N+1} P_i .
\]

Depending on the value of $r$ and whether or not $\Lambda_{N+1}$ contributes to the definition of $\Lambda_{n,k}$, one can easily deduce that $\Lambda_{n,k}$ is an interval with at least $3n+1$ sites and at most $3n+3$ sites. Thus, the gap of $H_{\Lambda_{n,k}}\restriction_{\caC_\Lambda^\textrm{per}} $ satisfies~\eqref{eq:iso2} since $|\Lambda|\geq |\Lambda_{n,k}| +3 $. Combining this with \eqref{sf_inequalites}, it readily follows by translation invariance that for all $ k \in \{1, \ldots, N+1\} $:
\begin{equation}\label{eq:gap_inequalities}
	\Gamma E_{1}^{(n,k)} \geq E_1(\caC_{\Lambda_{n,k}}^\infty) \geq  \min_{1\leq l \leq 3}  E_1(\caC_{[ 1, 3n  + l]}^\infty) \quad \text{and}\quad E_1^\textrm{per}(\caC_{\Lambda}^\textrm{per})  \geq \frac{\gamma}{2} E_{1}^L 
\end{equation}
where $E_{1}^{(n,k)}$ and $E_1^L$ denote the spectral gaps of $H_{n,k}$ and $H_L$ in $\caC_\Lambda^\per$, respectively. The proof is completed after applying \cite[Theorem~3.10]{NWY:2021} to produce a lower bound on $E_1^L$ using $E_1^{(n,k)}$. For its application, we note that 
the Hamiltonian $H_L$ is frustration-free, since \eqref{sf_inequalites} implies $\ker(H_L)=\ker(H_\Lambda^\per)$, which is nontrivial by Theorem~\ref{thm:periodic_gss}. In addition, $P_i\psi = (\1-G_{\Lambda_i})\psi$ for all $\psi\in\caC_{\Lambda}^\per$ and $i=1,\ldots, N+1$ by \eqref{per_blocks}. Since operators defined on $\cH_\Lambda$ that are supported on disjoint spatial regions commute, and the intervals $\Lambda_i$ were chosen so that $\Lambda_{i}\cap \Lambda_{j}= \emptyset$ unless $|i-j|=1$ or $\{i,j\}=\{1,N+1\}$, this guarantees that $[P_i,P_j] =  0$ under the same constraints on $i$ and $j$. Therefore, the operators $P_i$ satisfy the requirements of~\cite[Theorem~3.10]{NWY:2021} on the Hilbert space $ \caC_{\Lambda}^\textrm{per} $, and hence the respective spectral gaps $E_1^L$ and $E_1^{(n,k)}$ satisfy the bound
	\begin{equation}\label{eq:proj_bound}
		E_{1}^L \geq \frac{n-1}{n}\left(\min_{1\leq k \leq N+1} E_{1}^{(n,k)} -\frac{1}{n}\right) .
	\end{equation}
Using \eqref{eq:gap_inequalities} to further bound \eqref{eq:proj_bound} produces the result.
\end{proof}

\subsection{Electrostatic estimates for periodic boundary conditions}\label{sec:electro2}
For our proof of the bulk gap via~\eqref{eq:bulkgapest} we also need a lower bound on the ground-state energy $ E_0^\textrm{per}((\caC_\Lambda^\per)^\perp)$ of the Hamiltonian $ H_\Lambda^\textrm{per} $ restricted to the invariant subspace orthogonal to periodic tiling states. The approach we employ to bound this energy is inspired by the related question of a lower bound on the \emph{Yrast line} and, specifically, how the ground state energy $E_0(N)$ for the periodic system on $\Lambda$ in the $N$-particle sector behaves as the number of particles increases. In the physical regime $|\lambda|^2\ll 1$ and for  total filling $\nu:=N/|\Lambda|>1$, a positive lower bound on $E_0(N)$ that scales with $\nu$ can easily be deduced from the following Cauchy-Schwarz operator inequality
\be\label{CS}
q_x^*q_x \geq (1-\delta)n_x(n_x-1) - |\lambda|^2\frac{1-\delta}{\delta}n_{x-1}n_{x+1} 
\ee 
which holds for all $ \delta \in (0,1) $.
\begin{proposition}[Yrast Line]\label{prop:Yrast} Fix an interval $\Lambda$ and suppose that $|\lambda|^2<(\kappa-2)/(2\kappa)$. Then for any $\nu>1$,
\be\label{eq:Yrast}
E_0(\nu|\Lambda|)\geq  \nu|\Lambda|\left[\nu(1+\kappa/2-\kappa|\lambda|^2)-\kappa/2\right].
\ee
\end{proposition}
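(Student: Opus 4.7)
The strategy is to combine the hinted Cauchy--Schwarz bound~\eqref{CS} (specialized to $\delta=1/2$) with two elementary lattice identities, both based on the ``block'' variable $m_x:=n_x+n_{x+1}$, and then apply Cauchy--Schwarz on the particle-number constraint $\sum_x n_x=N$.

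\textbf{Step 1: Choose $\delta=1/2$ in~\eqref{CS}.} This gives the sitewise operator inequality $q_x^*q_x \geq \tfrac{1}{2}n_x(n_x-1) - |\lambda|^2 n_{x-1}n_{x+1}$. Summing over $x\in\Lambda$ in the ring geometry and using $\sum_x n_x(n_x-1) = \sum_x n_x^2 - N$, together with the AM--GM bound $\sum_x n_{x-1}n_{x+1} \leq \sum_x n_x^2$ (obtained by summing $2 n_{x-1}n_{x+1}\leq n_{x-1}^2+n_{x+1}^2$ and invoking translation invariance to relabel), I obtain the lower bound
\[
H_\Lambda^\per \;\geq\; \sum_x n_xn_{x+1} \;+\; \alpha \sum_x n_x^2 \;-\; \frac{\kappa N}{2},\qquad \alpha := \frac{\kappa}{2}-\kappa|\lambda|^2.
\]
The hypothesis $|\lambda|^2<(\kappa-2)/(2\kappa)$ is exactly the statement that $\alpha>1$, so the coefficient $\alpha-1$ in the next step is nonnegative.

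\textbf{Step 2: Rewrite via the block variable.} Using the identity $\tfrac{1}{2}\sum_x m_x^2 = \sum_x n_x^2 + \sum_x n_x n_{x+1}$ (expand the square and relabel), I split the quadratic part as
\[
\sum_x n_x n_{x+1} + \alpha \sum_x n_x^2 \;=\; \frac{1}{2}\sum_x m_x^2 \;+\; (\alpha-1)\sum_x n_x^2.
\]
This is where the constraint $\alpha\geq 1$ is used: the leftover $\sum_x n_x^2$ is now multiplied by a nonnegative coefficient so I may bound it from below.

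\textbf{Step 3: Cauchy--Schwarz on the constraint.} In the $N$-particle sector one has $\sum_x n_x = N$ and $\sum_x m_x = 2N$. The standard Cauchy--Schwarz inequality $\bigl(\sum_x c_x\bigr)^2 \leq |\Lambda|\sum_x c_x^2$ applied respectively to $c_x=n_x$ and $c_x=m_x$ gives $\sum_x n_x^2 \geq N^2/|\Lambda|$ and $\sum_x m_x^2 \geq 4N^2/|\Lambda|$. Plugging these into the identity of Step~2 and combining with Step~1 yields, for any $\psi$ in the $N$-particle sector,
\[
\langle\psi,H_\Lambda^\per\psi\rangle \;\geq\; \bigl[2 + (\alpha-1)\bigr]\frac{N^2}{|\Lambda|} - \frac{\kappa N}{2} \;=\; N\Bigl[\nu\bigl(1+\tfrac{\kappa}{2}-\kappa|\lambda|^2\bigr) - \tfrac{\kappa}{2}\Bigr],
\]
after substituting $\nu=N/|\Lambda|$, which is exactly~\eqref{eq:Yrast}.

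\textbf{Comment on difficulty.} There is no real obstacle: the whole argument is a careful bookkeeping of Cauchy--Schwarz bounds. The only subtle point is recognizing that the right choice $\delta=\tfrac{1}{2}$ in~\eqref{CS} together with the block-variable identity produces exactly the coefficient $1+\kappa/2-\kappa|\lambda|^2$ appearing in the target bound, and that the hypothesis $|\lambda|^2<(\kappa-2)/(2\kappa)$ is precisely what is needed to make the residual coefficient $\alpha-1$ nonnegative so that the lower bound $\sum_x n_x^2 \geq N^2/|\Lambda|$ can be applied in the correct direction.
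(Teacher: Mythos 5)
Your proof is correct. It begins exactly as the paper's does---apply \eqref{CS} with $\delta=1/2$ at every site of the ring to obtain a purely electrostatic (diagonal) lower bound on $H_\Lambda^\per$---but you finish the classical minimization differently. The paper reduces to minimizing $\cE(\mu)=\sum_x\bigl[\mu_x\mu_{x+1}+\tfrac{\kappa}{2}\mu_x(\mu_x-1)-\kappa|\lambda|^2\mu_x\mu_{x+2}\bigr]$ subject to $\sum_x\mu_x=\nu|\Lambda|$ and asserts, without detail, that the unique minimizer is the uniform configuration $\mu_x\equiv\nu$; verifying that assertion amounts to checking that the Fourier symbol $\tfrac{\kappa}{2}+\cos k-\kappa|\lambda|^2\cos 2k$ of the quadratic part is nonnegative on the nonzero modes, and at $k=\pi$ this is precisely your condition $\alpha-1\geq 0$ with $\alpha=\kappa/2-\kappa|\lambda|^2$. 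Your Steps 1--3 replace that assertion with an explicit elementary chain---AM--GM to absorb the next-nearest-neighbour term into $\sum_x n_x^2$, the block identity $\tfrac12\sum_x m_x^2=\sum_x n_x^2+\sum_x n_xn_{x+1}$, and Cauchy--Schwarz on the particle-number constraint---each step of which is saturated at the uniform configuration, so you recover the identical constant. The net effect is a self-contained and fully elementary derivation of the same bound, with the added benefit of making transparent exactly where the hypothesis $|\lambda|^2<(\kappa-2)/(2\kappa)$ enters (it is needed so that the residual coefficient $\alpha-1$ multiplying $\sum_x n_x^2$ is nonnegative, and correspondingly so that the uniform configuration is indeed the classical minimizer).
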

\begin{proof}
Applying \eqref{CS} with $\delta = 1/2$ for all $x$ produces an operator lower bound on $H_\Lambda^\per$ in terms of a sum of electrostatic operators. The claimed bound on $E(\nu|\Lambda|)$ results from minimizing the energy over $\mu\in[0,\infty)^{\Lambda}$ of the classical problem
\[
\cE(\mu):=\sum_{x\in\Lambda}\mu_x\mu_{x+1} + \frac{\kappa}{2} \mu_x(\mu_x-1) -\kappa|\lambda|^2\mu_x\mu_{x+2} \quad \text{subject to}\quad \sum_{x\in\Lambda}\mu_x = \nu|\Lambda|.
\]
When $|\lambda|^2<(\kappa-2)/(2\kappa)$ this has a unique minimum at $\mu_x = \nu$ for all $x\in\Lambda$, which produces \eqref{eq:Yrast}.
\end{proof}

While the simple operator inequality \eqref{CS} is sufficient for estimating the ground state energy of $H_\Lambda^\per$ in sufficiently high particle sectors, to bound $E_0^\per((\caC_\Lambda^\per)^\perp)$ one also needs to consider configurations $\mu\in\bN_0^\Lambda$ with low filling. Our approach here is to use a refined version of \eqref{CS} that depends on the type of configuration under consideration.  

To this end, recall that any state $\psi\in \big( \caC_\Lambda^\textrm{per} \big)^\perp $ is a linear combination of $ \mu \in  \bN_0^\Lambda\setminus\big( \ran \sigma_\Lambda \restriction_{\cT_\Lambda^\textrm{per}}\big)  =: \mathcal{S}_\Lambda $, i.e.\
\[
\psi = \sum_{\mu \in  \mathcal{S}_\Lambda  } \psi(\mu)\ket{\mu}, \quad\text{where}\quad \|\psi\|^2 = \sum_{\mu \in  \mathcal{S}_\Lambda  }|\psi(\mu)|^2 <\infty.
\]
Given $ \Lambda = [a,b] $ with $|\Lambda|\geq 4$, Lemma~\ref{lem:periodic_configs} characterizes $\mu=(\mu_a, \ldots, \mu_b)\in  \mathcal{S}_\Lambda $ as those particle configurations which belong to one of the following disjoint sets (which are to be understood using the convention $ x \equiv x +|\Lambda| $):
\begin{align*} &  \cS_{\Lambda}^{(1)} := \left\{ \mu \;  \big| \;  \mu_x\mu_{x+1}\geq 1\;  \mbox{for some $x\in\Lambda$} \right\} , \\
	&   \cS_{\Lambda}^{(2)}  := \left\{ \mu  \;  \big| \;  \mu_x \geq 3 \;  \mbox{for some $x\in\Lambda$}\right\} \backslash   \cS_{\Lambda}^{(1)}  , \\
	&   \cS_{\Lambda}^{(3)}  := \left\{ \mu  \;  \big| \; \mu_x=\mu_{x+3} = 2 \;  \mbox{for some  $x\in\Lambda$}\right\} \backslash  ( \cS_{\Lambda}^{(1)}  \cup  \cS_{\Lambda}^{(2)}  )  , \\
	&  \cS_{\Lambda}^{(4)}  := \left\{\mu \  \; \big| \; \mu_x\mu_{x+2}\geq 2 \; \mbox{for some $x\in\Lambda$} \right\} \backslash  ( \cS_{\Lambda}^{(1)}  \cup  \cS_{\Lambda}^{(2)}   \cup  \cS_{\Lambda}^{(3)}  )  .
\end{align*}
Note that all possible violations of the conditions from Lemma~\ref{lem:periodic_configs}  are covered since (i) $\cS_\Lambda^{(1)}$ is the set of all configurations that violate Condition~1, (ii) $\cS_\Lambda^{(2)}$ is all configurations that satisfy Condition~1, but violate Condition~2 by filling a site with three or more particles, and (iii) $\cS_\Lambda^{(3)}\uplus \cS_\Lambda^{(4)}$ contains all configurations that satisfy Conditions~1, have at most two particles on every site, but still violate Condition~2. Thus, we have constructed a disjoint partition 
\be\label{non-tilings2}
 \mathcal{S}_\Lambda  = \cS_{\Lambda}^{(1)}   \uplus \cS_{\Lambda}^{(2)} \uplus \cS_{\Lambda}^{(3)}   \uplus \cS_{\Lambda}^{(4)}  . 
\ee

Any configuration $ \mu \in  \cS_{\Lambda}^{(1)}  $ has positive electrostatic energy
$
e_\Lambda^\textrm{per}(\mu) = \sum_{x=a}^{b} \mu_x \mu_{x+1} $, 
and the mean energy of any $\psi\in \big( \caC_\Lambda^\textrm{per}\big)^\perp\cap\dom(H_\Lambda^\textrm{per})$ is given by
\begin{equation}\label{eq:Cenergy2}
\langle \psi | H_\Lambda \psi \rangle = \sum_{\mu \in  \cS_{\Lambda}^{(1)}  } e_\Lambda^\textrm{per}(\mu)  |\psi(\mu)|^2 + \kappa \, \sum_{\nu \in \mathbb{N}_0^\Lambda} \sum_{x=a}^{b} \left| (q_x\psi)(\nu) \right|^2 .  
\end{equation}
Our strategy for a lower bound is to bound every term in $ \sum_{\mu \in   \mathcal{S}_\Lambda }|\psi(\mu)|^2 $ by a few terms from the RHS of~\eqref{eq:Cenergy2}. This is trivial for $ \mu \in  \cS_{\Lambda}^{(1)}  $ since those configurations have electrostatic energy. For all other $ \mu \in \mathcal{S}_\Lambda$ we will associate (i) a configuration $ \eta(\mu)  \in  \cS_{\Lambda}^{(1)} $ with electrostatic energy and (ii) one or two terms $|(q_x\psi)(\nu)|^2$ from the second part of~\eqref{eq:Cenergy} which will be bounded using a variation of \eqref{CS}. This yields the following result.

\begin{theorem}[Electrostatic Estimates I] \label{thm:electro2}
For any $\Lambda = [a,b] $ with $|\Lambda|\geq 8$, the ground state energy of $H_\Lambda^\textrm{per} $ in the invariant subspace $\big( \caC_\Lambda^\textrm{per} \big)^\perp $ satisfies the lower bound
	\begin{equation}\label{perp_gap}
		E_0^\textrm{per}(\caC_\Lambda^\perp) \geq \frac{1}{4} \min\left\{1,  \frac{2\kappa}{\kappa+1} ,    \frac{2 \kappa}{1+\kappa|\lambda|^2}    \right\} =\gamma_\kappa^\textrm{per}(|\lambda|^2)  .
	\end{equation}
\end{theorem}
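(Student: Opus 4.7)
My plan is to exhibit, for each non-tiling configuration $\mu\in\mathcal{S}_\Lambda$, enough energy in $\langle\psi|H_\Lambda^\per\psi\rangle$ to account for a fraction $c\,|\psi(\mu)|^2$ with $c=\gamma_\kappa^\per(|\lambda|^2)$. Expanding any $\psi\in(\caC_\Lambda^\per)^\perp\cap\dom(H_\Lambda^\per)$ as in \eqref{eq:Cenergy2}, the mean energy splits into the electrostatic contribution $\sum_\mu e_\Lambda^\per(\mu)|\psi(\mu)|^2$ and the hopping contribution $\kappa\sum_{\nu,x}|(q_x\psi)(\nu)|^2$. Using the partition \eqref{non-tilings2}, I would distribute these two budgets among the four subclasses $\cS_\Lambda^{(i)}$, the three ratios in the minimum coming from the three essentially different difficulties encountered in classes $(1)$, $(2)$, and $(3)\cup(4)$.

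For $\mu\in\cS_\Lambda^{(1)}$, the electrostatic energy is at least one and the estimate $|\psi(\mu)|^2\le e_\Lambda^\per(\mu)|\psi(\mu)|^2$ is trivial, giving the first term $1$ in the minimum. For the remaining classes, rewriting the definition of $q_x$ in~\eqref{creation-action} and applying $|a|^2\le 2|a-b|^2+2|b|^2$ yields, for every target configuration $\mu\in\{\mu_1,\mu_2\}$, a pointwise inequality of the schematic form
\[
A\,|\psi(\mu)|^2 \;\le\; 2\,|(q_x\psi)(\nu)|^2 \;+\; 2|\lambda|^2 B\,|\psi(\eta)|^2,
\]
where $\eta$ is the partner configuration and $A,B$ are explicit products of $\nu_y+1$ and $\nu_y+2$. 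My task reduces to choosing, for each $\mu\in\cS_\Lambda^{(i)}$ with $i\ge 2$, a site $x$ and a configuration $\nu$ so that $\eta$ itself sits in $\cS_\Lambda^{(1)}$, i.e.~has strictly positive electrostatic energy.

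For $\mu\in\cS_\Lambda^{(2)}$ with $\mu_x\ge 3$, taking $\nu=\alpha_x^2\mu$ forces $\eta=\alpha_{x+1}^*\alpha_{x-1}^*\nu$ to have a particle at $x-1$ neighboring $\eta_x=\mu_x-2\ge 1$, so $\eta\in\cS_\Lambda^{(1)}$ with $e_\Lambda^\per(\eta)\ge 2$; with $A=\mu_x(\mu_x-1)\ge 6$ and $B=1$, optimizing this estimate against the available $\kappa$-hopping budget is what produces the ratio $\tfrac{2\kappa}{\kappa+1}$. For $\mu\in\cS_\Lambda^{(4)}$, say $\mu_x=2$ with $\mu_{x+2}\ge 1$, choosing $\nu=\alpha_x^2\mu$ again yields a partner $\eta$ with $\eta_{x+1}=1$ neighboring $\eta_{x+2}\ge 1$, again in $\cS_\Lambda^{(1)}$; the mirror case and the case $\mu_{x-2}\ge 2$ are treated symmetrically. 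The genuinely harder case is $\mu\in\cS_\Lambda^{(3)}$, where $\mu_x=\mu_{x+3}=2$ with the full local profile $(\ldots,0,0,2,0,0,2,0,0,\ldots)$ forces \emph{every} single hopping at any relevant site to land either on $\mu$ itself or on a configuration still in $\cS_\Lambda^{(4)}$, never directly in $\cS_\Lambda^{(1)}$. My plan here is to chain two Cauchy--Schwarz steps: applying $q_x$ to $\mu$ first produces a partner $\mu_2\in\cS_\Lambda^{(4)}$ with pattern $(1,0,1,2)$ at sites $x{-}1,\ldots,x{+}3$, and a second application of $q_{x+3}$ to $\mu_2$ produces a partner $\eta\in\cS_\Lambda^{(1)}$ with $\eta_{x+1}\eta_{x+2}\ge 1$. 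Combining the two inequalities yields a single bound of the form $|\psi(\mu)|^2\lesssim |(q_x\psi)(\nu)|^2+|\lambda|^2|(q_{x+3}\psi)(\nu')|^2+|\lambda|^4|\psi(\eta)|^2$, and the extra $|\lambda|^2$ factor coming from the intermediate step is precisely what turns the $\tfrac{2\kappa}{\kappa+1}$ ratio into the $|\lambda|^2$-dependent $\tfrac{2\kappa}{1+\kappa|\lambda|^2}$.

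\textbf{Main obstacle.} The bookkeeping is delicate and is the step I expect to be hardest: I need to verify that each hopping term $|(q_x\psi)(\nu)|^2$ and each electrostatic weight $e_\Lambda^\per(\eta)|\psi(\eta)|^2$ with $\eta\in\cS_\Lambda^{(1)}$ appears on the right-hand side of the summed inequalities with a \emph{uniformly bounded} multiplicity, since the map $\mu\mapsto(x,\nu,\eta)$ constructed above is neither injective nor surjective. Concretely, I have to show that for fixed $\eta\in\cS_\Lambda^{(1)}$ only $O(1)$ many $\mu\in\cS_\Lambda^{(2)}\cup\cS_\Lambda^{(3)}\cup\cS_\Lambda^{(4)}$ route through $\eta$, and likewise for each $(x,\nu)$, after which the $|\psi(\mu)|^2$'s in $\cS_\Lambda^{(3)}\cup\cS_\Lambda^{(4)}$ can be absorbed by a fixed fraction of the hopping budget and a fixed fraction of the electrostatic budget. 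Summing, using $|\Lambda|\ge 8$ to ensure the required empty sites around each violation exist, and reading off the worst of the three regimes produces the claimed minimum; the $\tfrac14$ prefactor reflects the two applications of $|a|^2\le 2|a-b|^2+2|b|^2$ together with the splitting of the two budgets into halves.
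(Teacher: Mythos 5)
Your overall strategy is the same as the paper's: partition $\mathcal{S}_\Lambda$ as in \eqref{non-tilings2}, route each non-tiling configuration via weighted Cauchy--Schwarz on one or two hopping terms to a partner configuration in $\cS_\Lambda^{(1)}$ carrying electrostatic energy, and control the multiplicities of the routing map. Your treatment of $\cS_\Lambda^{(1)}$, $\cS_\Lambda^{(2)}$ and $\cS_\Lambda^{(4)}$ matches the proof, and the bookkeeping you flag as the main obstacle is resolved exactly as you anticipate: one checks that $\mu\mapsto\eta(\mu)$ is injective for $j=1,3,4$ and at most two-to-one for $j=2$, and the prefactor $\tfrac14$ arises simply because each of the four classes consumes one full copy of $\langle\psi|H_\Lambda^{\per}\psi\rangle$.

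There is, however, a genuine gap in your chain for $\cS_\Lambda^{(3)}$. Starting from the intermediate configuration $\eta'$ with local pattern $(1,0,1,0,2)$ on $[x-1,x+3]$, you apply the second hop at $x+3$, which gives $(q_{x+3}\psi)(\nu')=\sqrt2\,\psi(\eta')-\lambda\,\psi(\eta)$ with $\eta$ of pattern $(1,0,1,1,0,1)$ and $e_\Lambda^{\per}(\eta)\geq 1$. In this orientation the gain on $|\psi(\eta')|^2$ is $O(1)$ while the loss lands on $|\psi(\eta)|^2$ with a factor $|\lambda|^2$; after weighting this hop by $|\lambda|^2$ so that the gain cancels the $|\lambda|^2$-loss from the first hop (this is the $|\lambda|^2\,|(q_{x+3}\psi)(\nu')|^2$ in your schematic), the residual loss on $\eta$ is $O(|\lambda|^4)$ and must be absorbed by an electrostatic budget of size $O(1)$. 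Optimizing the parameters then yields a class-$(3)$ constant of order $\kappa/(1+\kappa|\lambda|^4)$, which decays like $|\lambda|^{-4}$ and falls below the claimed $\min\{1,\tfrac{2\kappa}{\kappa+1},\tfrac{2\kappa}{1+\kappa|\lambda|^2}\}\sim |\lambda|^{-2}$ once $|\lambda|$ is large; since the theorem is asserted for all $\lambda\neq 0$, this does not suffice. The fix is to run the second hop in the other direction: take $\nu'=\alpha_{x+1}\alpha_{x+3}\eta'$ and $\eta=(\alpha_{x+2}^*)^2\nu'$, of pattern $(1,0,0,2,1)$ with $e_\Lambda^{\per}(\eta)\geq 2$, so that $(q_{x+2}\psi)(\nu')=\sqrt2\,\psi(\eta)-\lambda\sqrt2\,\psi(\eta')$ puts the intermediate on the $\lambda$-side. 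Then the gain on $|\psi(\eta')|^2$ is itself proportional to $|\lambda|^2$ and cancels the first step's loss with $\lambda$-independent weights, while the loss on $|\psi(\eta)|^2$ is $\lambda$-free and absorbed by $e_\Lambda^{\per}(\eta)\geq 2$ with $\delta'=\kappa/(\kappa+1)$, producing the $\lambda$-independent term $\tfrac{2\kappa}{\kappa+1}$. Relatedly, your attribution of the terms in the minimum is off: class $(2)$ (with $A\geq 6$, $e_\Lambda^{\per}(\eta)\geq 2$ and multiplicity $2$) yields $\tfrac{6\kappa}{2+\kappa|\lambda|^2}$, which is always dominated by the class-$(4)$ constant $\tfrac{2\kappa}{1+\kappa|\lambda|^2}$; the $\lambda$-independent $\tfrac{2\kappa}{\kappa+1}$ must come from class $(3)$, which is precisely why the orientation of the second hop there matters.
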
 
\begin{proof}
We consider $\psi\in \big( \caC_\Lambda^\textrm{per} \big)^\perp $ and fix any configuration $ \mu \in  \mathcal{S}_\Lambda $ in its support.

\textit{Case $ \mu \in \cS_{\Lambda}^{(1)}  $:} For such configurations we have the trivial bound
\begin{equation}\label{gamma1}
e_\Lambda^\textrm{per}(\mu) \, |\psi(\mu)|^2 \geq |\psi(\mu)|^2  =: \gamma^{(1)}|\psi(\mu)|^2. 
\end{equation}

\textit{Case $ \mu \in \cS_{\Lambda}^{(2)}  $:} 
Set $ x \equiv x_\mu := \max\{x\in[a,b]   \, | \, \mu_x \geq 3 \}  $, and note that this means $ \mu_{x-1} =  \mu_{x+1} = 0 $ since otherwise $ \mu  \in \cS_{\Lambda}^{(1)}  $. Recalling the notation from the beginning of Section~\ref{sec:main} (see the text following \eqref{creation-action}), we associate to $\mu$ the configurations $ \nu(\mu) := \alpha_x^2 \mu $ and $ \eta(\mu) := \alpha_{x-1}^*\alpha_{x+1}^* \nu(\mu) $, for which $ e_\Lambda^\textrm{per}(\eta(\mu)) \geq 2 $. Using \eqref{creation-action}, we then have for any $ \delta \in (0,1) $
\begin{align}
	T^{(2)} [\psi;\mu] :=  \left| (q_x\psi)(\nu(\mu)) \right|^2 & = \left|  \sqrt{\mu_x(\mu_x-1)} \psi(\mu) - \lambda \psi(\eta(\mu)) \right|^2 \notag \\
	 	& \geq (1-\delta) \mu_x(\mu_x-1) | \psi(\mu)|^2 - \frac{1-\delta}{\delta} |\lambda|^2 | \psi(\eta(\mu)) |^2 . \label{eq:case2}
\end{align}
Choosing $ \delta = \frac{\kappa|\lambda|^2}{2+\kappa|\lambda|^2}  $ and bounding $  \mu_x(\mu_x-1)  \geq 6 $ yields
\begin{equation}\label{gamma2}
	 e_\Lambda^\textrm{per}(\eta(\mu) )  | \psi(\eta(\mu)) |^2  + \kappa  T^{(2)} [\psi;\mu] \geq \frac{12 \kappa}{2+\kappa|\lambda|^2}   | \psi(\mu)|^2 =: \gamma^{(2)}|\psi(\mu)|^2 . 
\end{equation}

\textit{Case $ \mu \in \cS_{\Lambda}^{(3)}  $:} Let $ x \equiv x_\mu := \max\{x\in[a,b]   \, | \, \mu_x=\mu_{x+3}=2\}  $. Since $\mu\notin \cS_\Lambda^{(1)}$, it follows that $\mu_{x-1}=\mu_{x+1}=\mu_{x+2}=0$. To $ \mu $ we associate four configurations $ \nu(\mu) := \alpha_x^2 \mu $ and  $ \eta'(\mu) :=  \alpha_{x-1}^*\alpha_{x+1}^* \nu(\mu) $ as well as  $ \nu'(\mu):=  \alpha_{x+1}\alpha_{x+3} \eta'(\mu) $ and  $ \eta(\mu) := (\alpha_{x+2}^*)^2 \nu'(\mu) $. The last configuration has electrostatic energy $ e_\Lambda^\textrm{per}(\eta(\mu)) \geq 2$. Estimating similarly to \eqref{eq:case2}, we obtain for any $ \delta, \delta' \in (0,1) $,
 \begin{align}
	 T^{(3)} [\psi;\mu] & :=  \left| (q_x\psi)(\nu(\mu)) \right|^2 +  \left| (q_{x+2}\psi)(\nu'(\mu)) \right|^2 \label{eq:Case3}\\
	 & = \left|  \sqrt{2} \psi(\mu) - \lambda \psi(\eta'(\mu)) \right|^2 +  \left|  \sqrt{2} \psi(\eta(\mu))  - \lambda\sqrt{2} \psi(\eta'(\mu)) \right|^2 \nonumber\\
	 	 & \geq 2(1-\delta) | \psi(\mu)|^2 + |\lambda|^2 \left[ 2 (1-\delta') - \frac{1-\delta}{\delta} \right] |\psi(\eta'(\mu)) |^2 - 2\frac{1-\delta'}{\delta'}| \psi(\eta(\mu)) |^2.\nonumber
\end{align}
The choice $ \delta = \frac{1}{1+2(1-\delta')} $ eliminates the second term. In turn, choosing $ \delta' = \frac{\kappa}{\kappa+1} $ yields
\begin{equation}\label{S3_bound}
 e_\Lambda^\textrm{per}(\eta(\mu) )  | \psi(\eta(\mu)) |^2  + \kappa  T^{(3)} [\psi;\mu] \geq \frac{2\kappa}{\kappa+1}|\psi(\mu)|^2 =: \gamma^{(3)}|\psi(\mu)|^2.
\end{equation}

\textit{Case $ \mu \in \cS_{\Lambda}^{(4)}  $:} Choose $ x \equiv x_\mu := \max\{x\in[a,b]   \, | \, \mu_x=2  \; \text{and} \; \max\{ \mu_{x-2},\mu_{x+2} \} \geq 1  \}  $ and note that $ \mu_{x\pm 1} = 0 $.  We then proceed similarly as in the second case and associate to  $ \mu $ the configurations $ \nu(\mu):= \alpha_x^2\mu $ and $\eta(\mu) :=  \alpha_{x-1}^*\alpha_{x+1}^* \nu(\mu) $. The latter configuration has electrostatic energy $ e_\Lambda^\textrm{per}(\eta(\mu) ) \geq 1 $. Proceeding as in~\eqref{eq:case2} we then bound for any $ \delta \in (0,1) $:
\[
	T^{(4)} [\psi;\mu] :=  \left| (q_{x}\psi)(\nu(\mu)) \right|^2 \geq  2 (1-\delta)  | \psi(\mu)|^2 - \frac{1-\delta}{\delta} |\lambda|^2 | \psi(\eta(\mu)) |^2 .
\]
The choice $ \delta = \frac{\kappa|\lambda|^2}{1+\kappa|\lambda|^2}  $  then yields
\begin{equation}\label{eq:Case4}
e_\Lambda^\textrm{per}(\eta(\mu) )  | \psi(\eta(\mu)) |^2  + \kappa  T^{(4)} [\psi;\mu]  \geq \frac{2\kappa}{1+ \kappa |\lambda|^2}     | \psi(\mu)|^2 =: \gamma^{(4)}|\psi(\mu)|^2.  
\end{equation}
\bigskip

Employing the conventions $  T^{(1)} [\psi;\mu]  \equiv 0 $ and $ \eta(\mu) = \mu $ for $ \mu \in \cS_{\Lambda}^{(1)} $, the collective estimates above show that summing over all configurations in $\cS_\Lambda^{(j)}$ for each fixed $ j \in \{ 1,2,3,4 \} $ satisfies:
\begin{align}
	\gamma^{(j)} \sum_{\mu\in  \cS_{\Lambda}^{(j)} } |\psi(\mu)|^2 &  \leq  \sum_{\mu\in  \cS_{\Lambda}^{(j)} } e_\Lambda^\textrm{per}(\eta(\mu))  |\psi(\eta(\mu))|^2+ \kappa \sum_{\mu\in  \cS_{\Lambda}^{(j)} }   T^{(j)} [\psi;\mu] \nonumber  \\
	& \leq  c_j\sum_{\eta \in  \cS_{\Lambda}^{(1)} } e_\Lambda^\textrm{per}(\eta)  |\psi(\eta)|^2 + \kappa \, \sum_{\nu \in \mathbb{N}_0^\Lambda} \sum_{x=a}^{b} \left| (q_x\psi)(\nu) \right|^2  \leq c_j\langle \psi | H_\Lambda \psi \rangle , \label{j-estimate2}
\end{align}
where $c_j\in\bN$ counts the maximum number of times a configuration with electrostatic energy $\eta\in\cS_\Lambda^{(1)}$ is associated to a configuration $\mu\in\cS_\Lambda^{(j)}$, 
\[c_j := \max_{\eta\in\cS_{\Lambda}^{(1)}}\left|\left\{\mu\in\cS_\Lambda^{(j)} \big| \eta(\mu)=\eta\right\}\right|.\]
For \eqref{j-estimate2}, we also use that the set of pairs $(x,\nu)\in\Lambda\times\bN_0^\Lambda$ that contribute to $T^{(j)} [\psi;\mu]$ is in one-to-one correspondence with $\mu\in\cS_\Lambda^{(j)}$. The final bound in \eqref{perp_gap} is obtained by dividing by $c_j$, summing over $j$ and seeing that $4\gamma_\kappa^\textrm{per}(|\lambda|^2)=\min_j\gamma^{(j)}/c_j$. Thus, the proof is complete after determining the values of $c_j$.

It is trivial that $c_1=1$. For all other $j$ and fixed $\mu\in\cS_\Lambda^{(j)}$, the sites in $\Lambda$ that contribute to the electrostatic energy of $\eta(\mu)\in\cS_\Lambda^{(1)}$ are localized to an interval of size at most 6 around $x_\mu$. The constraint $|\Lambda|\geq 8$ guarantees that this interval can be uniquely identified in the ring geometry. By considering separately for each $j$ the possible forms of $\mu$ and $\eta(\mu)$ in this interval, one can quickly deduce that the mapping $S_\Lambda^{(j)}\ni \mu \mapsto \eta(\mu)\in \cS_\Lambda^{(1)}$ is injective  for $j=3,4$ giving $c_j = 1$ for those $j$, and that the preimage of any $\eta\in\cS_\Lambda^{(1)}$ for $j=2$ has at most two elements producing $c_2=2$.\footnote{The configurations $\mu\in\cS_\Lambda^{(2)}$ that map to a non-unique $\eta(\mu)$ are ones for which $\mu_x = 3$, $\{\mu_{x+2},\mu_{x-2}\}=\{0,1\},$ and $\mu_{x\pm3}=0$. All other configurations $\mu\in\cS_\Lambda^{(2)}$ are in a one-to-one correspondence with $\eta(\mu)\in\cS_\Lambda^{(1)}$.} 
\end{proof}

\subsection{Proof of Theorem~\ref{thm:main2}}\label{Sec:ProofMain2}

We are now ready to finalize the proof of Theorem~\ref{thm:main2} by bounding the RHS of~\eqref{eq:bulkgapest}. The lower bound on the ground-state energy $ E_0^\per((\caC_\Lambda^\per)^\perp) $ in Theorem~\ref{thm:electro2} yields the first term in the minimum on the RHS of~\eqref{eq:main2}. Using the finite-volume criterion from Theorem~\ref{thm:pbc_gap}, the spectral gap $E_1^\textrm{per} (\caC_\Lambda^\textrm{per})   $ in~\eqref{eq:bulkgapest}  is bounded from below in terms of the spectral gap $E_1(\caC_{[1,m]}^\infty)$ on subintervals of $ \Lambda $ of lengths $m\geq 7$. In turn, these gaps are uniformly lower bounded in Theorem~\ref{thm:tiling_gap}, which yields for all $ m \geq 7 $, 
\[
E_1(\caC_{[1,m]}^\infty) \geq  \frac{2\kappa}{3}(1-\sqrt{3f(|\lambda|^2/2)})^2.
\]
In the situation that $ f(|\lambda|^2/2)   < 1/3 $, the condition
\[
\inf_{l=1,2,3}E_1(\caC_{[1,3n+l]}^\infty) - \frac{\Gamma}{n} >0
\]
is thus satisfied for sufficiently large  $n $. Taking the limit $n\to\infty$ in the bound~\eqref{eq:pergap3}, and inserting the values~\eqref{def:gamma} produces the second term in the minimum on the RHS of~\eqref{eq:main2}. \qed

\section{Edge and excited states}

The aim of this section is twofold. The first goal is to complete  the proof of Theorem~\ref{thm:main} by producing a lower bound on the ground-state energy in the subspace orthogonal to all BVMD tilings. This is accomplished in the next section, where an analogue of the electrostatic estimate in Theorem~\ref{thm:electro2} is proved for open boundary conditions. This  bound is limited by the presence of  edge states, which are discussed and classified at the end of  Section~\ref{sec:Electrostatic}. 
As a second goal, we prove variational bounds on low-lying bulk excitations in Section~\ref{sec:Excited_States}, where a brief discussion on many-body scars of mid- and high-energy is also provided. 
%

\subsection{Electrostatic estimates and edge states for open boundary conditions}\label{sec:Electrostatic}
For our proof of the spectral gap of $ H_\Lambda \equiv H_\Lambda^\textrm{obc} $ via~\eqref{eq:twoway}, we still need to establish a volume-independent lower bound on the ground-state energy in the invariant subspace $  \caC_\Lambda^\perp$ orthogonal to all BVMD tiling states. Together with the bound on the spectral gap of $ E_1(\caC_{\Lambda}^\infty) $ from Section~\ref{sec:MM}, this then completes the proof of the uniform gap for open boundary conditions.

\begin{theorem}[Electrostatic estimate II]\label{thm:perp_bound} For any interval $\Lambda $ with $|\Lambda|\geq 5$, the ground state energy of $H_\Lambda$ in the invariant subspace $\caC_\Lambda^\perp$ satisfies the lower bound
	\begin{equation}\label{perp_gap_OBC}
	E_0(\caC_\Lambda^\perp) \geq \frac{1}{5} \min\left\{1,   \frac{2 \kappa}{1+\kappa|\lambda|^2} ,  \frac{2\kappa}{\kappa+1}  ,  \frac{2\kappa |\lambda|^2}{\kappa+1}  \right\} =\gamma_\kappa(|\lambda|^2)  .
	\end{equation}
\end{theorem}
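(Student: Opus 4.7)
The approach follows the template of Theorem~\ref{thm:electro2}. The plan is to expand $\psi\in\caC_\Lambda^\perp\cap\dom(H_\Lambda)$ in the orthonormal configuration basis, partition its support $\mathcal{S}_\Lambda := \bN_0^\Lambda\setminus\ran\sigma_\Lambda$ into disjoint classes $\cS_\Lambda^{(j)}$ indexed by how the configuration violates the characterization in Lemma~\ref{lem:tiling_configs}, and for each class bound $|\psi(\mu)|^2$ from below by a combination of electrostatic contributions $e_\Lambda^\textrm{obc}(\eta)|\psi(\eta)|^2$ and hopping contributions $|(q_x\psi)(\nu)|^2$ via the Cauchy--Schwarz inequality $|A-B|^2\geq (1-\delta)|B|^2-\tfrac{1-\delta}{\delta}|A|^2$ for $\delta\in(0,1)$.

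I would take four classes directly from the PBC proof, with the modification that $\cS^{(2)}$ (violating Condition 1) is restricted to $\mu_x\geq 3$ at some \emph{interior} site $x\in[a+1,b-1]$, reflecting the fact that OBC tilings admit boundary tiles $B_n^l, B_n^r$ with arbitrarily large $n$, and that $\cS^{(3)}, \cS^{(4)}$ are restricted to violations whose associated hops lie at interior sites. For these four classes the estimates from the proof of Theorem~\ref{thm:electro2} carry over verbatim and yield $\gamma^{(1)}=1$, $\gamma^{(2)} = \frac{12\kappa}{2+\kappa|\lambda|^2}$, $\gamma^{(3)} = \frac{2\kappa}{\kappa+1}$, $\gamma^{(4)} = \frac{2\kappa}{1+\kappa|\lambda|^2}$, with the same multiplicities $c_j\leq 2$.

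A fifth class $\cS^{(5)}$ must then collect the remaining non-tiling configurations, whose prototype is $\mu = (n,0,1,0,\ldots,0)$ with $n\geq 2$ and its reflection at $b$: here $\mu_a\geq 2$ and $\mu_{a+2}\geq 1$ violate Condition 3 at the boundary, but the natural PBC remedy of acting with $q_a$ is unavailable. The cure is to use the interior hop $q_{a+1}$ with the substitutions $\nu := \alpha_a\alpha_{a+2}\mu$ and $\eta := (\alpha_{a+1}^*)^2\nu = (n-1,2,0,0,\ldots)$, which give
\[(q_{a+1}\psi)(\nu) = \sqrt{2}\,\psi(\eta) - \lambda\sqrt{n}\,\psi(\mu).\]
Since $\eta\in\cS^{(1)}$ with $e_\Lambda^\textrm{obc}(\eta) = 2(n-1)\geq 2$, choosing $\delta = \kappa/(\kappa+1)$ cancels the $|\psi(\eta)|^2$ contribution and leaves
\[e_\Lambda^\textrm{obc}(\eta)|\psi(\eta)|^2 + \kappa|(q_{a+1}\psi)(\nu)|^2 \geq \frac{2\kappa|\lambda|^2}{\kappa+1}|\psi(\mu)|^2,\]
identifying $\gamma^{(5)} = \frac{2\kappa|\lambda|^2}{\kappa+1}$ and recovering precisely the edge-state energy~\eqref{ex:edge_energy}. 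Boundary analogues of $\cS^{(3)}$-type violations, such as $(n,0,0,m,0,\ldots)$ with $n,m\geq 2$, should be absorbed into $\cS^{(5)}$ by running the same Cauchy--Schwarz argument with the interior hop $q_{a+3}$. Summing the five partial estimates as in the proof of Theorem~\ref{thm:electro2} yields $\min_j(\gamma^{(j)}/c_j)\|\psi\|^2 \leq 5\langle\psi|H_\Lambda\psi\rangle$, producing the prefactor $1/5$; the $\gamma^{(2)}/c_2$ term is dominated by $\gamma^{(4)}/c_4$ exactly as in the PBC case, leaving the four values displayed in~\eqref{perp_gap_OBC}.

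The main obstacle will be the careful enumeration of $\cS^{(5)}$: verifying that every boundary-only violation can be dismantled by a single interior hop whose output lies either in $\cS^{(1)}$ or in one of the earlier classes already handled, and confirming that the multiplicities $c_j$ remain bounded by a constant independent of $|\Lambda|$. This is essentially a case analysis of which boundary tiles $B_n^\ell$, $B_n^r$ can fail to extend to a full BVMD tiling, and it is where the truncation asymmetry between interior and boundary sites of $H_\Lambda^\textrm{obc}$ enters most delicately.
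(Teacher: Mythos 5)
Your proposal follows the paper's proof essentially verbatim: the same five-class partition of $\bN_0^\Lambda\setminus\ran\sigma_\Lambda$, the same Cauchy--Schwarz scheme, and, crucially, the same treatment of the genuinely new boundary class $\cS_\Lambda^{(5)}$ with prototype $(n,0,1,0,\dots)$, where the interior hop $q_{a+1}$ connects $\mu$ to $\eta=(n-1,2,0,\dots)\in\cS_\Lambda^{(1)}$ and the choice $\delta=\kappa/(\kappa+1)$ yields $\gamma^{(5)}=\tfrac{2\kappa|\lambda|^2}{\kappa+1}$, matching the edge-state energy. The constants, the domination of $\gamma^{(2)}/c_2$ by $\gamma^{(4)}$, and the prefactor $1/5$ all come out as in the paper.

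There is one concrete flaw: your claim that boundary analogues of $\cS^{(3)}$-type violations such as $\mu=(n,0,0,2,0,\dots)$ can be ``absorbed into $\cS^{(5)}$'' via the single interior hop $q_{a+3}$ does not work. Setting $\nu=\alpha_{a+3}^2\mu$, the hop $(q_{a+3}\psi)(\nu)$ pairs $\psi(\mu)$ with $\psi(\eta')$ for $\eta'=(n,0,1,0,1,0,\dots)$, which has \emph{zero} electrostatic energy, so the negative term $-\tfrac{1-\delta}{\delta}|\lambda|^2|\psi(\eta')|^2$ produced by Cauchy--Schwarz cannot be compensated by $e_\Lambda(\eta')|\psi(\eta')|^2$. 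Landing in ``an earlier class'' is not enough; what is needed is the two-hop chain you already use for interior $\cS^{(3)}$ configurations, run with the roles of $x$ and $x+3$ interchanged: a second hop $q_{a+1}$ applied to $\nu'=\alpha_a\alpha_{a+2}\eta'$ reaches $\eta=(n-1,2,0,0,1,\dots)$ with $e_\Lambda(\eta)\geq 2$, and the two $\delta$-parameters are tuned so that the intermediate $|\psi(\eta')|^2$ terms cancel. This is exactly how the paper handles the $x=a$ subcase of $\cS^{(3)}$ (keeping these configurations in $\cS^{(3)}$ rather than moving them to $\cS^{(5)}$), and it recovers the same constant $\gamma^{(3)}=\tfrac{2\kappa}{\kappa+1}$. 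With that correction your argument closes.
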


The proof of this theorem parallels the one of Theorem~\ref{thm:electro2} for the periodic case. 
We will therefore focus on the subtle differences and provide the proof using the same notation. Similar to the previous case, any state $\psi\in \caC_\Lambda^\perp$ is a linear combination of configurations $ \mu \in  \bN_0^\Lambda\setminus\ran \sigma_\Lambda =: \mathcal{S}_\Lambda $, that is
\be\label{eq:vec_expansion}
\psi = \sum_{\mu \in  \mathcal{S}_\Lambda } \psi(\mu)\ket{\mu}, \quad\text{where}\quad \|\psi\|^2 = \sum_{\mu \in  \mathcal{S}_\Lambda }|\psi(\mu)|^2 <\infty.
\ee
Fixing $ \Lambda = [a,b] $ with $|\Lambda|\geq 5$, Lemma~\ref{lem:tiling_configs} characterizes $\mu=(\mu_a, \ldots, \mu_b)\in  \mathcal{S}_\Lambda $ as those particle configurations which belong to one of the following disjoint sets:
\begin{align*} &  \cS_{\Lambda}^{(1)} := \left\{ \mu  \;  \big| \;  \mu_x\mu_{x+1}\geq 1\;  \mbox{for some $x\in[a,b-1]$} \right\} , \\
&   \cS_{\Lambda}^{(2)}  := \left\{ \mu  \;  \big| \;  \mu_x \geq 3 \;  \mbox{for some $x\in[a+1,b-1]$}\right\} \backslash   \cS_{\Lambda}^{(1)}  , \\
&   \cS_{\Lambda}^{(3)}  := \left\{ \mu \;  \big| \; \min\{\mu_x, \,\mu_{x+3}\} = 2 \;  \mbox{for some  $x\in[a,b-3]$}\right\} \backslash  ( \cS_{\Lambda}^{(1)}  \cup  \cS_{\Lambda}^{(2)}  )  , \\
&  \cS_{\Lambda}^{(4)}  := \left\{\mu  \; \big| \; \mu_x= 2 \; \mbox{and} \; \max\{\mu_{x-2},\mu_{x+2}\}\geq 1 \; \mbox{for some $x\in[a+1,b-1]$} \right\} \backslash  ( \cS_{\Lambda}^{(1)}  \cup  \cS_{\Lambda}^{(2)}   \cup  \cS_{\Lambda}^{(3)}  )  \\
&  \cS_{\Lambda}^{(5)}  := \left\{\mu  \; \big| \; \mu_a\mu_{a+2}\geq 2 \; \mbox{or} \; \mu_{b-2}\mu_b\geq 2 \right\} \backslash  ( \cS_{\Lambda}^{(1)}  \cup  \cS_{\Lambda}^{(2)}   \cup  \cS_{\Lambda}^{(3)}\cup  \cS_{\Lambda}^{(4)}  )  .
\end{align*}
We use the convention $\mu_{a-1}=\mu_{b+1}=0$ in the definition of $\cS_\Lambda^{(4)}$. The last two sets are chosen as a partition of
\be\label{edge_configs}
\cS_{\Lambda}^{(4)} \cup \cS_{\Lambda}^{(5)} = \left\{\mu  \; \big| \; \mu_x\mu_{x+2}\geq 2 \; \mbox{for some $x\in[a,b-2]$} \right\} \backslash  ( \cS_{\Lambda}^{(1)}  \cup  \cS_{\Lambda}^{(2)}   \cup  \cS_{\Lambda}^{(3)}  ),
\ee
which is analogous to the fourth set of configurations from the periodic case in Section~\ref{sec:electro2}. Here, $\cS_\Lambda^{(4)}$ consists of the configurations from \eqref{edge_configs} where there is an \emph{interior} site $x$ with two particles that has an occupied next nearest neighbor. The set $\cS_\Lambda^{(5)}$ corresponds to the configurations from \eqref{edge_configs} where the site with two (or more) particles must be on the boundary and the next-nearest neighbor holds exactly one particle; it is precisely these configurations that produce the low-lying energies for small $|\lambda|$ which were discussed in Section~\ref{sec:main}.

Given~\eqref{edge_configs}, the construction of the sets $\cS_\Lambda^{(j)}$ only differ from their counterparts in the periodic case at the boundary of $ \Lambda $. Since all possible violations of the conditions in Lemma~\ref{lem:tiling_configs} are covered, we have constructed a disjoint partition of 
$ 
\mathcal{S}_\Lambda $. 
For open boundary conditions, a configuration $ \mu \in  \cS_{\Lambda}^{(1)}  $ has electrostatic energy
$
e_\Lambda(\mu) = \sum_{x=a}^{b-1} \mu_x \mu_{x+1} $, 
and the mean energy of any $\psi\in \caC_\Lambda^\perp\cap\dom(H_\Lambda)$ is given by
\begin{equation}\label{eq:Cenergy}
\langle \psi | H_\Lambda \psi \rangle = \sum_{\mu \in  \cS_{\Lambda}^{(1)}  } e_\Lambda(\mu)  |\psi(\mu)|^2 + \kappa \, \sum_{\nu \in \mathbb{N}_0^\Lambda} \sum_{x=a+1}^{b-1} \left| (q_x\psi)(\nu) \right|^2 .  
\end{equation}

\begin{proof}[Proof of Theorem~\ref{thm:perp_bound}] 
	We expand $\psi\in \caC_\Lambda^\perp\cap\dom(H_\Lambda)$ as in \eqref{eq:vec_expansion} and fix a configuration $ \mu \in \mathcal{S}_\Lambda  $. The argument for $\mu\in\cS_\Lambda^{(j)}$ with $j=1,2$ or $4$ proceeds identically as in the proof in Theorem~\ref{thm:electro2}: we define $\eta(\mu)$ and $T^{(j)}[\psi;\mu]$ as before with the only modification being that the value of $x_\mu$ is constrained to the interval of sites used to define the set $\cS_\Lambda^{(j)}$ above. In these cases, one again produces the bound
	\be\label{eq:same_bound}
	e_\Lambda(\eta(\mu) )  | \psi(\eta(\mu)) |^2  + \kappa  T^{(j)} [\psi;\mu] \geq \gamma^{(j)}|\psi(\mu)|^2  
	\ee
	with $\gamma^{(j)}$ defined as in the proof of Theorem~\ref{thm:electro2}, see specifically \eqref{gamma1}, \eqref{gamma2} and \eqref{eq:Case4}.
	
	\textit{Case $ \mu \in \cS_{\Lambda}^{(3)}  $:} The argument follows the analogous case from Theorem~\ref{thm:electro2} with only technical modifications. We set $ x \equiv x_\mu := \max\{x\in[a,b-3]   \, | \, \min\{\mu_x, \mu_{x+3}\}\geq2\}  $, and consider the cases $x>a$ and $x=a$ separately.
	
	For $x>a$, defining $\eta(\mu)$ and $T^{(3)}[\psi;\mu]$ exactly as in \eqref{eq:Case3}, and bounding similarly with the choices $ \delta' = \frac{\kappa}{\kappa+1} $ and $ \delta =(1+\mu_{x+3}(1-\delta'))^{-1} \leq (1+2 (1-\delta'))^{-1} $ once again produces \eqref{eq:same_bound} with $\gamma^{(3)}$ as in \eqref{S3_bound}. For open boundary conditions, it is possible to have $\mu_{x+3}>2$ if $x=b-3$, which accounts for the slightly different choice of $\delta$.
	
	The case $x=a$ runs analogously to that of $x>a$, with the roles of $x$ and $x+3$ interchanged since $x+3$ is in the interior of $\Lambda$. Setting $\nu(\mu) = \alpha_{x+3}^2\mu$, $\eta'(\mu)=\alpha_{x+2}^*\alpha_{x+4}^*\nu(\mu)$, $\nu'(\mu)=\alpha_x\alpha_{x+2}\eta'(\mu)$ and $\eta(\mu) = (\alpha_{x+1}^*)^2\nu'(\mu)$ we then estimate
	\[ 
	T^{(3)} [\psi;\mu] :=  \left| (q_{x+3}\psi)(\nu(\mu)) \right|^2 +  \left| (q_{x+1}\psi)(\nu'(\mu)) \right|^2 \]
	as in the case $x>a$ which again yields \eqref{eq:same_bound}.
	
	\textit{Case $ \mu \in \cS_{\Lambda}^{(5)}  $:} Due to the presence of a boundary, our strategy differs from the case of $\cS_\Lambda^{(4)}$, and we pick $ x \equiv x_\mu := \max\{x\in\{a+1,b-1\}   \, | \, \mu_{x-1}\mu_{x+1} \geq 2   \}  $ and note that $ \mu_{x} = 0 $.  To  $ \mu $ we associate the configurations $ \nu(\mu) := \alpha_{x-1} \alpha_{x+1} \mu $, and $ \eta(\mu) := (\alpha_x^*)^2  \nu(\mu) $. The latter configuration has electrostatic energy $ e_\Lambda(\eta(\mu) ) \geq 2 $. We then bound for any $ \delta \in (0,1) $:
	\begin{align*}
	T^{(5)} [\psi;\mu] :=  \left| (q_{x}\psi)(\nu(\mu)) \right|^2 & = \left|  \sqrt{2}  \psi(\eta(\mu)) - \lambda \sqrt{\mu_{x-1} \mu_{x+1}}  \psi(\mu)  \right|^2 \\
	& \geq (1-\delta)|\lambda|^2  \mu_{x-1}\mu_{x+1} | \psi(\mu)|^2 - 2\frac{1-\delta}{\delta} | \psi(\eta(\mu)) |^2 .
	\end{align*}
	The choice $ \delta = \frac{\kappa}{\kappa+1} $ then yields the final estimate
	\begin{equation}\label{eq:Case5_OBC}
	e_\Lambda(\eta(\mu) )  | \psi(\eta(\mu)) |^2  + \kappa  T^{(5)} [\psi;\mu]  \geq \frac{2\kappa |\lambda|^2}{\kappa+1}     | \psi(\mu)|^2 := \gamma^{(5)}|\psi(\mu)|^2.  
	\end{equation}
	
	Summing the collective estimates above over all configurations in $\cS_\Lambda^{(j)}$ for each fixed $ j \in \{ 1,2,3,4,5 \} $ once again produces
	\begin{align}
	\gamma^{(j)} \sum_{\mu\in  \cS_{\Lambda}^{(j)} } |\psi(\mu)|^2 &  \leq  \sum_{\mu\in  \cS_{\Lambda}^{(j)} } e_\Lambda(\eta(\mu))  |\psi(\eta(\mu))|^2+ \kappa \sum_{\mu\in  \cS_{\Lambda}^{(j)} }   T^{(j)} [\psi;\mu] 
	\leq c_j\langle \psi | H_\Lambda \psi \rangle , \label{j-estimate}
	\end{align}
	where $c_j\in\bN$ counts the maximum number of times a configuration with electrostatic energy $\eta\in\cS_\Lambda^{(1)}$ is associated to a configuration $\mu\in\cS_\Lambda^{(j)}$. Our choices are again made so that $c_j=1$ for $j\neq2$, and $c_2=2$. The proof then concludes by dividing \eqref{j-estimate} by $c_j$, summing over $j$, and noting that $5\gamma_\kappa(|\lambda|^2)=\min_j\gamma^{(j)}/c_j$. 
\end{proof}

While bound in Theorem~\ref{thm:perp_bound} is sufficient for our purposes, it is not optimal as far as constants are concerned. However,  the lower bound does scale as $ \mathcal{O}(|\lambda|^2) $ in the regime of small $ |\lambda| $, which agrees with the scaling of the edge states singled out in~\eqref{ex:edge_energy}.  Since this scaling is only reflected in the estimate from~\eqref{eq:Case5_OBC}, any such low-energy state $\psi$ must have a nonzero overlap with some configuration $\mu\in\cS_\Lambda^{(5)}$, i.e. $\psi(\mu)\neq 0$. We end this section by using modified tilings to identify the invariant subspaces that contain the configurations from $\cS_\Lambda^{(5)}$. These tilings will only differ from BVMD tilings at the boundary, and so in this sense any eigenstate with energy $\mathcal{O}(|\lambda|^2)$ can be interpreted as an edge state.

 Since every configuration $\mu\in\cS_\Lambda^{(5)}$ only breaks the conditions from Lemma~\ref{lem:tiling_configs} at the boundary of $\Lambda$, it is the configuration associated with a \emph{edge tiling} $T=(T_1,\ldots, T_k)$ of $\Lambda=[a,b]$ where either $T_1 = (n010)$ or $T_k=(10n)$ for some $n\geq 2$, and all other tiles are BVMD-tiles. In addition to the original BVMD tiles and replacement rules $(10)(10)\leftrightarrow(02000)$ and $(10)(1)\leftrightarrow(020)$, the action of the dipole hopping terms $q_x^*q_x$ on these edge tilings generates the following new set of boundary tiles and replacement rules:
\begin{enumerate}
	\item \emph{On the left boundary:} The edge tiles $(n010)$, $((n-1)200)$, and $(n0200)$ for $n\geq 2$ which satisfy the replacement rules
	\be\label{left_replacements}
	(n010)\leftrightarrow((n-1)200), \quad (n010)(10)\leftrightarrow(n00200)
	\ee
	\item \emph{On the right boundary:} The edge tiles $(10n)$, $(02(n-1))$, and $(0200n)$ for $n\geq 2$ which satisfy the replacement rules
	\be\label{right_replacements}
	(10n)\leftrightarrow(02(n-1)), \quad (10)(10n)\leftrightarrow(0200n)
	\ee
\end{enumerate}

With these modified edge tiles, the invariant subspaces that contain states with energy $\mathcal{O}(|\lambda|^2)$ are of the form $\cE_\Lambda(R)=\spa\{\sigma_\Lambda(T) \, | \,  T \leftrightarrow R\}$ where $R=(R_1,\ldots,R_k)$ is any tiling of $\Lambda$ so that
\[
R_1\in\{B_n^l, M, V, (n010) | n\geq 2\}, \;\; R_k \in\{B_n^r, M^{(1)}, M, V, (10n) |  n \geq 2\}, \;\; R_i \in\{M,V\} \;\; 1<i<k , 
\]
and at least one $R_1 = (n010)$ or $R_k=(10n)$. Here, the equivalence relation $T\leftrightarrow R$ is defined using the new replacement rules from \eqref{left_replacements}-\eqref{right_replacements} as well as the original BVMD-replacement rules. As remarked above, the restriction of any of these tilings to the interior $[a+1,b-1]$ produces a BVMD-tiling.

\subsection{Proof of Theorem~\ref{thm:main}}\label{sec:eeproofmain}
We  now spell out the short proof of Theorem~\ref{thm:main}. Due to \eqref{eq:twoway} one only needs to estimate the spectral gap in the tiling space $ E_1(\caC_\Lambda) $ as well as the ground state in the orthogonal complement $E_0(\caC_\Lambda^\perp)$. The latter has been accomplished in Theorem~\ref{thm:perp_bound}. The former is bounded from below using Theorem~\ref{thm:gap_reduction} together with the uniform estimate in Theorem~\ref{thm:tiling_gap}. \qed

\subsection{Variational estimates for excited states}\label{sec:Excited_States}

The bound~\eqref{eq:main2} on the bulk gap avoids edge states, but is nevertheless not sharp even in the regime of small $ |\lambda| $.  
To investigate low-lying exited states in greater detail, we study the Hamiltonian $H_\Lambda$ restricted to other invariant subspaces spanned by tiling states $ | \sigma_\Lambda(T) \rangle $. 
We choose to work with with open boundary conditions. However, the analysis can easily be modified to periodic boundary conditions.

While the set of tilings $T$ is once again generated from a root tiling and a few substitution rules to keep the complexity manageable, the set of tiles is augmented to basic (bulk) tiles plus a few additional tiles. One simple class of states results from adding the new bulk tile $(01)$ to produce, e.g., roots of the form
\be\label{quasiparticle}
 R_{l,r}^{(m)} = (10)_l (01)_m (10)_r ,
\ee
where we place $ l $, respectively, $ r $, basic bulk monomers $(10)$ to the left, respectively, right, of a string of $ m $ tiles $(01)$ on the interval $ \Lambda  = \Lambda_l \cup \{0\} \cup \Lambda_{r}^{(m)}$ with $ \Lambda_l =  [-2l , -1] $ and $ \Lambda_r^{(m)} = [ 1,  2(r+m)-1] $. By allowing the boundary tile $M^{(1)}$ to be placed in the interior of $\Lambda$, \eqref{quasiparticle} can be represented in terms of the BVMD-tiles as $R_{l,r}^{(m)}=(10)_l(0)(10)_{m-1}(1)(10)_r$. The action of the operators $ q_x^* q_x $ with  $ -2l   < x < 2 (r+m) -1$ on this root then generate additional tiles and replacement rules analogous to~\eqref{eq:replacement} that are used to define tilings $T$ which are connected to the root, denoted $R_{l,r}^{(m)}\leftrightarrow T$, see \eqref{excited_rr1}-\eqref{excited_rr2} below. Due to the presence of a void at the interface $x=0$, the invariant subspace corresponding to this root factors, 
\begin{align*}
	& \mathcal{D}_{l,r}^{(m)}  :=  \spa\left\{ | \sigma_\Lambda(T) \rangle \, \big| \,  T \leftrightarrow R_{l,r}^{(m)}\right\} = \mathcal{C}_{\Lambda_l}(M_l^{(2)}) \otimes | 0 \rangle \otimes  \mathcal{D}_{r}^{(m)} \\
	& \mbox{with}\quad  \mathcal{D}_{r}^{(m)}:= \spa\left\{ | \sigma_{\Lambda_r^{(m)}}(T) \rangle \, \big| \, T \leftrightarrow R_{r}^{(m)}  = (M_{m-1}^{(2)}, M^{(1)}, M_r^{(2)})  \right\} ,
\end{align*} 
where $M_n^{(2)}$ covers an interval of length $2n$ with $n$ regular monomers $M$.
For all finite $ m \in \mathbb{N} $ the set of additional tiles generated from $ R_{r}^{(m)} $ by the replacement rules is finite. More specifically:
\begin{enumerate}
	\item	 For  $ m = 1 $, in addition to the usual replacement rule $ (10)(10) \leftrightarrow (0200) $, define the rule
	\be \label{excited_rr1}
	(1)(0200)  \leftrightarrow (02100) .
	\ee
	\item For $ m \geq 2 $, in addition to the usual replacement rules $ (10)(10) \leftrightarrow (0200) $ and $ (10)(1) \leftrightarrow (020)$, define
	\be \label{excited_rr2}
	(020) (10) \leftrightarrow (01200) \qquad (1)(0200)  \leftrightarrow (02100) .
	\ee
\end{enumerate}
\begin{figure}
	\begin{center}
		\includegraphics[scale=.25]{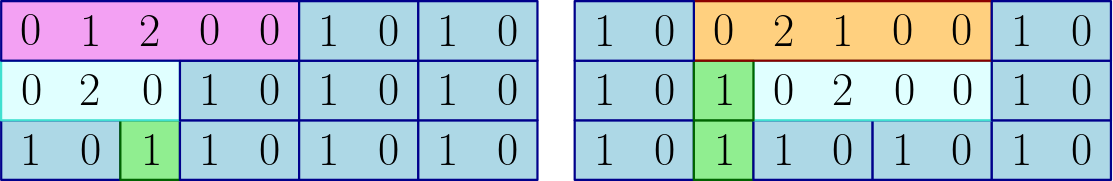}
	\end{center}
\caption{Some of the tilings generated by the replacement rules on $R_3^{(2)}$. Since the hopping terms act on three consecutive sites, the distortions caused by $M^{(1)}$ are localized to an interval of size seven.}
\end{figure}

The same strategy has been applied in~\cite{NWY:2020} to explore excited states of the truncated $ \nu=1/3 $ Haldane pseudopotential. 
Similar to the fermionic case (see also~\cite{wang:2015}), we conjecture that for $ \kappa > 1/2 $ and $ \lambda $ small enough, the first exited (bulk) state of the Hamiltonian $ H_\Lambda $ is found as a ground state $ E_0( \mathcal{D}_{l,r}^{(2)}) $  in the invariant subspace $ \mathcal{D}_{l,r}^{(2)}  $. This state resembles a bound state consisting of a particle-hole pair separated by $ 2(m-1) $ sites. Within this sector, we compute in the physical parameter regime of small $ |\lambda| $ an explicit expression for the minimum up to order $ |\lambda|^2 $. Moreover, we provide a variational state from $\cD_{l,r}^{(2)}$ that agrees with this expansion to $\mathcal{O}(|\lambda|^2)$. This state is defined in terms of the squeeze Tao-Thouless state $\vp_r$ and the excited state $\eta_r$ defined as
	\begin{equation}
		\eta_r  := - \tfrac{\overline{\lambda}}{\sqrt{2}} \beta_{r-1}  | 10\rangle \otimes \varphi_{r-1} + | 0200\rangle \otimes \varphi_{r-2} = \eta_2 \otimes \varphi_{r-2} - \frac{\beta_{r-1} |\lambda|^2}{2} \ | 10\rangle \otimes \eta_{r-1} .
	\end{equation}
Note that $\eta_r$ is the mirror of the excited state $\eta_r^{(2)}$ considered for the martingale method in Section~\ref{sec:MM}. Similar to that case, $\eta_r$ is orthogonal to $\varphi_r $ and its norm satisfies $ \|  \eta_r  \| = \sqrt{\beta_{r-1}} \|   \varphi_r  \| $.

\begin{theorem}[Exited States]
	For any $ l, \, r \geq 3 $ and $ \kappa > 1/2 $, and all sufficiently small $ |\lambda| $: 
	\begin{equation}\label{eq:exstateasym}
		\min_{m\in \mathbb{N} }\ E_0(\mathcal{D}_{l,r}^{(m)} ) =1 - \frac{2\kappa}{2\kappa-1} \left| \lambda \right|^2 + \mathcal{O}( \left| \lambda \right|^4).
	\end{equation}
	Moreover, a variational state whose energy agrees  to $  \mathcal{O}( \left| \lambda \right|^2) $ with the right side is $  \varphi_l \otimes | 0 \rangle \otimes \psi\in \mathcal{D}_{l,r}^{(2)}$ where 
	$$
	\psi =\left(  | 101 \rangle - \frac{\sqrt{2} \kappa \lambda}{2\kappa -1 }   | 020 \rangle \right)  \otimes  \varphi_r - \frac{\sqrt{2} \lambda}{2\kappa -1 }  \, | 101 \rangle  \otimes \eta_r  .
	$$
\end{theorem}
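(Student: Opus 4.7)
The plan is to reduce the problem to a one-sided computation, analyze the resulting ground-state energy by standard perturbation theory, and verify the explicit variational state by a direct Rayleigh-quotient calculation.

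Since the void at site $0$ is preserved by every replacement rule generating $\mathcal{D}_{l,r}^{(m)}$, the subspace factors as $\mathcal{D}_{l,r}^{(m)} = \mathcal{C}_{\Lambda_l}(M_l^{(2)})\otimes\ket{0}_0\otimes\mathcal{D}_r^{(m)}$. On this subspace $n_0 = 0$ identically, so every hopping and electrostatic term touching site $0$ acts trivially, and $H_\Lambda$ splits additively across the three factors. Because $\varphi_l$ is a zero-energy ground state of the left factor, the problem reduces to computing $E_0(\mathcal{D}_r^{(m)})$.

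The subspace $\mathcal{D}_r^{(m)}$ is finite-dimensional and $H_\Lambda|_{\mathcal{D}_r^{(m)}}$ is an analytic family whose spectrum depends only on $|\lambda|^2$ (the phase is gauged away by $a_x\to e^{i\theta x}a_x$), so the ground-state energy admits a convergent power series in $|\lambda|^2$. At $\lambda = 0$ the Hamiltonian is $\sum_x n_x n_{x+1} + \kappa\sum_x n_x(n_x-1)$, diagonal in the configuration basis, and the root $\ket{\sigma(R_r^{(m)})}$ has energy $1$ from the single defect pair $n_{2m-1}n_{2m}$. For $\kappa>1/2$ every substitution creates a doubly occupied site at cost $2\kappa>1$, so the root is the unique minimizer. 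Writing $H - H|_{\lambda=0} = V_1 + V_2$ with $V_1 = -\kappa\lambda\sum_x a_x^{*2}a_{x-1}a_{x+1} + \mathrm{h.c.}$ the purely off-diagonal part and $V_2 = \kappa|\lambda|^2\sum_x n_{x-1}n_{x+1}$ the diagonal piece, Rayleigh--Schr\"odinger perturbation theory represents the $\mathcal{O}(|\lambda|^2)$ correction as a sum of contributions indexed by the next-nearest-neighbor occupied pairs $(y,y+2)$ in the root.

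These pairs split into two classes. For an \emph{interior} pair the substituted state $R^{(y)}$ still carries the defect, so $E_{R^{(y)}}-E_R = 2\kappa$, and the first-order contribution $+\kappa|\lambda|^2$ from $V_2$ is exactly canceled by the second-order contribution $\frac{2\kappa^2|\lambda|^2}{1-(1+2\kappa)} = -\kappa|\lambda|^2$ from $V_1$. For the \emph{defect-adjacent} pairs $(2m-3, 2m-1)$ and $(2m, 2m+2)$ the substitution instead removes the defect, so $E_{R^{(y)}} - E_R = 2\kappa - 1$, and the net $\mathcal{O}(|\lambda|^2)$ contribution from each is $\kappa|\lambda|^2 - \frac{2\kappa^2|\lambda|^2}{2\kappa-1} = -\frac{\kappa|\lambda|^2}{2\kappa-1}$. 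Both defect-adjacent pairs are present exactly when $m\geq 2$ (the right-adjacent one requires $r\geq 2$, which is assumed), giving $E_0(\mathcal{D}_r^{(m)}) = 1 - \frac{2\kappa}{2\kappa-1}|\lambda|^2 + \mathcal{O}(|\lambda|^4)$ for $m\geq 2$ and the strictly larger $1 - \frac{\kappa}{2\kappa-1}|\lambda|^2 + \mathcal{O}(|\lambda|^4)$ for $m=1$. Minimizing over $m$ establishes \eqref{eq:exstateasym}.

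The variational upper bound is obtained by a direct evaluation of $\langle\psi|H|\psi\rangle/\|\psi\|^2$ on the stated trial state, viewed as an element of the orthogonal three-dimensional subspace $\spa\{|101\rangle\otimes\varphi_r, |020\rangle\otimes\varphi_r, |101\rangle\otimes\eta_r\}\subset\mathcal{D}_r^{(2)}$. The computation is tractable because $\varphi_r$ is annihilated by every hopping term $q_x^*q_x$ supported inside $[4, 2r+3]$, so only $H_{[1,3]}$, $n_3 n_4$, and $\kappa q_4^*q_4$ act nontrivially on the first two basis states, while the action on the third follows from the recursion \eqref{recursion_general} and the fact that $\eta_r$ spans the orthogonal complement of $\varphi_r$ in the two-dimensional BVMD subspace $\spa\{|10\rangle\otimes\varphi_{r-1}, |0200\rangle\otimes\varphi_{r-2}\}$. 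Using the formulas \eqref{q_action_1}--\eqref{q_action_2}, the orthogonality $\braket{\varphi_r}{\eta_r} = 0$, and the norm $\|\eta_r\|^2 = \beta_{r-1}\|\varphi_r\|^2$, one computes the effective $3\times 3$ Hamiltonian on this subspace; its lowest eigenvalue to order $|\lambda|^2$ is $1 - \frac{2\kappa}{2\kappa-1}|\lambda|^2$, attained at the variational coefficients stated in the theorem. The main technical obstacle is the bookkeeping of $\mathcal{O}(|\lambda|^2)$ contributions from the boundary hopping $q_4^*q_4$, which couples the two tensor factors and requires repeated application of the recursion relations for $\varphi_r$.
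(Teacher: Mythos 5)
Your perturbative bookkeeping at order $|\lambda|^2$ is correct and matches what the paper obtains: the cancellation between the first-order contribution of $\kappa|\lambda|^2 n_{x-1}n_{x+1}$ and the second-order contribution of the off-diagonal hopping for interior next-nearest-neighbor pairs, versus the surviving $-\kappa|\lambda|^2/(2\kappa-1)$ for each of the two defect-adjacent pairs, is exactly the mechanism behind the coefficient $-2\kappa/(2\kappa-1)$, and your $m=1$ versus $m\geq 2$ distinction agrees with the paper's comparison of its $3\times3$ and $6\times6$ blocks. The factorization across the void at site $0$ and the Rayleigh--Ritz part are also fine in principle (though note that $q_3^*q_3$ does act nontrivially on $|020\rangle\otimes\varphi_r$; it merely contributes only at order $|\lambda|^4$).

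However, there is a genuine gap in your lower bound. The theorem asserts a statement about $\min_{m\in\mathbb{N}}E_0(\mathcal{D}_{l,r}^{(m)})$ valid for all sufficiently small $|\lambda|$, so you need $E_0(\mathcal{D}_{l,r}^{(m)})\geq 1-\tfrac{2\kappa}{2\kappa-1}|\lambda|^2-C|\lambda|^4$ with $C$ and the smallness threshold on $|\lambda|$ \emph{uniform in $m$}. Your argument applies Rayleigh--Schr\"odinger perturbation theory to the full restriction of $H_\Lambda$ to $\mathcal{D}_r^{(m)}$, a space whose dimension (and whose perturbation norm $\|V_1\|+\|V_2\|\sim m|\lambda|$) grows without bound as $m\to\infty$, since $\Lambda_r^{(m)}=[1,2(r+m)-1]$. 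For each fixed $m$ you indeed get a convergent expansion, but the radius of convergence and the $\mathcal{O}(|\lambda|^4)$ remainder are a priori $m$-dependent, so "minimizing over $m$" does not follow. This is precisely why the paper first invokes the operator lower bound $H_\Lambda\geq\sum_{x=2m-3}^{2(m+1)}n_xn_{x+1}+\kappa\sum_{x=2(m-1)}^{2(m+1)}q_x^*q_x$, whose restriction to $\mathcal{D}_{l,r}^{(m)}$ is unitarily equivalent to a fixed $13$-dimensional matrix independent of $m$ and $r$; perturbation theory is then applied only to that fixed matrix. Your cancellation argument would slot in nicely as a cleaner way to read off the $|\lambda|^2$ coefficient of those blocks, but without some such localization step the claim for the infimum over all $m$ is not established.
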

\begin{proof}
	We first discuss the lower bound on the ground state energy of $H_\Lambda $ restricted to the invariant closed subspace $ \mathcal{D}_{l,r}^{(m)}  $. This is estimated by dropping all terms in the Hamiltonian aside from those acting on the interval $[2m-3,2m+3]$ for $ m \geq 2 $, i.e.
	\begin{equation}\label{eq:lowbdexited}
		H_\Lambda \geq \sum_{x=2m-3}^{2(m+1)} n_x n_{x+1} + \kappa  \sum_{x=2(m-1)}^{2(m+1)} q_x^* q_{x}  ,
	\end{equation}
	and dropping all terms except those acting on the interval $[1,5]$ when $ m = 1 $. 
	
	In the case $m=2$, the restriction of $H_{[1,7]}$ to $D_{l,r}^{(2)}$ is unitarily equivalent to a block matrix of size $6+3=9$. More precisely, the $3\times 3$ block corresponds to the restriction onto the cyclic subspace generated from $(10)(1)(10)(02)$, i.e. $ \spa\{ |1011002\rangle ,  |0201002\rangle ,  |0120002\rangle\} $, given by
	$$
	H_{3\times 3} = \left[\begin{matrix} 1+\kappa |\lambda|^2 & - \sqrt{2} \kappa \overline{\lambda} & 0 \\ 
		- \sqrt{2} \kappa \lambda &  2 \kappa (1 + |\lambda|^2) & - 2 \kappa \overline{\lambda} \\
		0 & - 2 \kappa\lambda & 2 (1+\kappa)  \end{matrix}\right] .
	$$
	If $ \kappa > 1/2 $  it follows by second order perturbation theory that to order $ |\lambda|^2 $ the lowest eigenvalue in such a block is
	$$
	\inf \spec(H_{3\times 3}) = 1 - \frac{\kappa}{2\kappa-1} |\lambda|^2 + \mathcal{O}\left( \left| \lambda \right|^4\right) .
	$$ 
	The corresponding $6\times 6$-block results from restricting $H_{[1,7]}$ to the subspace generated by $(10)(1)(10)_2$, i.e. $\spa\{ |1011010\rangle ,  |0201010\rangle ,  |0120010\rangle , |1010200\rangle ,  |0200200\rangle,  |1002100\rangle \}$, and produces:
	$$
	H_{6\times 6} =\left[\begin{matrix} 1+\kappa |\lambda|^2 & - \sqrt{2} \kappa \overline{\lambda} & 0 & - \sqrt{2} \kappa \overline{\lambda}  & 0 & 0  \\ 
		- \sqrt{2} \kappa \lambda &   \kappa (2 + 3 |\lambda|^2) & - 2 \kappa \overline{\lambda}  & 0 & - \sqrt{2} \kappa \overline{\lambda}  & 0 \\
		0 & - 2 \kappa\lambda & 2 (1+\kappa) & 0 & 0 & 0 \\
		- \sqrt{2} \kappa \lambda & 0 & 0  &  \kappa (2 + 3 |\lambda|^2)  &  - \sqrt{2} \kappa \overline{\lambda}  &  - 2 \kappa \overline{\lambda}   \\
		0 & - \sqrt{2} \kappa \lambda & 0 & -  \sqrt{2} \kappa \lambda & 4\kappa & 0 \\
		0 & 0 & 0 & - 2 \kappa \lambda & 0 & 2(\kappa +1) \end{matrix}\right] .
	$$
	Second order perturbation theory then yields 
	\[ \inf \spec( H_{6\times 6}) = 1 - \frac{2\kappa}{2\kappa-1} \left| \lambda \right|^2 + \mathcal{O}( \left| \lambda \right|^4) \] 
	for $ \kappa > 1/2 $, which is clearly smaller than $ \inf \spec(H_{3\times 3}) $.
	
	The analysis for $m>2$ and $m=1$ follows similarly. For $m>2$, the restriction of $H_{[2m-3,2m+3]}$ to $D_{l,r}^{(m)}$ produces a $6+3+3+1=13$ dimensional block matrix. In addition to the two blocks discussed above, there is a $ 1 \times 1 $ block corresponding to restricting to the vector $ |00 11002 \rangle $ with value $ 1 $, and an additional copy $H_{3\times 3}$ corresponding to the invariant subspace $ \spa\{ |0011010\rangle ,  |0010200\rangle ,  |0002100\rangle\} $.
	
	For $ m = 1 $ the restriction of $H_{[1,5]}$ to  $ \mathcal{D}_{l,r}^{(1)}  $ is unitarily equivalent to $ H_{3\times 3} $. This finishes the proof of the fact that the right side in \eqref{eq:exstateasym} is a lower bound on the ground state energy of $ H_\Lambda $ restricted to $ \mathcal{D}_{l,r}^{(m)}  $  for all $ m \in \mathbb{N} $.
	
	For an upper bound we use the Rayleigh-Ritz principle to write
	$$E_0(\mathcal{D}_{l,r}^{(2)})   \leq \langle \varphi_l \otimes \langle 0 | \otimes \psi , H_\Lambda \varphi_l \otimes | 0 \rangle \otimes \psi \rangle / (\| \varphi_l \|^2  \| \psi\|^2) = \langle \psi , H_{\Lambda_r^{(2)}}  \psi \rangle / \| \psi \|^2 $$ 
	and explicitly compute the expectation value.
\end{proof} 

Similar to \cite{NWY:2020} (see also~\cite{MBR:2020}), one can find plenty of many-body scars higher up in the spectrum of $ H_\Lambda $ or $ H_\Lambda^\textrm{per} $. To produce such an eigenstate with low complexity, one places a sufficient number of voids around a local excitation to shield it on either side from a Tao-Thouless ground-state~\eqref{TT}. The particular form of the hopping term then prevents these excitations from move across the void-boundary. For example, 
\be \label{scarstate}
\psi = \vp_l \otimes \ket{01100}   \otimes \vp_r^{(1)}, 
\ee
is an eigenstate of $ H_\Lambda $ with energy $ 1 $, filling fraction $ \nu=1/2$, and Schmidt rank $ 2$.  Inserting the string $ (0110)_n(0) $ produces an eigenstate with energy $ n $. Up to potentially increasing the length of the void-string, in the same manner one can produce other eigenfunctions with local excitations arising from locally confined configurations generated by a finite number of substitution rules. For example, the cyclic subspace of the configuration $ (003000) $ is two-dimensional and consists of the additional configuration $ (011100) $. The eigenvectors corresponding to this invariant subspace constitute local excitations with strictly positive energies which can be similarly inserted in \eqref{scarstate} to create excited states of the larger system.

\minisec{Acknowledgements }

{\small This work was supported by the DFG under EXC-2111--390814868.}

\bibliographystyle{abbrv}  
\bibliography{bosonic} 

\bigskip
\bigskip
\bigskip
\bigskip

\noindent Simone Warzel\\
Munich Center for Quantum Science and Technology, and\\
Zentrum Mathematik  \& Department of Physics, TU M\"{u}nchen\\
85747 Garching, Germany\\
\verb+warzel@ma.tum.de+\\

\noindent Amanda Young\\
Munich Center for Quantum Science and Technology, and\\
Zentrum Mathematik, TU M\"{u}nchen\\
85747 Garching, Germany\\
\verb+young@ma.tum.de+\\

\end{document}